\newcommand{\yes}{\textsl{yes}}
\newcommand{\no}{\textsl{no}}
\newcommand{\suchthat}{\colon}
\newcommand{\poe}{{\mathsf{PoE}}}
\newcommand{\ptime}{{\mathsf{P}}}
\newcommand{\np}{{\mathsf{NP}}} 
\newcommand{\npo}{\mathsf{NPO}}
\newcommand{\fpt}{{\mathsf{FPT}}}   
\newcommand{\xp}{{\mathsf{XP}}}
\newcommand{\wone}{\mathsf{W[1]}}
\newcommand{\Oh}{\mathcal{O}}
\newcommand{\MCI}{\textsc{Multicolored Independent Set}\xspace}
\newcommand{\SV}[1]{}
\newcommand{\LV}[1]{#1}
\newtheorem{proposition}{\bf Proposition}
\newtheorem{theorem}{\bf Theorem}
\newtheorem{example}{\bf Example}
\newtheorem{lemma}{\bf Lemma}
\newtheorem{corollary}{\bf Corollary}
\newtheorem{remark}{\bf Remark}
\newcommand{\longbib}[1]{}
\tikzstyle{vertex}=[circle, draw, inner sep=1.2pt, minimum width=4pt, minimum size=0.4cm]
\tikzstyle{vertex2}=[circle, draw, inner sep=0pt, minimum width=4pt, minimum size=0.15cm]
\title{Extension of vertex cover and independent set in some classes of graphs and generalizations}
\author[1,2]{Katrin Casel}
\affil[1]{Hasso Plattner Institute, University of Potsdam,\newline  Potsdam, Germany}
\author[2]{Henning Fernau}
\affil[2]{Universit\"at Trier, Fachbereich 4,
Informatikwissenschaften, \newline
\hspace*{2cm}  Trier, Germany \newline
{\small\{casel,fernau\}@informatik.uni-trier.de}}
\author[3]{Mehdi Khosravian Ghadikolaei}
\affil[3]{Universit\'e Paris-Dauphine, PSL University,\newline CNRS, LAMSADE, 75016 Paris, France \newline
{\small{\{mehdi.khosravian-ghadikolaei,jerome.monnot, florian.sikora\}}@lamsade.dauphine.fr}}
\author[3]{J\'er\^ome Monnot}
\author[3]{Florian Sikora}
\newlength{\btw}
\newlength{\stw}
\newsavebox\tmpbox 
\date{}
\tikzstyle{vertex}=[circle, draw, inner sep=2pt, minimum width=4pt, minimum size=0.3cm]
\begin{document}
\maketitle

\begin{abstract}
We \LV{consider}\SV{study} extension variants of the classical\LV{ graph}  problems \textsc{Vertex\linebreak[2] Cover} and \textsc{Independent Set}.
Given a graph $G=(V,E)$ and a vertex set $U \subseteq V$, it is asked if there exists a \textit{minimal} vertex cover (resp.\ \textit{maximal} independent set) $S$ with $U\subseteq S$ (resp.\ $U \supseteq S$).  
Possibly contradicting intuition, these problems tend to be $\np$-hard, even in graph classes  where the classical problem can be solved \LV{in polynomial time}\SV{efficiently}. 
Yet, we exhibit some graph classes where the extension variant remains polynomial-time solvable.
We also study the parameterized complexity of theses problems, with parameter $|U|$, as well as the optimality of simple exact algorithms under \LV{the Exponential-Time Hypothesis}\SV{ETH}. 
All these complexity considerations are also carried out in very restricted scenarios, be it degree or topological restrictions (bipartite, planar or chordal graphs). This also motivates presenting some explicit branching algorithms for degree-bounded instances.

We further discuss the \textit{price of extension}, measuring the distance of $U$ to the closest set that can be extended, which results in natural optimization problems related to extension problems for which we discuss polynomial-time approximability.

\end{abstract}

\section{Introduction}

We will consider \textit{extension problems} related to the classical graph problems  \textsc{Vertex Cover} and \textsc{Independent Set}.
Informally in the extension version of  \textsc{Vertex Cover},  the input consists of 
both a graph~$G$ and a subset~$U$ of vertices, and the task is to extend~$U$ to an inclusion-wise minimal vertex cover  of~$G$  (if possible). With \textsc{Independent Set}, given a graph $G$  and a subset $U$ of vertices, we are looking for an inclusion-wise maximal independent set of $G$ contained in $U$.

Studying such version is interesting when one wants to develop efficient enumeration algorithms or also for branching algorithms, to name two examples of a list of applications given in~\cite{CasFKMS2018}.

\paragraph{Related work}

In~\cite{doi:10.1080/10556789808805708}, it is shown that extension of partial solutions is $\np$-hard for computing prime implicants of the dual of a Boolean function; a problem which can also be seen as trying to find a minimal hitting set for the prime implicants of the input function. Interpreted in this way, the proof from~\cite{doi:10.1080/10556789808805708} yields  $\np$-hardness for the minimal extension problem for {\sc 3-Hitting Set}. This result was extended in \cite{BazBCFJKLLMP2018} to prove  $\np$-hardness for the extension of minimal dominating sets (\textsc{Ext DS}), even restricted to planar cubic graphs.Similarly, it was shown in \cite{BazganBCF16}
that extension for minimum vertex cover restricted to planar cubic graphs is $\np$-hard.
The first \emph{systematic} study of this type of problems was exhibited in~\cite{CasFKMS2018} providing quite a number of different examples of this type of problem.

An \textit{independent system} is a set system $(V,\mathcal{E})$, $\mathcal{E}\subseteq 2^{V}$, that is hereditary under inclusion.
The extension problem \textsc{Ext Ind Sys} (also called \textsc{Flashlight}) for independent system was proposed in~\cite{LawLenKan80}. 
In this problem, given as input $X,Y\subseteq V$, one asks for the existence of a maximal independent set including $X$ and that does not intersect with $Y$. 
Lawler et al.\ proved that \textsc{Ext Ind Sys} is $\np$-complete, even when $Y=\emptyset$~\cite{LawLenKan80}. 
In order to enumerate all (inclusionwise) minimal dominating sets of a given graph, Kant\'e et al.\ studied a restriction of \textsc{Ext Ind Sys}: finding a minimal dominating set containing $X$ but excluding~$Y$. 
They proved that \textsc{Ext DS} is $\np$-complete, even in special graph classes like split graphs, chordal graphs and line graphs \cite{kante2015polynomial,kante2015polynomia}. 
Moreover, they proposed a linear algorithm for split graphs when $X,Y$ is a partition of the clique part~\cite{kante2014enumeration}.

\paragraph{Organization of the paper}

After some definitions and first results in Section~\ref{sec:def}, we  focus on bipartite graphs in Section~\ref{sec:bip} and give hardness results holding with strong degree or planarity constraints.
We also study parameterized complexity at the end of this section and comment on lower bound results based on ETH. 
In Section~\ref{sec:chordal}, we give positive algorithmic results on chordal graphs, with  a  combinatorial characterization for the subclass of trees.
We introduce the novel concept of \emph{price of extension} in Section~\ref{sec:PoE} and discuss  (non-)approximability for the according optimization problems. 
In Section~\ref{sec:hfree}, we generalize our results to $H$-free graphs for some fixed $H$. 
In Section~\ref{sec:bounded degree}, we prove several algorithmic results for bounded-degree graphs, based on a list of reduction rules and simple branching. 
Finally, in Section~\ref{sec:conclusions}, we give some prospects of future research.

\section{Definitions and preliminary results}\label{sec:def}

Throughout this paper, we consider simple undirected graphs only, to which we refer as \emph{graphs} henceforth. A graph can be specified by the set $V$ of vertices and the set $E$ of edges; every edge has two endpoints, and if $v$ is an endpoint of~$e$, we also say that $e$ and $v$ are incident. Let $G=(V,E)$ be a graph and $U\subseteq V$; $N_G(U)=\{v\in V\colon vu\in E\}$ denotes {\em the neighborhood} of $U$ in $G$ and $N_G[U]=U\cup N_G(U)$ denotes {\em the closed neighborhood} of $U$.
For singleton sets $U=\{u\}$, we simply write $N_G(u)$ or $N_G[u]$, even omitting $G$ if clear from the context. The cardinality of $N_G(u)$ is called {\em degree} of $u$, denoted $d_G(u)$. A graph where all vertices have  degree $k$ is called {\em $k$-regular}; 3-regular graphs are called {\em cubic}. If three upper-bounds the degree of all vertices we speak of {\em subcubic graphs}.\par
A vertex set $U$ induces the graph $G[U]$ 
with vertex set $U$ and $e\in E$ being an edge in $G[U]$ iff  both endpoints of $e$ are in $U$.
A vertex set $U$ is called \emph{independent} if $U\cap N_G(U)=\emptyset$; $U$ is called \emph{dominating} if $N_G[U]=V$; $U$ is  a \emph{vertex cover} if each edge $e$ is incident to at least one vertex from $U$.
A graph is called \emph{bipartite} if its vertex set decomposes into two independent sets. 
A vertex cover~$S$ is {\em minimal} if any proper subset $S^\prime \subset S$ of $S$ is not a vertex cover.
Clearly, a vertex cover $S$ is minimal iff each vertex $v$ in $S$ possesses a \emph{private edge}, i.e.,
an edge $vu$ with $u\notin S$. 
An independent set $S$ is {\em maximal} if any proper superset $S^\prime \supset S$ of $S$ is not an independent set. The two main problems discussed in this paper are:

\begin{center}
\fbox{\begin{minipage}{.95\textwidth}
\noindent{\textsc{Ext VC}}\\\nopagebreak
{\bf Input:} A graph $G=(V,E)$, a set of vertices $U \subseteq V$. \\\nopagebreak
{\bf Question:}  Does $G$ have a minimal vertex cover $S$ with $U\subseteq S$?
\end{minipage}}
\end{center}

\begin{center}
\fbox{\begin{minipage}{.95\textwidth}
\noindent{\textsc{Ext IS}}\\\nopagebreak
{\bf Input:} A graph $G=(V,E)$, a set of vertices $U \subseteq V$. \\\nopagebreak
{\bf Question:}  Does $G$ have a maximal independent set $S$ with $S\subseteq U$?
\end{minipage}}
\end{center}

\noindent
For \textsc{Ext VC}, the set $U$ is also referred to as the set of \emph{required} vertices. As complements of maximal independent sets are minimal vertex covers we conclude:
\begin{remark}\label{rem-Ext VC=IS}
 $(G,U)$ is a \yes-instance of \textsc{Ext VC} iff $(G,V\setminus U)$ is a \yes-instance of \textsc{Ext IS}, as complements of maximal independent sets are minimal vertex covers. 
\end{remark}


 Since adding or deleting edges between vertices of $U$ does not change the minimality of feasible solutions of  \textsc{Ext VC}, 
 we can first state the following.

\begin{remark}\label{obsExt_vc}
For \textsc{Ext VC} (and for \textsc{Ext IS}) one can always assume the required vertex set (the set $V\setminus U$) is either a clique or an independent set.
\end{remark}

The following theorem gives a combinatorial characterization of \yes-instances of \textsc{Ext VC} that is quite important in our subsequent discussions. \SV{Its proof is moved to the appendix. This is notified here (and similarly below) by adding~$(*)$.}

\begin{theorem}\label{caract_Ext_VC} \SV{$(*)$}
Let $G=(V,E)$ be a graph and  $U \subseteq V$ be a set of vertices. The three following conditions are equivalent:

\begin{description}

\item[$(i)$] $(G,U)$ is a \yes-instance  of \textsc{Ext VC}.

\item[$(ii)$] $(G[{N_G[U]}],N_G[U]\setminus U)$ is a \yes-instance  of \textsc{Ext IS}.
\item[$(iii)$] There exists an independent dominating set $S^\prime\subseteq N_G[U]\setminus U$ of $G[{N_G[U]}]$.
\end{description}
\end{theorem}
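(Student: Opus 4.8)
The plan is to prove the equivalences by a cycle of implications, or more conveniently by establishing $(i)\Leftrightarrow(ii)$ and $(ii)\Leftrightarrow(iii)$ separately, since the latter is essentially the standard fact that maximal independent sets are exactly the independent dominating sets, restricted here to the graph $G[N_G[U]]$ and the ``allowed'' vertex set $N_G[U]\setminus U$. For $(ii)\Leftrightarrow(iii)$: if $S'\subseteq N_G[U]\setminus U$ is a maximal independent set of $G[N_G[U]]$, then maximality forces every vertex of $N_G[U]$ to be dominated by $S'$, so $S'$ is an independent dominating set; conversely, an independent dominating set of $G[N_G[U]]$ contained in $N_G[U]\setminus U$ is automatically a maximal independent set, because no vertex can be added without destroying independence. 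So this direction is routine.

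The heart of the matter is $(i)\Leftrightarrow(ii)$. For $(i)\Rightarrow(ii)$: suppose $S\supseteq U$ is a minimal vertex cover of $G$. By the private-edge characterization stated in the excerpt, every $v\in U\subseteq S$ has a private edge $vu$ with $u\notin S$; since $u\in N_G(v)\subseteq N_G[U]$, we have $u\in N_G[U]\setminus U$ (it is not in $U$ because $U\subseteq S$ and $u\notin S$). I would then set $S' := N_G[U]\setminus S$ and claim $S'$ is a maximal independent set of $G[N_G[U]]$ contained in $N_G[U]\setminus U$. Containment is clear since $U\subseteq S$. Independence: $S'\subseteq V\setminus S$, and $V\setminus S$ is independent in $G$ (complement of a vertex cover), hence $S'$ is independent in $G[N_G[U]]$. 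Maximality: the complement of $S'$ within $N_G[U]$ is $N_G[U]\cap S$; I must check each vertex $w$ of $N_G[U]\cap S$ has a neighbor in $S'$ inside $G[N_G[U]]$. If $w\in U$, use its private edge $wu$: $u\notin S$ and $u\in N_G[U]$, so $u\in S'$. If $w\in N_G[U]\setminus U$ but $w\in S$, then $w$ has a neighbor in $U$ (by definition of $N_G[U]$)... wait, that neighbor is in $S$; instead I use that $w\in S$ has a private edge $wu'$ with $u'\notin S$, and $u'\in N_G(w)\subseteq N_G[N_G[U]]$ which need not lie in $N_G[U]$. Here is the genuine subtlety: the private edge of such a $w$ might leave $N_G[U]$, so this naive argument breaks. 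The fix is to not take $S'=N_G[U]\setminus S$ but rather to argue more carefully, or to restrict attention using Remark~\ref{obsExt_vc} (assume $V\setminus U$ is independent, so that $N_G[U]\setminus U$ is independent and the only vertices of $S$ inside $N_G[U]$ are exactly those needed to cover edges within $N_G[U]$); alternatively observe that we are free to replace $S$ by $S\cap N_G[U]$ on the subgraph. I expect this to be the main obstacle and would resolve it by working with the induced instance directly: a minimal vertex cover of $G$ containing $U$ restricts to a vertex cover of $G[N_G[U]]$ containing $U$ in which every vertex of $U$ still has a private edge inside $N_G[U]$ (its private neighbor lies in $N_G[U]$), and then trimming this restricted cover down to a minimal one keeps $U$ inside it while only removing vertices outside $U$.

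For $(ii)\Rightarrow(i)$ (equivalently $(iii)\Rightarrow(i)$): given an independent dominating set $S'\subseteq N_G[U]\setminus U$ of $G[N_G[U]]$, set $S := V\setminus S'$. Then $S\supseteq U$ since $S'\cap U=\emptyset$, and $S$ is a vertex cover of $G$ because $S'$ is independent in $G$ (any edge with both endpoints in $S'$ would lie in $G[N_G[U]]$ and contradict independence there; note $S'\subseteq N_G[U]$). For minimality I must give every $v\in S$ a private edge. If $v\in N_G[U]$: since $S'$ dominates $G[N_G[U]]$ and $v\notin S'$, $v$ has a neighbor $u\in S'$, and $vu$ is a private edge (as $u\notin S$). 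If $v\notin N_G[U]$: then all neighbors of $v$ lie outside $S'$ as well — indeed $N_G(v)\cap N_G[U]=\emptyset$ would make $v$ non-adjacent to anything in $N_G[U]\supseteq S'$ — so $v$ has no private edge, which is a problem. Again Remark~\ref{obsExt_vc} or a direct cleanup resolves it: we only need a minimal vertex cover, so after forming $S=V\setminus S'$ we greedily remove from $S$ any vertex without a private edge; such a vertex lies outside $N_G[U]$ hence outside $U$, so $U$ stays inside, and the process terminates at a minimal vertex cover still containing $U$. Assembling these pieces gives the three-way equivalence; the only care needed throughout is the bookkeeping of which private neighbors stay inside $N_G[U]$, which is exactly where I expect to spend the real effort.
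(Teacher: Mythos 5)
Your proposal is correct and follows essentially the same route as the paper's proof: both rest on the complementation between minimal vertex covers and maximal independent sets, the fact that a private neighbour of a vertex of $U$ necessarily lies in $N_G[U]\setminus U$, and a greedy completion step. The only (cosmetic) difference is that you greedily trim vertex covers to minimal ones where the paper greedily extends independent sets to maximal ones --- the same operation under complementation --- and the two subtleties you isolate and resolve (private edges of cover vertices outside $U$ possibly leaving $N_G[U]$, and vertices outside $N_G[U]$ lacking private edges) are exactly the points the paper's argument also handles.
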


\newcommand{\proofofcaractExtVC}{\begin{proof} In the following arguments, let $G=(V,E)$ be a graph. 
Let us first look at conditions $(ii)$ and $(iii)$.
By our previous discussions,  
condition $(ii)$ is equivalent to: $(G[N_G[U]],U)$  is a \yes-instance  of \textsc{Ext VC}.
Assume there is a minimal vertex cover $S$ of $G[N_G[U]]$ with $U\subseteq S$. Hence, in particular we deduce $N_G[v]\nsubseteq U$ for every $v\in U$ by minimality of $S$. Condition $(ii)$ therefore entails
the  existence of an independent set $S'$ of $G[{N_G[U]}]$ with $S'\subseteq (N_G[U]\setminus U)$ and $U\subseteq N_G[S']$. Hence, condition $(ii)$ implies condition $(iii)$.
Conversely, let $S^\prime\subseteq N_G[U]\setminus U$ be  an independent dominating set of $G[{N_G[U]}]$.
Clearly, $S=N_G[U]\setminus S'$ is a vertex cover of $G[{N_G[U]}]$. If $S$ were not minimal, then there would be a vertex $v\in S$ with $N_{G[N_G[U]]}(v)\subseteq S$, as then $v$ would not possess a private edge. But then $v$ would not be dominated by any vertex from $S'$, violating the assumption that $S'$ is a dominating set of $G[N_G[U]]$.
Hence, conditions $(ii)$ and $(iii)$ are equivalent. 

Now, we will prove the equivalence between items $(i)$ and $(iii)$.
Let $S$ be a minimal vertex cover of $G$ with $U\subseteq S$. Clearly, $S\cap  N_G[U]$ is a vertex cover of $G[N_G[U]]$, but notice that it need not be minimal, as private edges of $v\in S\cap  N_G[U]$ need not lie in the graph induced by $N_G[U]$. 
The set $S^\prime=(V\setminus S)\cap N_G[U]\subseteq N_G[U]\setminus U$ is an independent set (as the complement of $S\cap  N_G[U]$ within $G[N_G[U]]$) which dominates all the vertices in $U$. Namely, consider any $u\in U$ and assume that $u\notin N_G[S']$. Then, $N_G[u]\subseteq S$, contradicting minimality of $S$.
We turn $S^\prime$ into a maximal independent set of the induced graph $G[{N_G[U]\setminus U}]$, by adding some vertices from ${N_G[U]\setminus U}$ to $S'$. Observe that the resulting set $S''$ is also a maximal independent set in $G[{N_G[U]}]$ and hence
%
satisfies condition $(iii)$, because each $u\in U$ has a private edge (as being part of the minimal vertex cover $S$ of $G$, connecting $u$ to some  $v\in S^\prime$.
Conversely, assume the existence of an independent dominating set $R$ of $G$ satisfying $(iii)$. Hence, $R$ is an independent set with $R\subseteq (N_G[U]\setminus U)$ and $U\subseteq N_G[R]$. 
Let $X$ be any maximal independents set of $G[V\setminus N_G[R]]$, for instance, produced by some greedy procedure. Let  $S^\prime:=R\cup X$. By construction, $S'$ is an independent set in $G$. If $S'$ were not maximal, then we would find some $x\in S'$ with $N_G(x)\cap S'= \emptyset$. Clearly, $x\notin N_G[R]$.  But as $x$ has no neighbors in $X$, it could have been added to $X$ by the mentioned greedy procedure. In conclusion, $S'$ is a maximal independent set.  
Hence, $S=V\setminus S^\prime$ satisfies the condition $(i)$.
\end{proof}}

\LV{\proofofcaractExtVC}


\section{Bipartite graphs}\label{sec:bip}

In this section, we focus on bipartite graphs.
We prove that  \textsc{Ext VC} is  
$\np$-complete, even if restricted to cubic, or planar subcubic graphs. 
Due to Remark~\ref{rem-Ext VC=IS}, this immediately yields the same type of results for  \textsc{Ext IS}.
We add some algorithmic notes on planar graphs that are also valid for the non-bipartite case. Also, we discuss results based on ETH. 
We conclude the section by studying the parameterized complexity of \textsc{Ext VC} in bipartite graphs when parameterized by the size of~$U$.

\begin{theorem}\label{Bip_Ext_VC}
\textsc{Ext VC} (and \textsc{Ext IS}) is $\np$-complete in cubic bipartite graphs.
\end{theorem}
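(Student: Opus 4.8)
The plan is to prove membership and hardness separately. Membership in $\np$ is immediate: given a candidate set $S$ with $U\subseteq S$, one checks in polynomial time that $S$ is a vertex cover and that every $v\in S$ has a private edge (a neighbor outside $S$), so \textsc{Ext VC} is in $\np$. For hardness, I would reduce from a classical $\np$-complete problem on cubic (or subcubic) graphs, the natural candidates being \textsc{3-SAT}-type problems or, more promisingly here, a covering/domination problem. Given Theorem~\ref{caract_Ext_VC}, a clean route is to build the instance so that the required structure forces an independent dominating set inside a prescribed vertex set; this suggests reducing from \textsc{Independent Dominating Set} (equivalently \textsc{Minimum Maximal Independent Set} as a decision/existence question) or from \textsc{SAT}, while carefully engineering gadgets that are simultaneously cubic and bipartite.

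The concrete construction I would attempt: take an instance $\varphi$ of (a restricted) \textsc{3-SAT}, and for each variable $x_i$ create a small bipartite gadget with two ``literal'' vertices $x_i,\bar x_i$ that can be placed in $U$, wired so that any minimal vertex cover extending $U$ must exclude exactly one of the two corresponding private-edge endpoints, thereby encoding a truth assignment. For each clause $c_j$, create a clause gadget — a vertex (or short path) adjacent to the three literal-endpoints occurring in $c_j$ — arranged so that the clause vertex gets a private edge (equivalently, is dominated correctly in the sense of condition $(iii)$) if and only if at least one literal is set true. To keep the graph cubic I would pad low-degree vertices with pendant paths of even length (which preserves bipartiteness) and use consistency chains to propagate a variable's value to all its occurrences without raising degrees above three; the standard trick is a cycle or path of ``copy'' gadgets through which the chosen literal is threaded. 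Bipartiteness is maintained by checking that every gadget and every connecting path has only even cycles, adding a subdivision vertex on any offending edge.

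The main obstacle is the simultaneous degree-three and bipartiteness requirement: forcing a binary choice per variable and an OR per clause while every vertex has degree exactly three and all cycles are even is delicate, because the obvious ``selector'' gadgets tend to be either of degree $\ge 4$ or to contain odd cycles. I expect the heart of the argument to be designing one variable gadget and one clause gadget that respect both constraints and proving the key equivalence ``$\varphi$ is satisfiable $\iff$ $(G,U)$ admits a minimal vertex cover containing $U$'', ideally via Theorem~\ref{caract_Ext_VC}$(iii)$ so that the correctness proof reduces to exhibiting/extracting an independent dominating set of $G[N_G[U]]$ inside $N_G[U]\setminus U$. Once the gadgets are fixed, the forward and backward directions are routine case checks: a satisfying assignment selects, in each gadget, the literal-endpoint to drop from the cover, and the clause gadgets then all acquire private edges; conversely, a minimal vertex cover extending $U$ reads off a consistent assignment from which variable-endpoint is excluded, and the clause-gadget private edges certify that each clause is satisfied. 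Finally, I would invoke Remark~\ref{rem-Ext VC=IS} to transfer the result to \textsc{Ext IS}.
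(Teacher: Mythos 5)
Your plan correctly identifies the overall strategy (membership in $\np$ via private edges, hardness via a reduction from a degree-restricted \textsc{SAT} variant, transfer to \textsc{Ext IS} via Remark~\ref{rem-Ext VC=IS}), but it stops exactly where the proof actually lives. You yourself name the ``heart of the argument'' --- a variable gadget and a clause gadget that are simultaneously $3$-regular and bipartite, together with the equivalence ``$\varphi$ satisfiable $\iff$ $(G,U)$ extendible'' --- and then leave all of it unconstructed and unproved. Worse, the one concrete device you do propose for repairing degrees cannot work for this theorem: padding a low-degree vertex with a pendant path leaves the free end of that path with degree one, and subdividing an edge creates a degree-two vertex, so neither operation can ever yield a \emph{cubic} (i.e., $3$-regular) graph. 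Degree padding by pendant attachments is the standard trick for ``maximum degree three,'' not for $3$-regularity, and the distinction is precisely what makes this statement delicate.

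For comparison, the paper resolves these issues as follows. It reduces from \textsc{$(3,B2)$-SAT}, where every clause has exactly three literals (so clause vertices already have degree exactly three) and every variable occurs exactly twice positively and twice negatively (so splitting $x_i$ into two literal vertices $x_i$ and $\neg x_i$ gives each degree two from clause edges). Each variable is then threaded through a path $x_i\mbox{--}l_i\mbox{--}m_i\mbox{--}r_i\mbox{--}\neg x_i$, and --- this is the step your plan is missing --- the \emph{entire} construction is duplicated and the two copies are joined by a perfect matching $l_il'_i$, $m_im'_i$, $r_ir'_i$, which raises every remaining degree to exactly three while keeping the graph bipartite. Putting the clause vertices and the $m_i,m'_i$ into $U$ then forces, by minimality at $m_i$ and $m'_i$, that exactly one of $\{l_i,r'_i\}$ or $\{l'_i,r_i\}$ lies in the cover, which in turn forces $\{\neg x_i,\neg x'_i\}$ or $\{x_i,x'_i\}$ into the cover and encodes a consistent truth assignment; the clause vertices in $U$ need private edges, which certifies satisfaction of every clause. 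Without an explicit construction of this kind (or an equivalent one), your argument has a genuine gap and does not establish the theorem.
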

\begin{proof}
We reduce from {\sc 2-balanced 3-SAT}, denoted \textsc{$(3,B2)$-SAT}\SV{, which is  $\np$-hard by~\cite[Theorem~1]{ECCC-TR03-049}}, where an instance $I$
is given by a set 
$C$ of CNF clauses over a set
$X$ 
of Boolean variables such that each clause
has exactly $3$ literals and  each variable appears exactly $4$ times,
twice negative and twice positive.
The bipartite graph associated to 
$I$ 
is\LV{ the graph} $BP=(C\cup X,E(BP))$ with $C=\{c_1,\dots,c_m\}$, $X=\{x_1,\dots,x_n\}$ and $E(BP)=\{c_jx_i\colon x_i$ or $\neg  x_i$ is literal of $c_j\}$.
\LV{Deciding whether an instance of \textsc{$(3,B2)$-SAT} is satisfiable 
is $\np$-complete by~\cite[Theorem~1]{ECCC-TR03-049}.}\par
For an instance $I=(C,X)$ of \textsc{$(3,B2)$-SAT}, 
we build a cubic bipartite graph $G=(V,E)$ by duplicating instance $I$  (here, vertices $C'=\{c'_1,\dots,c'_m\}$ and $X'=\{x'_1,\dots,x'_n\}$ are the duplicate variants of vertices $C=\{c_1,\dots,c_m\}$ and $X=\{x_1,\dots,x_n\}$) and by connecting gadgets as done in Figure~\ref{Fig:ExtVC_CUB_BIP}.
We also add the following edges between the two copies: $l_il'_i$,  $m_im'_i$ and $r_ir'_i$ for $i=1,\dots,n$. The construction is illustrated in  Figure~\ref{Fig:ExtVC_CUB_BIP} and clearly, $G$ is a cubic bipartite graph. Finally we set $U=\{c_i,c'_i\colon i=1,\dots,m\}\cup \{m_j,m'_j\colon j=1,\dots,n\}$.

\begin{figure}[t]
\centering
\begin{tikzpicture}[scale=0.5, transform shape]
\tikzstyle{vertex}=[circle, draw, inner sep=0pt, inner sep=0pt, minimum size=0.8cm]

\node[vertex,ultra thick] (c1) at (0,0) {$c_1$};
\node[vertex, below of=c1, node distance=1.6cm,ultra thick] (c2) {$c_2$};
\node[vertex, below of=c2, node distance=1.6cm,ultra thick] (c3) {$c_3$};
\node[below of=c3,node distance=1cm] (cdots) {$\vdots$};
\node[vertex,below of=cdots,ultra thick] (cm) {$c_m$};

\node[vertex] (x1) at (3,0.5) {$x_1$};
\node[vertex, below of=x1, node distance=1.2 cm] (nx1p) {$\neg x_1^{\prime}$};
\node[vertex, below of=nx1p, node distance=1 cm] (x2) {$x_2$};
\node[vertex, below of=x2, node distance=1.2 cm] (nx2p) {$\neg x_2^{\prime}$};
\node[below of=nx2p] (xdots) {$\vdots$};
\node[vertex, below of=xdots] (xn) {$x_n$};
\node[vertex, below of=xn, node distance=1.2 cm] (nxnp) {$\neg x_n^{\prime}$};

\node[vertex, right of=x1, node distance=2cm] (l1) {$l_1$};
\node[vertex, below of=l1, node distance=1.2 cm] (l1p) {$l_1^{\prime}$};
\node[vertex, below of=l1p, node distance=1 cm] (l2) {$l_2$};
\node[vertex, below of=l2, node distance=1.2 cm] (l2p) {$l_2^{\prime}$};
\node[below of=l2p] (ldots) {$\vdots$};
\node[vertex, below of=ldots] (ln) {$l_n$};
\node[vertex, below of=ln, node distance=1.2 cm] (lnp) {$l_n^{\prime}$};

\node[vertex, right of=l1, node distance=2cm,ultra thick] (m1) {$m_1$};
\node[vertex, below of=m1, node distance=1.2 cm,ultra thick] (m1p) {$m_1^{\prime}$};
\node[vertex, below of=m1p, node distance=1 cm,ultra thick] (m2) {$m_2$};
\node[vertex, below of=m2, node distance=1.2 cm,ultra thick] (m2p) {$m_2^{\prime}$};
\node[below of=m2p] (mdots) {$\vdots$};
\node[vertex, below of=mdots,ultra thick] (mn) {$m_n$};
\node[vertex, below of=mn, node distance=1.2 cm,ultra thick] (mnp) {$m_n^{\prime}$};

\node[vertex, right of=m1, node distance=2cm] (r1) {$r_1$};
\node[vertex, below of=r1, node distance=1.2 cm] (r1p) {$r_1^{\prime}$};
\node[vertex, below of=r1p, node distance=1 cm] (r2) {$r_2$};
\node[vertex, below of=r2, node distance=1.2 cm] (r2p) {$r_2^{\prime}$};
\node[below of=r2p] (rdots) {$\vdots$};
\node[vertex, below of=rdots] (rn) {$r_n$};
\node[vertex, below of=rn, node distance=1.2 cm] (rnp) {$r_n^{\prime}$};

\node[vertex, right of=r1, node distance=2cm] (nx1) {$\neg x_1$};
\node[vertex, below of=nx1, node distance=1.2 cm] (x1p) {$x_1^{\prime}$};
\node[vertex, below of=x1p, node distance=1 cm] (nx2) {$\neg x_2$};
\node[vertex, below of=nx2, node distance=1.2 cm] (x2p) {$x_2^{\prime}$};

\node[below of=x2p] (nxdots) {$\vdots$};
\node[vertex, below of=nxdots] (nxn) {$\neg x_n$};
\node[vertex, below of=nxn, node distance=1.2 cm] (xnp) {$x_n^{\prime}$};

\node[vertex,ultra thick] (c1p) at (14,0.2) {$c_1^{\prime}$};
\node[vertex, below of=c1p, node distance=1.6 cm,ultra thick] (c2p) {$c_2^{\prime}$};
\node[vertex, below of=c2p, node distance=1.6 cm,ultra thick] (c3p) {$c_3^{\prime}$};
\node[below of=c3p,node distance=1cm] (cpdots) {$\vdots$};
\node[vertex,below of=cpdots,ultra thick] (cmp) {$c_m^{\prime}$};

\draw (x1) -- (l1) -- (m1) -- (r1) -- (nx1);
\draw (nx1p) -- (l1p) -- (m1p) -- (r1p) -- (x1p);
\draw (x2) -- (l2) -- (m2) -- (r2) -- (nx2);
\draw (nx2p) -- (l2p) -- (m2p) -- (r2p) -- (x2p);
\draw (xn) -- (ln) -- (mn) -- (rn) -- (nxn);
\draw (nxnp) -- (lnp) -- (mnp) -- (rnp) -- (xnp);
\draw (l1)--(l1p);
\draw (l2)--(l2p);
\draw (ln)--(lnp);
\draw (m1)--(m1p);
\draw (m2)--(m2p);
\draw (mn)--(mnp);
\draw (r1)--(r1p);
\draw (r2)--(r2p);
\draw (rn)--(rnp);

\draw (c1) edge[bend left=20] (nx1);
\draw (c1p) edge[bend right=47] (nx1p);
\draw (c1) edge[] (xn);
\draw (c1p) edge[] (xnp);
\draw (x1)--(c2)--(x2)--(cm);
\draw (x1p)--(c2p)--(x2p)--(cmp);

\draw (cm) edge[bend right=50] (nxn);
\draw (cmp) edge[bend left=32] (nxnp);

\end{tikzpicture}
 \caption{Graph $G=(V,E)$ for \textsc{Ext VC} built from $I$. Vertices of $U$ have a bold border.}
 \label{Fig:ExtVC_CUB_BIP}
\end{figure}

\noindent
We claim that $I$ is satisfiable iff $G$ admits a minimal vertex cover containing~$U$.

\noindent Assume $I$ is satisfiable and let $T$ be a truth assignment which satisfies all clauses. We set $S=\{\neg x_i,l_i,\neg x'_i,r'_i\colon T(x_i)=\textit{true}\}\cup \{x_i,r_i,x'_i,l'_i\colon T(x_i)=\textit{false}\}\cup U$. We can easily check that $S$ is a minimal vertex cover containing $U$.

\noindent Conversely, assume that $G$ possesses a minimal vertex cover $S$ containing $U$. In order to cover the edges $l_il'_i$ and $r_ir'_i$,  for every $i=1,\dots,n$, either the set of two vertices $\{l_i,r'_i\}$ or  $\{l'_i,r_i\}$ belongs to $S$. Actually, for a fixed $i$, we know that $|\{l_i,l'_i,r_i,r'_i\}\cap S|\geq 2$; if $\{l_i,l'_i\}\subseteq S$ or $\{r_i,r'_i\}\subseteq S$, then $S$ is not a minimal vertex cover, because $m_i$ or $m'_i$ can be deleted, which is a contradiction. Hence, if $\{l_i,r'_i\}\subseteq S$ (resp., $\{r_i,l'_i\}\subseteq S$), then $\{\neg x_i,\neg x'_i\}\subseteq S$ (resp., $\{x_i,x'_i\}\subseteq S$), since the edges $l'_i\neg x'_i$ and $r_i\neg x_i$ (resp., $l_ix_i$ and $r'_ix_i$) must be covered. In conclusion, by setting $T(x_i)=\textit{true}$ if $\neg x_i\in S$ and $T(x_i)=\textit{false}$ if $x_i\in S$ we obtain a truth assignment $T$ which satisfies all clauses, because $\{C_i,C'_i\colon i=1,\dots,m\}\subseteq U\subseteq S$.
\end{proof}



In the following, we discuss restriction to planar graphs.
\LV{In order to prove our results, we will present reductions from} We use the problem \textsc{4-Bounded Planar 3-Connected SAT} (\textsc{4P3C3SAT} for short), the restriction of {\sc 3-satisfiability}\LV{ to clauses in $C$ 
over variables in 
$X$,} 
where each variable occurs in at most four clauses (at least one time negative and one time positive) and the associated bipartite\LV{ {\em variable-gadget}} graph $BP$  is planar\LV{ of maximum degree~4}. This restriction is also  $\np$-complete~\cite{Kra94}.
\smallskip

Let $I=(X,C)$ be an instance of \textsc{4P3C3SAT}, where $X=\{x_1,\dots,x_n\}$ and $C=\{c_1,\dots,c_m\}$ are variable and clause sets of $I$, respectively. By definition, the graph $G=(V,E)$ with $V=\{c_1,\dots,c_m\}\cup \{x_1,\dots,x_n\}$ and $E=\{c_ix_j\colon x_j \text{ or }\neg x_j \text{ appears in } c_j\}$ is planar. In the following, we always assume that the planar graph comes with an embedding in the plane.
Informally, we are building  a new graph by putting some gadgets instead of vertices $x_i$ of $G$ which satisfy two following conditions: (1) as it can be seen in Fig.~\ref{Fig:ExtVC_CUB_BIP}, the constructions distinguishes between the cases that a variable $x_i$ appears positively and negatively in some clauses (2) the construction preserves planarity.
\smallskip

Suppose that variable $x_i$ appears in the clauses $c_1,c_2,c_3,c_4$ of instance $I$ such that in the induced (embedded) subgraph $G_i=G[\{x_i,c_1,c_2,c_3,c_4\}]$, $c_1x_i$, $c_2x_i$, $c_3x_i$, $c_4x_i$ is an anti-clockwise ordering of edges around $x_i$. By looking at $G_i$ and considering  $x_i$ appears positively and negatively,
the construction should satisfy one of the  following cases:\LV{
\pagebreak[2]
\begin{itemize}
\item[$\bullet$]} case~1: $x_i\in c_1, c_2$ and $\neg x_i \in c_3,c_4$;\LV{\item[$\bullet$]} case~2: $x_i\in c_1,c_3$ and $\neg x_i \in c_2,c_4$;\LV{\item[$\bullet$]} case~3: $x_i\in c_1,c_2,c_3$ and $\neg x_i \in c_4$.
\LV{\end{itemize}}
Note that all other cases are included in these by rotations\LV{ or replacing $x_i $ with $\neg x_i$ or vice versa.}\SV{ or swapping the roles of $x_i $ and $\neg x_i$.}

\begin{figure}[t]
\centering
\begin{tikzpicture}[scale=0.7, transform shape]
\tikzstyle{vertex1}=[circle, draw, inner sep=2pt,  minimum width=1 pt, minimum size=0.1cm]
\tikzstyle{vertex}=[circle, draw, inner sep=2pt, fill=black!100, minimum width=1pt, minimum size=0.1cm]

\node[vertex1] (x) at (-4,-3) {};
\node[vertex1] (c1) at (-5,-1.5) {};
\node[vertex1] (c2) at (-5,-2.5) {};
\node[vertex1] (c3) at (-5,-3.5) {};
\node[vertex1] (c4) at (-5,-4.5) {};
\draw (x)--(c1);
\draw (x)--(c2);
\draw (x)--(c3);
\draw (x)--(c4);
\node () at (-3.7,-3) {$x_i$};
\node () at (-5.3,-1.5) {$c_1$};
\node () at (-5.3,-2.5) {$c_2$};
\node () at (-5.3,-3.5) {$c_3$};
\node () at (-5.3,-4.5) {$c_4$};

\node[vertex] (21) at (-1,-1) {};
\node[vertex,below of=21,node distance=1cm](22){};
\node[vertex1,below of=22,node distance=1cm](23){};
\node[vertex,below of=23,node distance=1cm](24){};
\node[vertex,below of=24,node distance=1cm](25){};
\node[vertex1] (2c1) at (-2,-0.5) {};
\node[vertex1] (2c2) at (-2,-1.5) {};
\node[vertex1] (2c3) at (-2,-4.5) {};
\node[vertex1] (2c4) at (-2,-5.5) {};
\draw (21)--(22)--(23)--(24)--(25);
\draw (21)--(2c1);
\draw (21)--(2c2);
\draw (25)--(2c3);
\draw (25)--(2c4);

\node () at (-0.7,-1) {$t_i$};
\node () at (-0.7,-2) {$l_i$};
\node () at (-0.65,-3) {$m_i$};
\node () at (-0.7,-4) {$r_i$};
\node () at (-0.7,-5) {$f_i$};
\node () at (-2.3,-0.5) {$c_1$};
\node () at (-2.3,-1.5) {$c_2$};
\node () at (-2.3,-4.5) {$c_3$};
\node () at (-2.3,-5.5) {$c_4$};

\node () at (-1.5,-7) {case 1};

\node[vertex] (11) at (2.5,0) {};
\node[vertex](12) at (3,-0.5){};
\node[vertex1](13) at (2.5,-1){};
\node[vertex](14) at (3,-1.5){};
\node[vertex](15) at (2.5,-2){};
\node[vertex](16) at (3,-2.5){};
\node[vertex1](17) at (2.5,-3){};
\node[vertex](18) at (3,-3.5){};
\node[vertex](19) at (2.5,-4){};
\node[vertex](110) at (3,-4.5){};
\node[vertex1](111) at (2.5,-5){};
\node[vertex](112) at (3,-5.5){};
\node[vertex](113) at (2.5,-6){};
\node[vertex] (114) at (4,-0.5) {};
\node[vertex1] (115) at (4,-3) {};
\node[vertex] (116) at (4,-5.5) {};
\node[vertex1,left of=11,node distance=1 cm](1c1){};
\node[vertex1,left of=15,node distance=1 cm](1c2){};
\node[vertex1,left of=19,node distance=1 cm](1c3){};
\node[vertex1,left of=113,node distance=1 cm](1c4){};

\node () at (2.5,0.35) {$t_i^{1}$};
\node () at (3.3,-0.6) {$l_i^{1}$};
\node () at (2.15,-1) {$m_i^{1}$};
\node () at (3.3,-1.5) {$r_i^{1}$};
\node () at (2.3,-1.8) {$f_i^{1}$};
\node () at (3.3,-2.5) {$l_i^{2}$};
\node () at (2.15,-3) {$m_i^{2}$};
\node () at (3.3,-3.5) {$r_i^{2}$};
\node () at (2.3,-3.8) {$t_i^{2}$};
\node () at (3.3,-4.5) {$l_i^{3}$};
\node () at (2.15,-5) {$m_i^{3}$};
\node () at (3.3,-5.5) {$r_i^{3}$};
\node () at (2.5,-6.4) {$f_i^{2}$};
\node () at (4.3,-0.5) {$l_i^{4}$};
\node () at (4.35,-3) {$m_i^{4}$};
\node () at (4.3,-5.5) {$r_i^{4}$};

\node () at (1.2,0) {$c_1$};
\node () at (1.2,-2) {$c_2$};
\node () at (1.2,-4) {$c_3$};
\node () at (1.2,-6) {$c_4$};

\node () at (2.5,-7) {case 2};

\draw (11)--(12)--(13)--(14)--(15)--(16)--(17)--(18)--(19)--(110)--(111)--(112)--(113)--(116)--(115)--(114)--(11);

\draw (11)--(1c1);
\draw (15)--(1c2);
\draw (19)--(1c3);
\draw (113)--(1c4);

\node[vertex] (31) at (7,-2) {};
\node[vertex] (32) at (7,-3) {};
\node[vertex] (33) at (7.6,-2.5) {};
\node[vertex1] (34) at (7.6,-3.5) {};
\node[vertex] (35) at (7.6,-4.5) {};
\node[vertex] (36) at (7,-5) {};

\node[vertex1] (3c1) at (6,-1.5) {};
\node[vertex1] (3c2) at (6,-2.5) {};
\node[vertex1, left of=32, node distance=1 cm] (3c3)  {};
\node[vertex1, left of=36, node distance=1 cm] (3c4)  {};

\node () at (7.1,-1.7) {$t_i^{1}$};
\node () at (7.1,-3.35) {$t_i^{2}$};
\node () at (7.1,-5.35) {$f_i$};
\node () at (7.9,-2.5) {$l_i$};
\node () at (7.95,-3.5) {$m_i$};
\node () at (7.9,-4.5) {$r_i$};
\node () at (5.7,-1.5) {$c_1$};
\node () at (5.7,-2.5) {$c_2$};
\node () at (5.7,-3) {$c_3$};
\node () at (5.7,-5) {$c_4$};
\node () at (6.5,-7) {case 3};

\draw (3c1)--(31)--(3c2);
\draw (3c3)--(32);
\draw (3c4)--(36);
\draw (31)--(33)--(34)--(35)--(36);
\draw (32)--(33);

\end{tikzpicture}
\caption{Construction of Theorem~\ref{EXT IS planar}. On the left: A variable $x_i$ appearing in four clauses $c_1,c_2,c_3,c_4$ in $I$. On the right, cases 1,2,3: The gadgets $H(x_i)$ in the constructed instance, depending on how $x_i$ appears (negative or positive) in the four clauses. Black vertices denote elements of~$U$.} \label{Fig:Ext IS}
\end{figure}

\begin{theorem}\label{EXT IS planar} \SV{$(*)$}
\textsc{Ext IS} is $\np$-complete on planar bipartite subcubic graphs.
\end{theorem}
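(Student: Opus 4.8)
The plan is to reduce from \textsc{4P3C3SAT} using the gadgets $H(x_i)$ depicted in Figure~\ref{Fig:Ext IS}, one per variable, glued to the clause vertices $c_1,\dots,c_m$ of the planar incidence graph. Since the incidence graph of a \textsc{4P3C3SAT} instance is planar of maximum degree at most~$4$ (each variable in at most four clauses, each clause with three literals), and each gadget is a path/short cycle attached to the at most four clause vertices in the cyclic order given by the embedding, planarity is preserved; one checks directly that every gadget has maximum degree~$3$ and that the clause vertices also end up with degree~$3$ (one edge to each of its three literal gadgets), so $G$ is planar subcubic. Bipartiteness is verified by exhibiting a $2$-colouring: the gadgets in Figure~\ref{Fig:Ext IS} are drawn as paths/even cycles with pendant clause-attachments, so one colour class can be taken to contain the clause vertices together with an appropriate alternating set inside each gadget. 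The set $U$ consists of the black vertices in Figure~\ref{Fig:Ext IS}; note $U$ contains no clause vertex and, by construction, $U$ is an independent set (so by Remark~\ref{obsExt_vc} we are in a clean case), and the gadget is laid out so that the only edges of $G[U]$-free concern are the pendant edges to clause vertices.

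By Theorem~\ref{caract_Ext_VC}, $(G,U)$ is a \yes-instance of \textsc{Ext IS} (equivalently, by Remark~\ref{rem-Ext VC=IS}, $(G,V\setminus U)$ is a \yes-instance of \textsc{Ext VC}), and more usefully, \textsc{Ext IS} on $(G,U)$ asks exactly for a maximal independent set $S$ of $G$ with $S\subseteq U$; such an $S$ restricted to a single gadget must be an independent dominating set of that gadget within $G$, and the crucial point is the interaction with the clause vertices. I would show: in each gadget $H(x_i)$, a maximal independent subset of $U$ exists, and in every such choice either all ``positive'' attachment vertices of $x_i$ are selected and all ``negative'' ones are not, or vice versa — this is forced by the alternation along the path/cycle together with the requirement that the internal white (non-$U$) vertices be dominated from inside $U$ and that each $U$-vertex keep a private non-neighbour. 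This gives a well-defined truth value $T(x_i)$. A clause vertex $c_j$ is white, hence $c_j\notin S$; since $S$ must be maximal, $c_j$ must have a neighbour in $S$, i.e.\ at least one of its three literal-attachment vertices is selected, which says the corresponding literal is true under $T$. Conversely, from a satisfying assignment $T$ one selects in each gadget the maximal independent subset of $U$ corresponding to $T(x_i)$; maximality is checked locally inside each gadget and at the clause vertices (each $c_j$ has a true literal, hence a selected neighbour), yielding a maximal independent set $S\subseteq U$ of $G$.

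Concretely the proof breaks into: (1) describe the three gadget types and the attachment to clause vertices, and verify planarity, subcubicity, and bipartiteness; (2) a gadget analysis lemma stating that the maximal independent subsets of $U$ within each $H(x_i)$ are in bijection with the two truth values of $x_i$, with the positive/negative attachment vertices behaving consistently; (3) the forward direction (satisfying assignment $\Rightarrow$ maximal independent set in $U$), assembling local choices and checking maximality globally using that every clause is satisfied; (4) the backward direction (maximal independent set in $U$ $\Rightarrow$ satisfying assignment), reading off $T$ from the gadgets via step~(2) and using maximality at each white clause vertex to conclude every clause has a true literal. Membership in $\np$ is immediate.

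The main obstacle I anticipate is step~(2): carefully verifying, for each of the three gadget shapes (and all rotational/sign variants folded into them), that \emph{every} maximal independent set contained in the black vertices is ``consistent'' — that is, ruling out mixed selections where a positive and a negative attachment of the same variable are simultaneously in $S$ or simultaneously out. This requires tracking both the domination constraint (each non-$U$ internal vertex needs a selected neighbour) and the private-neighbour/maximality constraint along the path and around the cycle, especially at the branching vertices $l_i^4,m_i^4,r_i^4$ in case~2 and at the $t_i^1,t_i^2$ vertices in case~3; the asymmetry between cases~1/2 (two positive, two negative) and case~3 (three positive, one negative) means the bookkeeping must be done separately for each. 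Everything else is routine once the gadget lemma is in hand.
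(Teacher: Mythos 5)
Your proposal follows essentially the same route as the paper's proof: a reduction from \textsc{4P3C3SAT} using the three variable gadgets of Figure~\ref{Fig:Ext IS} with $U$ taken to be the black (non-$m$, non-clause) vertices, a per-gadget consistency lemma, and maximality at the white clause vertices to read off satisfied clauses. One small caution for the write-up: the gadget lemma should not claim a \emph{bijection} between maximal independent subsets of $U$ in a gadget and truth values (e.g.\ in case~1 the set $\{l_i,r_i\}$ is a maximal independent subset selecting neither attachment), but only the weaker statement the paper uses — that no positive and negative attachment vertex of the same variable can be simultaneously selected — which suffices since a variable whose attachments are all unselected may be assigned arbitrarily.
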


\newcommand{\proofofExtISplanarshort}{
The proof is based on a reduction from \textsc{4P3C3SAT}. We start from graph $G$ which is defined already above and build a planar bipartite graph $H$ by replacing every node $x_i$ in $G$ with one of the three gadgets $H(x_i)$ which are depicted in Fig.~\ref{Fig:Ext IS}. Let \LV{$}$F_1=\{m_i\colon H(x_i) \text{ complies with cases 1 or 3}\}$\LV{$} and let \LV{$}$F_2=\{m_i^{1}, m_i^{2}, m_i^{3}, m_i^{4}\colon H(x_i) \text{ complies with case 2}\}\,.$\LV{$} The permitted vertex set is $U=V(H)\setminus (F_1\cup F_2 \cup C)$, where $C=\{c_i\colon 1\leq i\leq m\}$. This construction is  polynomial-time computable and $H$ is a planar bipartite subcubic graph. \LV{We claim that }$H$ has a maximal independent set which contains only vertices from $U$ iff $I$ is satisfiable.}

\newcommand{\proofofExtISplanarReasoning}{If $T$ is a truth assignment of $I$ which satisfies all clauses, then depending on $T(x_i)=\textit{true}$ or $T(x_i)=\textit{false}$, we define the independent set $S_i$ corresponding to three different variable gadgets $H(x_i)$ as follows:

$S_i: =\begin{cases}\{t_i,r_i\} & \text{if $H(x_i)$ adapts to case 1 and } T(x_i)=\textit{true},\\ \{t_i^{1},r_i^{1},l_i^{2},t_i^{2},r_i^{3},r_i^{4}\} & \text{if $H(x_i)$ adapts to case 2 and } T(x_i)=\textit{true},\\\{t_i^{1},t_i^{2},r_i\} & \text{if $H(x_i)$ adapts to case 3 and } T(x_i)=\textit{true},\\\{f_i,l_i\}& \text{if $H(x_i)$ adapts to case 1 and } T(x_i)=\textit{false},\\\{l_i^{1},f_i^{1},r_i^{2},l_i^{3},f_i^{2},l_i^{4}\} & \text{if $H(x_i)$ adapts to case 2 and } T(x_i)=\textit{false},\\\{l_i,f_i\}& \text{if $H(x_i)$ adapts to case 3 and } T(x_i)=\textit{false}.\end{cases}$
\smallskip

We can see that  $S=\bigcup_{1\leq i\leq n}S_i$ is a maximal independent set of $H$ which contains only vertices from $U$.
\smallskip

Conversely, suppose $S\subseteq U$ is a maximal independent set of $H$. By using maximality of $S$, we define an assignment $T$ for $I$ depending on different types of variable gadgets of $H$ as follows:
\begin{itemize}
\item[$\bullet$] for case 1, one of $l_i,r_i$ must be in $S$, hence we set $T(x_i)=\textit{true}$ (resp., $T(x_i)=\textit{false}$) if $r_i\in S$ (resp., $l_i\in S$).

\item[$\bullet$] for case 2, at least one of vertices in each pair $\{(l_i^{j},r_i^{j})\colon 1\leq j\leq 4\})$ must be in $S$. Hence, at most one of $(S\cap\{t_i^{1},t_i^{2}\})\neq \emptyset$ and $(S\cap\{f_i^{1},f_i^{2}\})\neq \emptyset$ is true. Thus we set $T(x_i)=\textit{true}$ (resp., $T(x_i)=\textit{false}$) if $(S\cap\{t_i^{1},t_i^{2}\})\neq \emptyset$ (resp., $(S\cap\{f_i^{1},f_i^{2}\})\neq \emptyset$).

\item[$\bullet$] for case 3, one can see, similar to the previous two cases: if one of $t_i^{1},t_i^{2}$ (resp., $f_i$) is in $S$, then none of $f_i$ (resp. $t_i^{1},t_i^{2}$) are in $S$, then we set $T(x_i)=\textit{true}$ (resp., $T(x_i)=\textit{false}$) if $(S\cap \{t_i^{1},t_i^{2}\})\neq \emptyset$ (resp., $f_i\in S$).
\end{itemize}

We obtain a valid assignment $T$. This assignment satisfies all clauses of $I$, since for all $c_j\in C$,
$(N(c_j)\cap S)\neq \emptyset$ (by maximality of $S$).}

\SV{\noindent\emph{Sketch.} \proofofExtISplanarshort}

\LV{\begin{proof}\proofofExtISplanarshort
\smallskip
\proofofExtISplanarReasoning
\end{proof}}

\SV{\smallskip}\noindent
It is challenging to strengthen \LV{the previous}\SV{this} result to planar bipartite cubic  graphs.


\subsubsection*{Algorithmic notes for the planar case}

By distinguishing between whether a vertex belongs to the cover or not and further, when it belongs to the cover, if it already has a private edge or not, it is not hard to design a dynamic programming algorithm 
that decides in time $\Oh^*(c^t)$ if $(G,U)$ is a \yes-instance of \textsc{Ext VC} or not, given a graph $G$ together with a tree decomposition of width $t$. With some more care, even $c=2$ can be achieved, but this is not so important here. Rather, below we will make explicit an algorithm on trees that is based on 
several combinatorial properties and hence differ from the DP approach sketched here for the more general notion of treewidth-bounded graphs.

Moreover, it is well-known that planar graphs of order $n$ have treewidth bounded by $\Oh(\sqrt{n})$.
In fact, we can obtain a corresponding tree decomposition in polynomial time, given a planar graph $G$.
Piecing things together, we obtain:

\begin{theorem}
\textsc{Ext VC} can be solved in time  $\Oh(2^{\Oh(\sqrt{n})})$ on planar graphs\LV{ of order $n$}.
\end{theorem}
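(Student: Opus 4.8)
The plan is to follow the two-step strategy already sketched above: first design a dynamic-programming algorithm over a tree decomposition that solves \textsc{Ext VC} in time $\Oh^*(c^t)$ on graphs supplied together with a tree decomposition of width $t$, and then invoke the fact that every planar graph of order $n$ has treewidth $\Oh(\sqrt{n})$, together with a polynomial-time algorithm that actually produces such a decomposition. Combining the two gives running time $\Oh^*(c^{\Oh(\sqrt{n})})=2^{\Oh(\sqrt{n})}$, since a polynomial factor and the change of base are absorbed into the $\Oh(\cdot)$ in the exponent.

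For the dynamic program I would first convert the given decomposition, in polynomial time and without increasing the width, into a \emph{nice} tree decomposition with introduce-vertex, introduce-edge, forget, and join nodes. For a bag $B$, a \emph{state} assigns to each $v\in B$ one of three labels: $\mathtt{out}$ ($v\notin S$), $\mathtt{in}^{+}$ ($v\in S$ and $v$ already has a private edge among the vertices processed so far), and $\mathtt{in}^{-}$ ($v\in S$ but no private edge witnessed yet). The table records, for each node $x$ and each of the at most $3^{t+1}$ states, whether there is a vertex set in the subtree rooted at $x$ consistent with the labels that (a) contains every vertex of $U$ seen so far, (b) covers every edge introduced so far, and (c) gives every already-forgotten vertex labeled $\mathtt{in}$ a private edge before it was forgotten. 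The transitions are routine: at an introduce-vertex node one branches on the label of the new vertex, forbidding $\mathtt{out}$ for vertices of $U$ and initializing a chosen vertex as $\mathtt{in}^{-}$; at an introduce-edge node $vu$ one rejects states with both endpoints $\mathtt{out}$ (the edge would be uncovered) and, when exactly one endpoint is $\mathtt{out}$, one may upgrade the other endpoint from $\mathtt{in}^{-}$ to $\mathtt{in}^{+}$ (it just acquired a private edge); at a forget node one rejects any state whose forgotten vertex is still labeled $\mathtt{in}^{-}$ and otherwise projects; at a join node one merges two child states with the same $\mathtt{out}/\mathtt{in}$ pattern, declaring a vertex $\mathtt{in}^{+}$ in the result iff it is $\mathtt{in}^{+}$ in at least one child. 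The root bag is empty, and $(G,U)$ is a \yes-instance iff its table contains a valid state. Correctness rests on the private-edge characterization of minimality used throughout (cf.\ the discussion preceding Theorem~\ref{caract_Ext_VC}), together with the fact that every edge incident to a vertex $v$ lies in a bag containing $v$, so a vertex that has not acquired a private edge by the time it is forgotten never will. Each of the $\Oh(t\cdot n)$ nodes is handled in time $\Oh^*(3^{t})$, giving total time $\Oh^*(3^{t})$.

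To finish, recall that a planar graph on $n$ vertices has treewidth $\Oh(\sqrt{n})$, and that a tree decomposition of width $\Oh(\sqrt{n})$ can be computed in polynomial time (e.g.\ by recursively applying the planar separator theorem, or via a constant-factor treewidth approximation specialized to planar graphs). Running the dynamic program on such a decomposition, of width $t=\Oh(\sqrt{n})$, yields running time $\Oh^*(3^{\Oh(\sqrt{n})})=2^{\Oh(\sqrt{n})}$, as claimed. I expect the only delicate point to be the bookkeeping for minimality — the $\mathtt{in}^{+}/\mathtt{in}^{-}$ distinction, the disjunctive merge of the private-edge flag at join nodes, and the verification at forget nodes that the window for earning a private edge has genuinely closed; the treewidth bound, the reduction to a nice decomposition, and the absorption of lower-order factors into the exponent are all standard.
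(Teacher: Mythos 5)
Your proposal is correct and follows exactly the route the paper sketches in the paragraph preceding the theorem: a tree-decomposition dynamic program whose per-vertex states distinguish ``out'', ``in with a private edge already witnessed'', and ``in without one yet'', combined with the $\Oh(\sqrt{n})$ treewidth bound for planar graphs and a polynomial-time construction of such a decomposition. The paper leaves the DP details implicit; your elaboration of the transitions (in particular the disjunctive merge of the private-edge flag at join nodes and the rejection of still-unwitnessed vertices at forget nodes) is a sound instantiation of that sketch.
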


\subsubsection*{Remarks on the Exponential Time Hypothesis}

 Assuming ETH, there is no $2^{o(n+m)}$-algorithm for solving $n$-variable, $m$-clause instances of 
$(3,B2)$-{\sc SAT}.
As our reduction from $(3,B2)$-{\sc SAT} increases the size of the instances only in a linear fashion, we can immediately conclude:

\begin{theorem}\label{thm:eth-summary}
 There is no $2^{o(n+m)}$-algorithm for  $n$-vertex, $m$-edge bipartite subcubic instances of $\textsc{Ext VC}$\LV{ or  $\textsc{Ext IS}$}, unless ETH fails.
\end{theorem}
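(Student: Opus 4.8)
The plan is to transfer the ETH lower bound for $(3,B2)$-\textsc{SAT} across the reduction used in the proof of Theorem~\ref{Bip_Ext_VC}, once we observe that this reduction increases the instance size only by a constant factor. Recall the fact stated just above: assuming ETH, no algorithm decides satisfiability of $(3,B2)$-\textsc{SAT} instances with $n$ variables and $m$ clauses in time $2^{o(n+m)}$ (this is the standard ETH consequence for bounded-occurrence $3$-\textsc{SAT}, and since each variable occurs exactly four times we have $3m=4n$, so $n+m=\Theta(n)=\Theta(m)$ anyway). First I would take an arbitrary such instance $I=(C,X)$ with $|X|=n$ and $|C|=m$ and apply the construction of Theorem~\ref{Bip_Ext_VC} to obtain the cubic bipartite graph $G=(V,E)$ and the required set $U$, so that $I$ is satisfiable iff $(G,U)$ is a \yes-instance of \textsc{Ext VC}.

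The key step is the size estimate. Reading the construction off Figure~\ref{Fig:ExtVC_CUB_BIP}, the vertex set $V$ consists of the two clause copies $\{c_j,c'_j\}$, the literal vertices $x_i,\neg x_i,x'_i,\neg x'_i$, and the connector vertices $l_i,m_i,r_i$ together with their primed twins $l'_i,m'_i,r'_i$; hence $|V|=2m+O(n)=O(n+m)$, and since $G$ is cubic, $|E|=\tfrac{3}{2}|V|=O(n+m)$ as well. The construction is clearly computable in time polynomial in $n+m$. Therefore, if \textsc{Ext VC} restricted to bipartite subcubic graphs could be solved in time $2^{o(N+M)}$ on $N$-vertex, $M$-edge instances, then composing such an algorithm with the reduction would decide the satisfiability of $I$ in time $2^{o(|V|+|E|)}\cdot (n+m)^{O(1)}=2^{o(n+m)}$, contradicting ETH. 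This settles the claim for \textsc{Ext VC}.

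For \textsc{Ext IS}, I would invoke Remark~\ref{rem-Ext VC=IS}: $(G,U)$ is a \yes-instance of \textsc{Ext VC} iff $(G,V\setminus U)$ is a \yes-instance of \textsc{Ext IS}, and this mapping keeps the graph $G$ fixed, hence preserves $N$, $M$, bipartiteness and subcubicity, and is computable in linear time. So a $2^{o(N+M)}$-algorithm for \textsc{Ext IS} on bipartite subcubic graphs would give one for \textsc{Ext VC} on the same class, which the previous paragraph rules out. The only delicate point in the whole argument is the bookkeeping behind the size estimate — one must check that no gadget of the reduction of Theorem~\ref{Bip_Ext_VC} spends more than a constant number of vertices and edges per variable or clause of $I$ — but this is immediate from the explicit gadgets, so I do not anticipate a genuine obstacle.
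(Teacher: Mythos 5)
Your proposal is correct and follows essentially the same route as the paper: the authors also obtain this theorem by observing that the reduction of Theorem~\ref{Bip_Ext_VC} from $(3,B2)$-\textsc{SAT} blows up the instance size only linearly, and transfer the $2^{o(n+m)}$ ETH lower bound for that problem, with the \textsc{Ext IS} case following from Remark~\ref{rem-Ext VC=IS}. Your explicit vertex/edge count merely fills in the bookkeeping the paper leaves implicit.
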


This also motivates us to further study exact exponential-time algorithms. 
We can also deduce optimality of our algorithms for planar graphs based on the following auxiliary result.

\begin{proposition}\label{prop:SAT}
There is no algorithm that solves 
\textsc{4P3C3SAT} on instances with $n$ variables and $m$ clauses in time $2^{o(\sqrt{n+m})}$, unless ETH fails.
\end{proposition}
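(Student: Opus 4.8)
The plan is to transfer the ETH lower bound for general $3$-SAT down to \textsc{4P3C3SAT} while controlling the blow-up in the number of variables and clauses, and then invoke the sparsification lemma together with the planarity requirement. The standard starting point is the classical fact that, under ETH, there is no $2^{o(n+m)}$-algorithm for $3$-SAT on instances with $n$ variables and $m$ clauses (this is exactly the form one gets after applying the sparsification lemma to the bare ETH assumption). So the first step is to recall that baseline and note that it suffices to produce a reduction from $3$-SAT to \textsc{4P3C3SAT} that maps an $(n,m)$-instance to an instance with $n' = \Oh(n+m)$ variables and $m' = \Oh(n+m)$ clauses, since then a hypothetical $2^{o(\sqrt{n'+m'})}$-algorithm for \textsc{4P3C3SAT} would yield a $2^{o(\sqrt{n+m})} = 2^{o(n+m)}$-algorithm for $3$-SAT, contradicting ETH.

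The second step is to assemble such a reduction from the literature in a way that keeps both parameters linear. The reduction to \textsc{4P3C3SAT} proceeds in two familiar stages: (a) reduce the number of occurrences of each variable to a constant by the usual ``variable-splitting'' trick — replace a variable $x$ occurring $k$ times by $k$ fresh copies $x^{(1)},\dots,x^{(k)}$ tied together with a cycle of equivalence clauses $(\neg x^{(i)} \vee x^{(i+1)})$ (indices mod $k$); this adds $\Oh(k)$ new variables and $\Oh(k)$ new clauses per original variable, hence $\Oh(m)$ total, and each literal now occurs a bounded number of times; (b) make the incidence graph planar by the standard crossover-gadget construction (Lichtenstein's planarity reduction, as used by Kratochv\'il \cite{Kra94} to get precisely \textsc{4P3C3SAT}), where each crossing of two edges in a fixed drawing of the incidence graph is replaced by a constant-size planar gadget. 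The key quantitative observation is that, after stage (a), the incidence graph has $\Oh(n+m)$ edges, so any of its drawings relevant to the construction has $\Oh((n+m)^2)$ crossings in the worst case — and here is where care is needed.

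The main obstacle is exactly this crossing count: a naive planarization introduces a number of gadgets proportional to the number of crossings, which can be quadratic, destroying linearity. To get around it one uses that after bounding occurrences the incidence graph is \emph{sparse} (bounded average degree), and a sparse graph on $N$ vertices can be drawn with only $\Oh(N)$ crossings — more precisely, one routes the planarization through a graph of bounded degree and uses that bounded-degree (indeed bounded-occurrence) SAT incidence graphs admit drawings with a linear number of crossings, or equivalently one applies the planarization/crossover construction to a graph that has already been made of bounded degree, for which the relevant embeddings have $\Oh(N)$ crossings. Thus stage (b) adds only $\Oh(n+m)$ variables and clauses, and the composed reduction has $n',m' = \Oh(n+m)$ as required. (One also checks the side conditions of \textsc{4P3C3SAT}: each variable occurs at most four times and at least once positively and once negatively — both are arranged by padding with trivial occurrences inside the gadgets, which changes the counts only by constants.)

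Finally, putting the pieces together: given a purported $2^{o(\sqrt{n'+m'})}$-algorithm $\mathcal{A}$ for \textsc{4P3C3SAT}, and an arbitrary $3$-SAT instance with parameters $n,m$, apply the reduction above to obtain an equivalent \textsc{4P3C3SAT} instance with $n'+m' = \Oh(n+m)$, then run $\mathcal{A}$; the total running time is $2^{o(\sqrt{n'+m'})} \cdot \mathrm{poly} = 2^{o(\sqrt{\Oh(n+m)})} = 2^{o(\sqrt{n+m})} \le 2^{o(n+m)}$, contradicting the ETH-based lower bound for $3$-SAT recalled in the first step. Hence no such $\mathcal{A}$ exists, which is the claim. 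The delicate point throughout is ensuring the crossover/planarization stage keeps the instance size linear; everything else is bookkeeping on the occurrence counts.
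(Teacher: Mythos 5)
Your opening skeleton (start from the $2^{o(n+m)}$ ETH lower bound for sparse \textsc{3-SAT}, push it through Kratochv\'il's reduction, account for the blow-up) is the right one, but the pivotal step you rely on is false, and in fact cannot be repaired. You claim that because the incidence graph after variable-splitting is sparse (bounded degree), it ``can be drawn with only $\Oh(N)$ crossings,'' so that the crossover-gadget planarization stays linear. Bounded-degree graphs do \emph{not} admit drawings with a linear number of crossings in general: by Leighton's bisection-width bound, a bounded-degree expander on $N$ vertices has crossing number $\Omega(N^2)$, and nothing prevents the (split) incidence graph of a \textsc{3-SAT} instance from being expander-like. Worse, the goal itself is unattainable modulo ETH: \textsc{4P3C3SAT} has a planar incidence graph of treewidth $\Oh(\sqrt{N})$ and is therefore solvable in time $2^{\Oh(\sqrt{N})}$ (this upper bound is exactly why the proposition is stated with $\sqrt{n+m}$, and is the same phenomenon as the $2^{\Oh(\sqrt{n})}$ algorithm for \textsc{Ext VC} on planar graphs earlier in the section). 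A linear-size reduction from \textsc{3-SAT} would thus immediately yield a $2^{o(n+m)}$ algorithm for \textsc{3-SAT}, i.e.\ it would refute ETH rather than use it.

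The fix is to \emph{embrace} the quadratic blow-up instead of fighting it. After sparsification and occurrence-bounding, the incidence graph has $\Oh(n+m)$ edges, so a fixed drawing has at most $\Oh((n+m)^2)$ crossings; replacing each crossing by Lichtenstein's constant-size crossover gadget yields an equivalent \textsc{4P3C3SAT} instance with $N = n'+m' = \Oh((n+m)^2)$. A $2^{o(\sqrt{N})}$ algorithm for \textsc{4P3C3SAT} then runs in time $2^{o(\sqrt{(n+m)^2})} = 2^{o(n+m)}$ on the image instance, contradicting the ETH lower bound for \textsc{3-SAT}. The square root in the statement precisely cancels the square in the reduction; no control beyond ``polynomial, in fact quadratic'' is needed. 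Your bookkeeping of the occurrence counts and the positive/negative side conditions is fine and carries over unchanged.
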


\begin{corollary}\label{cor:ETH-planar}
There is no $2^{o(\sqrt{n})}$ algorithm for solving \textsc{Ext VC} on planar instances of order~$n$, unless ETH fails.
\end{corollary}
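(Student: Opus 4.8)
The plan is to recycle the reduction behind Theorem~\ref{EXT IS planar} together with Remark~\ref{rem-Ext VC=IS}, to observe that this reduction blows up the instance size only linearly, and then to chain it with the ETH lower bound of Proposition~\ref{prop:SAT}.

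Concretely, I would first revisit the reduction used to establish Theorem~\ref{EXT IS planar}: from an instance $I=(X,C)$ of \textsc{4P3C3SAT} with $|X|=n$ variables and $|C|=m$ clauses it produces, in polynomial time, a planar bipartite subcubic graph $H$ together with a vertex set $U\subseteq V(H)$ such that $(H,U)$ is a \yes-instance of \textsc{Ext IS} iff $I$ is satisfiable. The key point to record is that $H$ keeps the $m$ clause-vertices and replaces each variable-vertex by one of the three gadgets $H(x_i)$ of Figure~\ref{Fig:Ext IS}, each of which has only a constant number of vertices; hence $|V(H)|=\Oh(n+m)$ (indeed $\Oh(n)$, since the number of clauses is linear in $n$ by the occurrence bound of \textsc{4P3C3SAT}). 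Then, by Remark~\ref{rem-Ext VC=IS}, $(H,V(H)\setminus U)$ is a \yes-instance of \textsc{Ext VC} iff $(H,U)$ is a \yes-instance of \textsc{Ext IS}, so we obtain a polynomial-time reduction from \textsc{4P3C3SAT} to \textsc{Ext VC} on planar graphs of order $N=\Oh(n+m)$.

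Next I would argue by contradiction: suppose \textsc{Ext VC} on planar instances of order $N$ admitted a $2^{o(\sqrt{N})}$-time algorithm. Running this algorithm on the output of the reduction above would then decide satisfiability of any $n$-variable, $m$-clause instance of \textsc{4P3C3SAT} in time $2^{o(\sqrt{N})}+\mathrm{poly}(n+m)=2^{o(\sqrt{n+m})}$, since $N=\Oh(n+m)$ implies $\sqrt{N}=\Oh(\sqrt{n+m})$ and $o(\cdot)$ is stable under such linear changes. This contradicts Proposition~\ref{prop:SAT}, and rephrasing $N$ as the order of the planar \textsc{Ext VC} instance yields the claimed absence of a $2^{o(\sqrt{n})}$-time algorithm unless ETH fails.

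I do not expect any genuine obstacle here; the only thing to get right is the bookkeeping that the reduction of Theorem~\ref{EXT IS planar} is truly \emph{size-linear} — each variable gadget has bounded size, so $|V(H)|=\Oh(n+m)$ rather than anything super-linear such as $\Oh(nm)$ — together with the elementary fact that composing a sub-exponential running time in $\sqrt{N}$ with a linear-size reduction keeps the running time sub-exponential in $\sqrt{n+m}$. Both are immediate from the explicit construction used in the proof of Theorem~\ref{EXT IS planar}.
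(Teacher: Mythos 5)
Your proposal is correct and matches the paper's intended argument: the corollary is stated without explicit proof precisely because it follows by chaining the linear-size planar reduction of Theorem~\ref{EXT IS planar} (transferred from \textsc{Ext IS} to \textsc{Ext VC} via Remark~\ref{rem-Ext VC=IS}) with the ETH lower bound of Proposition~\ref{prop:SAT}. Your bookkeeping that each variable gadget has constant size, so the constructed graph has order $\Oh(n+m)$ (indeed $\Oh(n)$ by the occurrence bound), is exactly the point the paper relies on.
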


\subsubsection*{Remarks on Parameterized Complexity} We now study our problems in the framework of parameterized complexity where we consider the size of the set of fixed vertices as \emph{standard parameter} for our extension problems.

\begin{theorem}\label{thm: W-c Ext VC}
\textsc{Ext VC} with standard parameter is $\wone$-complete, even when restricted to bipartite instances. 
\end{theorem}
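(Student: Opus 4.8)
The plan is to prove membership in $\wone$ and $\wone$-hardness on bipartite instances separately, both via the characterization of Theorem~\ref{caract_Ext_VC}.

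First I would record the following reformulation, which is an easy consequence of Theorem~\ref{caract_Ext_VC}: $(G,U)$ is a \yes-instance of \textsc{Ext VC} if and only if there is an \emph{independent} set $T\subseteq V\setminus U$ with $|T|\le|U|$ and $U\subseteq N_G(T)$. For ``$\Leftarrow$'' one extends such a $T$ greedily to an independent dominating set of $G[N_G[U]]$ contained in $N_G[U]\setminus U$ exactly as in the proof of Theorem~\ref{caract_Ext_VC}; for ``$\Rightarrow$'', given an independent dominating set $S'\subseteq N_G[U]\setminus U$ of $G[N_G[U]]$, picking one neighbour in $S'$ for each $u\in U$ yields such a $T$. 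Membership in $\wone$ then follows by an FPT-reduction to \MCI with $|U|$ colours: for each $u\in U$ take the colour class $\{(v,u):v\in N_G(u)\setminus U\}$ (if some class is empty, output a fixed \no-instance), and join $(v,u)$ to $(v',u')$ whenever $u\neq u'$, $v\neq v'$ and $vv'\in E$. A multicoloured independent set is precisely a choice of one vertex $f(u)\in N_G(u)\setminus U$ per $u\in U$ whose image is independent in $G$, i.e.\ a witness $T$ as above; since \MCI is in $\wone$, so is \textsc{Ext VC}.

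For hardness I would reduce from \MCC, which is $\wone$-hard. Let $H$ have colour classes $V_1,\dots,V_k$ with all edges running between distinct classes, and write $V_i=\{v_{i,1},v_{i,2},\dots\}$. Build $G$ as follows: for each $i$ add a \emph{selector} $u_i$ and vertices $\ell_{i,a}$ ($v_{i,a}\in V_i$) with $u_i$ adjacent to all $\ell_{i,a}$; for each pair $p<q$ add a \emph{checker} $u_{pq}$ and, for every edge $v_{p,a}v_{q,b}$ of $H$, a vertex $d_{pq}^{a,b}$ adjacent to $u_{pq}$, joined moreover to $\ell_{p,a'}$ for all $a'\neq a$ and to $\ell_{q,b'}$ for all $b'\neq b$. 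Put $U=\{u_i:i\in[k]\}\cup\{u_{pq}:p<q\}$, so $|U|=k+\binom{k}{2}$ depends only on $k$, and the construction is polynomial. The graph $G$ is bipartite with sides $\{u_i\}\cup\{d_{pq}^{a,b}\}$ and $\{\ell_{i,a}\}\cup\{u_{pq}\}$, since every edge listed crosses this partition.

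Correctness uses the reformulation. As every vertex outside $U$ is adjacent to at most one vertex of $U$, the sets $N_G(u)$, $u\in U$, are pairwise disjoint, so any witness $T$ must contain exactly one vertex of each, i.e.\ one $\ell_{i,a_i}$ per class and one $d_{pq}^{\alpha,\beta}$ per pair; independence of $T$ forces $\alpha=a_p$, $\beta=a_q$ (else $d_{pq}^{\alpha,\beta}$ hits $\ell_{p,a_p}$ or $\ell_{q,a_q}$), and the existence of $d_{pq}^{a_p,a_q}$ means $v_{p,a_p}v_{q,a_q}\in E(H)$, so $\{v_{1,a_1},\dots,v_{k,a_k}\}$ is a multicoloured clique; conversely such a clique gives the witness $T=\{\ell_{i,a_i}\}\cup\{d_{pq}^{a_p,a_q}:p<q\}$. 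The main obstacle, and the reason for reducing from \MCC rather than directly from \MCI, is to encode all $\Theta(|E(H)|)$ adjacency constraints while simultaneously keeping $|U|$ bounded by a function of $k$ and keeping $G$ bipartite: a naive ``one required vertex per edge'' gadget breaks the parameter bound, and inserting the edges of $H$ directly breaks bipartiteness, whereas delegating every adjacency test to one of only $\binom{k}{2}$ checkers, with the $d$--$\ell$ edges playing the role of the absent $H$-edges, resolves both at once.
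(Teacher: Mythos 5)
Your proof is correct, but it departs from the paper's argument in both halves, most substantially in the hardness part. The paper also reduces to a multicolored problem for membership in spirit, but formally it goes to \textsc{Short Nondeterministic Turing Machine}: it guesses one neighbour $u_i'\in N(u_i)\setminus U$ per required vertex and checks independence in time $\Oh(k^2)$, invoking Theorem~\ref{caract_Ext_VC} exactly as you do; your explicit reduction to \MCI is an equivalent, arguably cleaner, packaging of the same combinatorial fact (one small point: before greedily extending your witness $T$ you should first replace it by $T\cap N_G[U]$, which still dominates $U$). For hardness, the paper reduces from \MCI rather than \MCC: assuming w.l.o.g.\ that each colour class $V_i$ induces a clique, it takes two copies $V,\bar V$ of the vertex set with $u\bar v$ an edge whenever $uv\in E$, attaches only $2k$ required vertices $w_i,\bar w_i$ joined to $\bar V_i$ and $V_i$ respectively, and lets the intra-class cliques force exactly one selected vertex per class. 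Your construction instead uses the classical selector/checker paradigm with $|U|=k+\binom{k}{2}$ required vertices and delegates each of the $\binom{k}{2}$ adjacency tests to a checker whose private neighbourhood enumerates the edges between two classes. Both are legitimate $\fpt$-reductions producing bipartite instances; the paper's gadget is more economical in $|U|$ and exploits the clique-per-class normalisation, whereas yours is more self-contained (no normalisation of the source instance needed) and makes the role of the witness set $T$ from Theorem~\ref{caract_Ext_VC} completely explicit.
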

\begin{proof}
We show hardness by reduction from \MCI,\SV{ which is $\wone$-hard by~\cite{DBLP:journals/tcs/FellowsHRV09},\footnote{The proof is  for \textsc{Multicolored Clique}; taking the complement graph is a parameterized reduction showing that \MCI is $\wone$-hard.}} so let $G=(V,E)$ be an instance of of this problem, with partition $V_1,\dots,V_{k}$ for $V$. W.l.o.g., assume that each $V_i$ induces a clique and $|V_i|\geq 2$.  Construct $G'=(V',E')$ from $G$ with $V'$ built from two copies of $V$, denoted $V$ and $\bar V:= \{\bar v\suchthat v\in V\}$, and $2k$ additional vertices $\{w_i,\bar w_i\suchthat 1\leq i\leq k\}$, and $E'$ containing $u\bar v$ for all $uv\in E$ and $u\bar w_i$ and $\bar uw_i$ for all $u\in V_i$, $i\in \{1,\dots, k\}$ (see Fig.~\ref{Fig:W_hard_Ext_VC}). $G'$ is bipartite with partition into $V\cup\{w_i\suchthat 1\leq i\leq k\}$ and  $\bar V\cup\{\bar w_i\suchthat 1\leq i\leq k\}$.  Set $U=\{w_i,\bar w_i\suchthat 1\leq i\leq k\}$ and consider  $(G', U)$  as instance of \textsc{Ext VC}.  We claim that $(G', U)$ is a \yes-instance for \textsc{Ext VC} iff $G$ is a \yes-instance for \MCI.\LV{  Since \MCI is $\wone$-hard~\cite{DBLP:journals/tcs/FellowsHRV09},\footnote{The proof is  for \textsc{Multicolored Clique}; taking the complement graph is a parameterized reduction showing that \MCI is $\wone$-hard.} this $\fpt$-reduction shows  $\wone$-hardness for \textsc{Ext VC} with standard parameterization.}\par
Suppose $(G', U)$ is a \yes-instance for \textsc{Ext VC}, so there exists a minimal vertex cover $S$ for $G'$ with $U\subseteq S$. Consider $S':= V'\setminus S$. Since $S$ is minimal,  $N(u)\not\subseteq S$ for all $u\in S$, so especially for each  $i\in\{1,\dots, k\}$ there exists at least one vertex from $N(w_i)=\bar V_i$ in $S'$ and also at least one vertex from $N(\bar w_i)=\bar V_i$. Since $S'$ has to be an independent set in $G'$ and $v\bar u\in E'$ for all $u,v\in V_i$, $u\not=v$ (recall that $V_i$ is a clique in $G$), it follows that if $v\in S'\cap V_i$, then $\bar v$ is the only vertex independent from $v$ in $\bar V_i$. This means that $|S'\cap V_i|= 1$ for all $i\in\{1,\dots, k\}$ and if $S'\cap V=\{v_1,\dots, v_k\}$, then $S'\cap \bar V=\{\bar v_1,\dots, \bar v_k\}$. The set $S'\cap V$ hence is a multicolored independent set in $G$, since $v_iv_j\in E$ for $i,j\in\{1,\dots,k\}$ would imply that $v_i\bar v_j\in E'$ which is not possible since $S'$ is an independent set in $G'$.
Conversely, it is not hard to see that if there exists a multicolored independent set $S$ in $G$, then the set $V'\setminus (S\cup \bar S)$ (with $\bar S:=\{\bar v\suchthat v\in S\}$) is a minimal vertex cover for $G'$ containing $U$.\par

\par
Membership in $\wone$ is seen as follows. As suggested in~\cite{Ces2003}, we describe a reduction to \textsc{Short Nondeterministic Turing Machine}. Given a graph $G=(V,E)$ and a pre-solution $U=\{u_1,\dots,u_k\}\subseteq V$, the constructed Turing machine first guesses vertices $u_1',\dots,u_k'$, with $u_i'\in N(u_i)\setminus U$ and then verifies in time $\Oh(k^2)$ if the guessed set $U'$ is an independent set. As $\{u_1',\dots,u_k'\}$ can be greedily extended to an independent dominating set for $N[U]$ which, by Theorem~\ref{caract_Ext_VC}, is equivalent to $(G,U)$ being a \yes-instance of {\sc Ext VC}, $U$ can be extended to a minimal vertex cover iff one of the guesses is successful.
\end{proof}

\begin{figure}[bt]
\tikzstyle{vertex1}=[circle, draw, inner sep=0pt, minimum width=4pt, minimum size=0.5cm]
\centering
\mbox{ }\\[-16ex]
\begin{tikzpicture}[scale=0.95, transform shape]
\node[vertex] (v1) at (0,0) {};
\node[vertex, right of=v1, node distance=0.5 cm] (v2) {};
\node[vertex, right of=v2, node distance=0.5 cm] (v3) {};
\node[right of=v3, node distance=1 cm] (hdots) {$\hdots$};
\node[vertex, right of=hdots, node distance=1 cm] (vn-1) {};
\node[vertex, right of=vn-1, node distance=0.5 cm] (vn) {};

\node[vertex, below of=v1, node distance=1.5 cm] (vb1) {};
\node[vertex, right of=vb1, node distance=0.5 cm] (vb2) {};
\node[vertex, right of=vb2, node distance=0.5 cm] (vb3) {};
\node[right of=vb3, node distance=1 cm] (hbdots) {$\hdots$};
\node[vertex, right of=hbdots, node distance=1 cm] (vbn-1) {};
\node[vertex, right of=vbn-1, node distance=0.5 cm] (vbn) {};

\draw (vb1)--(v3);
\draw(vb2)--(v3);
\draw(vb2)--(vn-1);
\draw(vb3)--(v1);
\draw(vb3)--(v2);
\draw(vb3)--(vn);
\draw(vbn-1)--(v2);
\draw(vbn-1)--(vn);
\draw(vbn)--(v3);
\draw(vbn)--(vn-1);

\node[draw, rectangle,rounded corners,label=above:$$,fit= (v1) (vb3)] (){};
\node[draw, rectangle,rounded corners,label=above:$$,fit= (vn-1) (vbn)] (){};
\node () at (4.1,0) {$V$};
\node () at (4.1,-1.5) {$\bar V$};

\node[vertex1,ultra thick] (w1) at (0.5,-2.5) {$w_1$};
\node[vertex1, below of=w1, node distance=1 cm, ultra thick] (wb1) {$\bar w_1$};

\node[vertex1,ultra thick] (wk) at (3.3,-2.5) {$w_k$};
\node[vertex1, below of=wk, node distance=1 cm, ultra thick] (wbk) {$\bar w_k$};
\draw(w1)--(wb1);
\draw(wk)--(wbk);
\node[below of=hbdots, node distance=1.8 cm] (hwdots) {$\hdots$};
\draw (w1)--(vb1);
\draw (w1)--(vb2);
\draw (w1)--(vb3);
\draw (wk)--(vbn-1);
\draw (wk)--(vbn);
\draw (0.3,-3.6) .. controls (-0.8,-4.5) and (-1.2,1.8) .. (0,0.12);
\draw (0.45,-3.7) .. controls (-1.2,-5) and (-1.7,2.8) .. (0.5,0.12);
\draw (0.56,-3.75) .. controls (-1.4,-5.7) and (-2.4,3.6) .. (1,0.12);
\draw (3.5,-3.6) .. controls (5,-5) and (5,2.5) .. (3.5,0.12);
\draw (3.4,-3.7) .. controls (5.5,-5.7) and (5.5,3.5) .. (2.95,0.12);
\end{tikzpicture}\\[-9.5ex]
\caption{The graph $G'=(V',E')$ for \textsc{Ext VC}, Vertices in $U$ are drawn bold.}\label{Fig:W_hard_Ext_VC}
\end{figure}

As a remark, it is obvious to see that considering the parameter $n-|U|$ instead of $|U|$ leads to an $\fpt$-result, as it is sufficient to test if any of the subsets of $V\setminus U$, together with $U$, form a minimal vertex cover. However, these algorithms are quite trivial and hence not further studied here.
The same reasoning shows:

\begin{remark}
\textsc{Ext IS} with standard parameter is in $\fpt$. 
\end{remark}


\begin{theorem}\label{thm-fpt-VC-planar}
\textsc{Ext VC} with standard parameter is in $\fpt$ on planar graphs.
\end{theorem}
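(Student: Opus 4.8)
The plan is to combine Theorem~\ref{caract_Ext_VC} with the standard Baker-style / bidimensionality machinery for planar graphs. By Theorem~\ref{caract_Ext_VC}, $(G,U)$ is a \yes-instance of \textsc{Ext VC} if and only if the graph $G[N_G[U]]$ has an independent dominating set contained in $N_G[U]\setminus U$. The crucial observation is that the relevant graph $G[N_G[U]]$ is governed by $|U|=k$: since $G$ is planar, every vertex of $U$ has degree $\Oh(1)$ only on average, so $|N_G[U]|$ need not be bounded by a function of $k$; however, high-degree vertices of $U$ can be handled by a reduction rule, or alternatively one works directly with the ``annotated'' domination problem on $G[N_G[U]]$. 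I would first argue that we may assume every vertex of $U$ has bounded degree: if some $u\in U$ has more than, say, a constant number of neighbours outside $U$, one guesses which of its neighbours will be the ``private'' vertex dominating it — but this branches into $d(u)$ cases, which is not $\fpt$ in general, so instead the cleaner route is to keep the whole instance and exploit planarity of $G[N_G[U]]$ directly.

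The key structural point is that $N_G[U]\setminus U$, the set from which the independent dominating set $S'$ must be drawn, must dominate $U$; hence $U\subseteq N_G[S']$ and $S'\subseteq N_G(U)$. So the ``interesting'' part of the graph is $G[U\cup N_G(U)]$, and $U$ is a dominating set of this graph of size $k$. A planar graph that admits a dominating set of size $k$ has treewidth $\Oh(\sqrt{k})$ (this is the classical domination-bidimensionality bound, obtainable in polynomial time together with the corresponding tree decomposition). Thus $\mathrm{tw}(G[N_G[U]]) = \Oh(\sqrt{k})$. Now invoke the dynamic-programming algorithm already mentioned in the excerpt (the $\Oh^*(c^t)$ DP for \textsc{Ext VC} on graphs of treewidth $t$, or equivalently a DP for annotated independent dominating set on $G[N_G[U]]$): running it on $G[N_G[U]]$ with the annotation ``$S'$ must avoid $U$'' decides condition $(iii)$ of Theorem~\ref{caract_Ext_VC} in time $\Oh^*(c^{\Oh(\sqrt k)})=\Oh^*(2^{\Oh(\sqrt k)})$, which is $\fpt$ in $k$.

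Putting the steps in order: (1) reduce to the instance $G[N_G[U]]$ via Theorem~\ref{caract_Ext_VC}, so that the goal becomes finding an independent dominating set of $G[N_G[U]]$ inside $N_G[U]\setminus U$; (2) observe $U$ is a size-$k$ dominating set of the planar graph $G[N_G[U]]$, hence its treewidth is $\Oh(\sqrt k)$, and compute a tree decomposition of that width in polynomial time; (3) run the (annotated) treewidth DP on $G[N_G[U]]$ to decide the existence of the required independent dominating set. The main obstacle I expect is step (2)/(3) bookkeeping: one must be careful that the treewidth bound really applies to $G[N_G[U]]$ (it does, because it is an induced subgraph of the planar graph $G$ and contains the dominating set $U$ of size $k$), and that the DP can enforce the extra constraint $S'\cap U=\emptyset$ together with ``$S'$ independent and dominating'' — but both are routine table augmentations of the DP already sketched in the paper, so no genuinely new idea is needed beyond the bidimensionality bound. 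An alternative to the explicit treewidth route, also worth mentioning, is a direct Baker-layering argument, but the domination-treewidth bound is the shortest path given what the paper has already set up.
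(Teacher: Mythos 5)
Your proposal is correct, but it takes a genuinely different (and in fact stronger) route than the paper. Both arguments share the same first step — by Theorem~\ref{caract_Ext_VC} one restricts attention to the planar induced subgraph $G'=G[N_G[U]]$ and decides the annotated independent-dominating-set condition there by a treewidth DP — but they diverge in how the treewidth of $G'$ is bounded. The paper argues combinatorially and elementarily: every vertex of $G'$ lies within distance one of $U$, so each connected component of $G'$ has diameter at most $3|U|$, hence is (at most) $3|U|$-outerplanar, hence has treewidth at most $9|U|$ by Bodlaender's bound; this yields a running time of $\Oh^*(2^{\Oh(|U|)})$. You instead observe that $U$ is a dominating set of size $k=|U|$ in the planar graph $G'$ and invoke the domination--treewidth (bidimensionality) bound, giving $\mathrm{tw}(G')=\Oh(\sqrt{k})$ and a constructive tree decomposition, so your DP runs in $\Oh^*(2^{\Oh(\sqrt{k})})$. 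Your version thus buys a subexponential dependence on the parameter, at the cost of relying on a heavier external result (the Alber et al.\ planar-domination treewidth bound) where the paper needs only the diameter-to-outerplanarity observation; both correctly certify membership in $\fpt$, and the bookkeeping you flag in steps (2)--(3) — that $G'$ is an induced subgraph of a planar graph and that the DP can enforce $S'\cap U=\emptyset$ — is indeed routine. The exploratory detour in your first paragraph about bounding degrees of vertices in $U$ is a dead end, as you yourself note, and can simply be dropped.
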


\begin{proof} Let $(G,U)$ be an instance of \textsc{Ext VC} such that $G$ is planar.
By Theorem~\ref{caract_Ext_VC}, it suffices to solve  \textsc{Ext VC} on $(G',U)$, where $G'$ is the graph induced by $N_G[U]$. Clearly, $G'$ is also planar.
Moreover, the diameter of each connected component of $G'$ is upper-bounded by $3|U|$. Therefore, 
$G'$ is (at most) $3|U|$-outerplanar and hence according to~\cite{Bod96},  the treewidth of $G'$ is at most~$9|U|$. Our previous remarks show that  \textsc{Ext VC} can be solved in time $\Oh^*(2^{\Oh(|U|)})$.
\end{proof}


\section{Chordal and Circular-arc graphs} \label{sec:chordal}
An undirected graph $G=(V,E)$ is {\em chordal} iff each cycle of $G$ with a length at least four has a chord  (an edge linking two non-consecutive vertices of the cycle) and $G$ is {\em circular-arc} if it is the
intersection graph of a collection of $n$ arcs  around a circle. We will need the following problem definition.
\begin{center}
\fbox{\begin{minipage}{.95\textwidth}
\noindent{\textsc{Minimum Independent Dominating Set} (\textsc{MinISDS} for short)}\\\nopagebreak
{\bf Input:} A graph $G=(V,E)$. \\\nopagebreak
{\bf Solution:} Subset of vertices $S\subseteq V$ which is independent and dominating. \\\nopagebreak
{\bf Output:}  Solution $S$ that minimizes  $|S|$.
\end{minipage}}
\end{center}

\textsc{Weighted Minimum Independent Dominating Set} (or 
\textsc{WMinISDS} for short) corresponds to the vertex-weighted variant of  \textsc{MinISDS}, where each vertex $v\in V$ has a non-negative weight $w(v)\geq 0$ associated to it and the goal consists in minimizing $w(S)=\sum_{v\in S}w(v)$.
If $w(v)\in \{a,b\}$ with $0\leq a<b$, the weights are called {\em bivaluate},
and $a=0$ and $b=1$ corresponds to {\em binary weights}.

\begin{remark}\label{rem-WISDS}
\textsc{MinISDS} for chordal graphs has been studied
in \cite{Farber82ORL}, where it is shown that the restriction to binary weights is solvable in polynomial-time. Bivalued \textsc{MinISDS} with $a>0$ however is already $\np$-hard on chordal graphs, see~\cite{Chang04}. \textsc{WMinISDS} (without any restriction on the number of distinct weights) is also polynomial-time solvable in circular-arc graphs \cite{Chang98}. 
\end{remark}

\LV{Using the mentioned polynomial-time result of \textsc{binary independent dominating set} on chordal graphs \cite{Farber82ORL} and circular-arc graphs \cite{Chang98}, we deduce:}

\begin{corollary}\label{cor:chordal_Ext_VC}
\textsc{Ext VC} is polynomial-time decidable in chordal and in circular-arc graphs.
\end{corollary}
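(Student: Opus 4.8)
The plan is to reduce \textsc{Ext VC} to a single instance of the (binary-weight) \textsc{MinISDS} problem via the combinatorial characterization of Theorem~\ref{caract_Ext_VC}, and then to invoke the known polynomial-time algorithms for independent dominating set on chordal graphs~\cite{Farber82ORL} and circular-arc graphs~\cite{Chang98}. Concretely, given an instance $(G,U)$ of \textsc{Ext VC}, first I would form the graph $G'=G[N_G[U]]$. By Theorem~\ref{caract_Ext_VC}, $(G,U)$ is a \yes-instance of \textsc{Ext VC} if and only if $G'$ has an independent dominating set $S'$ with $S'\subseteq N_G[U]\setminus U$. So the task becomes: decide whether $G'$ admits an independent dominating set that avoids all vertices of $U$.

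The main step is to encode the constraint ``$S'$ must avoid $U$'' into a \emph{plain} (binary-weight) independent dominating set question, so that the cited algorithms apply directly. The cleanest way is to use weights: assign weight $0$ to every vertex of $N_G[U]\setminus U$ and weight $1$ to every vertex of $U$, and compute a minimum-weight independent dominating set $S^\ast$ of $G'$ using the binary-weight algorithm for chordal graphs in~\cite{Farber82ORL} (resp.\ the algorithm for circular-arc graphs in~\cite{Chang98}; note $G'$ inherits chordality, and one checks that the induced subgraph $G[N_G[U]]$ of a circular-arc graph is again circular-arc). Then $(G,U)$ is a \yes-instance of \textsc{Ext VC} if and only if $w(S^\ast)=0$, since $S^\ast$ has weight $0$ exactly when some independent dominating set of $G'$ lies entirely in $N_G[U]\setminus U$. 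One should double-check the degenerate cases (e.g.\ if some $u\in U$ satisfies $N_G[u]\subseteq U$, then no such $S'$ can dominate $u$, and indeed every independent dominating set of $G'$ must contain $u$, so $w(S^\ast)\geq 1$, correctly yielding \no); this is consistent with Remark~\ref{obsExt_vc} and the reasoning inside the proof of Theorem~\ref{caract_Ext_VC}.

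Assembling the pieces: constructing $G'$ and the weight function is polynomial time; the chordal/circular-arc membership of $G'$ is preserved under taking induced subgraphs; and~\cite{Farber82ORL,Chang98} (cf.\ Remark~\ref{rem-WISDS}) solve binary-weight \textsc{MinISDS} in polynomial time on these classes. Hence \textsc{Ext VC} is polynomial-time decidable on chordal and on circular-arc graphs. The one point requiring genuine care is the reduction of the ``avoid $U$'' constraint to a form the cited algorithms literally handle: the binary-weight formulation sidesteps any need for the $\np$-hard bivalued case $a>0$ of Remark~\ref{rem-WISDS}, so it is important to phrase the reduction with weights in $\{0,1\}$ rather than, say, deleting $U$ outright (which would change the domination requirement on vertices of $U$).
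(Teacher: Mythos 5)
Your proposal is correct and follows essentially the same route as the paper: restrict to $G[N_G[U]]$, assign binary weights ($1$ on $U$, $0$ elsewhere), solve weighted \textsc{MinISDS} via \cite{Farber82ORL} resp.\ \cite{Chang98}, and test whether the optimum weight is $0$, using Theorem~\ref{caract_Ext_VC} for the equivalence. Your added checks (closure of the graph classes under induced subgraphs, and the degenerate case $N_G[u]\subseteq U$) are sensible details the paper leaves implicit.
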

\begin{proof}
By Remark~\ref{rem-WISDS}, we can find, within polynomial-time, an independent dominating set $S^*$ minimizing $w(S^*)=\sum_{v\in S^*}w(v)$  among the independent dominating sets of a weighted chordal graph or circular-arc graph $(G,w)$ where $G=(V,E)$ and  $\forall v\in V$, $w(v)\in\{0,1\}$.

Let $(G,U)$ be an instance of \textsc{Ext VC} where $G=(V,E)$ is a chordal graph (resp., a circular-arc graph). We will apply the result of \cite{Farber82ORL} (resp., \cite{Chang98})  for $(G^\prime,w)$,
where $G^\prime$ is the subgraph of $G$ induced by $N_G[U]$ and $w(v)=1$ if $v\in U$ and $w(v)=0$ for $v\in N_G[U]\setminus U$. Obviously,
$(G^\prime,w)$ is a binary-weighted chordal graph (resp., circular-arc graph). So, an optimal independent dominating set $S^*$ of $(G^\prime,w)$ has a weight $0$ iff $S^*\subseteq N_G[U]\setminus U$ is a maximal independent set of $G^\prime$, otherwise $w(S^*)\geq 1$. Using
 Theorem~\ref{caract_Ext_VC},  the result follows. 
\end{proof}

Farber's algorithm \cite{Farber82ORL}  runs in linear-time and is based on the resolution of a linear programming using  primal and dual programs. 
Yet, it would be nice to find a (direct) combinatorial linear-time algorithm for chordal and circular-arc graphs, as this is quite common in that area. We give a first step in this direction by presenting a characterization of \yes-instances of {\sc Ext VC} on trees.\par
Consider a  tree $T=(V,E)$  and a set of vertices $U$. A subtree  $T^\prime=(V^\prime,E^\prime)$ (ie., a connected induced subgraph) of a tree $T$ is called \emph{edge full with respect to $(T,U)$} if $U\subseteq V^\prime$,
$d_{T^\prime}(u)=d_{T}(u)$ for all $u\in U$.
A subtree $T^\prime=(V^\prime,E^\prime)$  is \emph{induced edge full with respect to $(T,U)$} if
it is edge full with respect to $(T,U\cap V^\prime)$.\par
For our characterization, we use a coloring of vertices with colors black and white. If $T=(V,E)$ is a tree and $X\subseteq V$, we use $T[X\to\mathrm{black}]$ to denote the colored tree where exactly the vertices from $X$ are colored black.
Further define the following class of black and white colored trees $\mathcal{T}$,
inductively as follows. Base case: A tree with a single vertex $x$ belongs to  $\mathcal{T}$ if $x$ is black. Inductive step: If $T\in \mathcal{T}$, the tree resulting from the addition of a $P_3$ (3 new vertices that form a path $p$) where one endpoint of $p$ is black, the two other vertices are white and the white endpoint of $p$ is linked to a black vertex of $T$ is in $\mathcal{T}$.\par
\begin{theorem}\label{Tree_carac_Ext_VC}\SV{$(*)$}
Let $T=(V,E)$ be a tree and $U\subseteq V$ be an independent set. Then, $(T,U)$ is a \yes-instance of \textsc{Ext VC} \SV{iff}\LV{if and only if} there is no subtree $T^\prime=(V^\prime,E^\prime)$ of $T$ that is induced edge full with respect to $(T,U)$ such that $T^\prime[U\to\mathrm{black}]\in\mathcal{T}$.
\end{theorem}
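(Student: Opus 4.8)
The plan is to pass to the reformulation of Theorem~\ref{caract_Ext_VC}: $(T,U)$ is a \yes-instance of \textsc{Ext VC} iff $G_0:=T[N_T[U]]$ admits an independent dominating set contained in $N_T[U]\setminus U$ (call such a set \emph{$U$-avoiding}). Since all neighbours of a vertex of $U$ lie in $N_T[U]$, any subtree $T'$ witnessing the obstruction has all of its white vertices (those not in $U$) adjacent to black ones and is therefore a subtree of $G_0$, and moreover $d_{G_0}(u)=d_T(u)$ for $u\in U$; so the statement is equivalent to the corresponding statement about $U$-avoiding independent dominating sets of $G_0$. I would also record the elementary structural description of $\mathcal{T}$: its members are exactly the trees obtained from a tree $B$ by subdividing every edge twice and colouring the vertices of $B$ black and the subdivision vertices white; equivalently, $T'[W\to\mathrm{black}]\in\mathcal{T}$ (for $W$ the black set) iff every non-black vertex of $T'$ has degree $2$ and contracting the length-two white paths yields a tree on the black vertices.

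\textbf{\yes-instance $\Rightarrow$ no obstruction.} Suppose $S'$ is a $U$-avoiding independent dominating set of $G_0$ and, for contradiction, that $T'$ is an obstruction with black skeleton $B$. If $|B|=1$ the unique vertex $u$ has $d_T(u)=d_{T'}(u)=0$, is not in $S'$, and has no neighbour, so it is undominated. If $|B|\ge 2$, root $B$ at a vertex $r$; for a non-root black vertex $u$ let $u-a_u-b_u-p(u)$ be the length-three path in $T'$ to its parent ($a_u,b_u\notin U$, $a_u$ adjacent to $u$). By induction on the distance to $r$, working up from the leaves of $B$, I would show $a_u\in S'$ and hence $b_u\notin S'$: a leaf of $B$ is by edge-fullness a leaf of $T$, so its unique neighbour $a_u$ must be taken into $S'$ to dominate it; for an internal $u$, the white neighbours of $u$ coming from the edges to its children are excluded from $S'$ by induction, and since $d_T(u)=d_{T'}(u)$ these together with $a_u$ exhaust $N_T(u)$, forcing $a_u\in S'$. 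Applied to the children of $r$ this says every neighbour of $r$ avoids $S'$ and $r\notin S'$, so $r$ is undominated — the desired contradiction.

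\textbf{No obstruction $\Rightarrow$ \yes-instance.} I would prove the contrapositive, by induction on the number of vertices, in the form: if a forest $F$ and an independent $W\subseteq V(F)$ admit no $W$-avoiding independent dominating set, then $F$ has a subtree $F'$ with $d_{F'}(w)=d_F(w)$ for all $w\in W\cap V(F')$ and $F'[W\cap V(F')\to\mathrm{black}]\in\mathcal{T}$; specialising to $(G_0,U)$ gives the theorem. Passing to a bad component we may assume $F$ is a tree; if $|V(F)|=1$ then necessarily $W=V(F)$ and $F'=F$. Otherwise pick any leaf $\ell$ with neighbour $w$ and reduce: (I) if $\ell\in W$ then $w\notin W$ is forced into every solution and $(F,W)$ is feasible iff $F-N_F[w]$ is feasible; (II.a) if $\ell\notin W$ and $w\in W$ then the set $L_w$ of leaf-neighbours of $w$ is forced into every solution and $(F,W)$ is feasible iff $F-(L_w\cup\{w\})$ is feasible; (II.b) if $\ell\notin W$ and $w\notin W$ then $(F,W)$ is feasible iff $F-\ell$ is feasible. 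In each case the reduced instance is again a no-instance, so by induction a component of it contains an obstruction $F''$. In cases (II.a) and (II.b) no vertex of $W\cap V(F'')$ can be adjacent to a deleted vertex (the deleted vertices are $L_w$ together with $w\in W$, resp.\ the non-$W$ leaf $\ell$), so $F''$ is already an obstruction for $F$. In case (I), a vertex $u^*\in W\cap V(F'')$ may have lost one neighbour $z^*\in N_F(w)$; because $F$ is a tree at most one such $u^*$ exists, and re-attaching the path $u^*-z^*-w-\ell$ to $F''$ gives the obstruction for $F$ — this is precisely the inductive step of the definition of $\mathcal{T}$, as $z^*,w\notin W$ while $\ell\in W$, and the degree conditions are restored because $u^*$ regains its lost neighbour and $\ell$ is a leaf.

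The routine parts are the structural description of $\mathcal{T}$, the book-keeping of the degree and colour conditions, and the easy direction. The delicate points are: the acyclicity argument that in case (I) at most one vertex of the sub-obstruction loses a neighbour in $N_F(w)$ (this is what makes a single path-attachment sufficient, and ensures the attached path carries exactly the two white vertices required by $\mathcal{T}$); and the non-trivial implication of the case (II.b) equivalence, where from a solution of $F$ one must build a solution of $F-\ell$, which needs a small case analysis (on whether $\ell$ lies in the solution and whether $w$ has another solution-neighbour) using crucially that $w\notin W$, so that $w$ may be moved into the solution. I expect these two points — together with phrasing the three reductions so that the obstruction lifts with the correct white-vertex count — to be the main obstacle.
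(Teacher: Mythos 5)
Your proof is correct, and it refines rather than merely reproduces the paper's argument. Both you and the paper prove the forward direction by showing that an obstruction forces an unsatisfiable chain of domination requirements, and the backward direction by induction on the number of vertices, lifting an obstruction of a smaller instance by attaching a black--white--white--black path (the inductive step of $\mathcal{T}$). The organization differs in both halves. For the forward direction the paper propagates informally from one black leaf to the next (``repeating the process \dots we end by a leaf which is black and not dominated''), whereas you root the black skeleton $B$ and run a leaves-to-root induction showing $a_u\in S'$ for every non-root black $u$, reaching the contradiction at the root; this is tighter and handles the $|B|=1$ case explicitly. For the backward direction the paper stays in the vertex-cover formulation, selects a deepest leaf of $U$, splits the tree at the edge above that leaf's grandfather, and recombines minimal vertex covers of the two pieces; you instead work entirely in the independent-dominating-set reformulation from Theorem~\ref{caract_Ext_VC}, prove a more general statement for arbitrary forests (which is genuinely needed, since your reductions do not preserve the form $T[N_T[U]]$), and drive the induction with three local leaf-reduction rules --- essentially the reduction rules of Section~\ref{sec:bounded degree}, as Remark~\ref{rem-trees} anticipates. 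What your route buys is a uniform, locally checkable induction: each feasibility-preserving rule is verified in isolation, and the two delicate points you flag are indeed the right ones and both go through --- the acyclicity argument shows that at most one black vertex $u^*$ of the sub-obstruction loses at most one neighbour $z^*\in N_F(w)$ (two lost neighbours would close a cycle through $w$ or through the connected sub-obstruction), the reattached path $u^*\!-\!z^*\!-\!w\!-\!\ell$ is induced with the correct colours since $z^*,w\notin U$ by independence of $U$, and the case (II.b) repair (moving $w$ into the solution when $\ell$ was its only solution-neighbour) uses $w\notin U$ exactly as you say.
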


\section{Price of extension}\label{sec:PoE}
Considering the possibility that some fixed set $U$ might not be extendible to any minimal solution, one might ask how wrong $U$ is as a fixed choice for an extension problem. One idea to evaluate this, is to ask how much $U$ has to be altered when aiming for a minimal solution. Described differently for our extension problems at hand, we want to discuss  how many vertices of $U$ have to be deleted for  \textsc{Ext VC} (added for  \textsc{Ext IS}) in order to arrive at a \yes-instance of the extension problem. The magnitude of how much $U$ has to be altered can be seen as the price that has to be paid to ensure extendibility.\par
In order to formally discuss this concept, we consider according optimization problems. From an instance $I=(G,U)$ of \textsc{Ext VC} or \textsc{Ext IS}, we define two new maximization (resp., minimization) $\npo$ problems, respectively.

\begin{center}
\fbox{\begin{minipage}{.95\textwidth}
\noindent{\textsc{Max Ext VC}}\\\nopagebreak
{\bf Input:} A graph $G=(V,E)$, a set of vertices $U \subseteq V$. \\\nopagebreak
{\bf Solutions:} Minimal vertex cover $S$ of $G$. \\\nopagebreak
{\bf Output:}  Solution $S$ that maximizes $|S\cap U|$.
\end{minipage}}
\end{center}

\begin{center}
\fbox{\begin{minipage}{.95\textwidth}
\noindent{\textsc{Min Ext IS}}\\\nopagebreak
{\bf Input:} A graph $G=(V,E)$, a set of vertices $U \subseteq V$. \\\nopagebreak
{\bf Solutions:} Maximal independent set $S$ of $G$. \\\nopagebreak
{\bf Output:}  Solution $S$ that minimizes  $|U|+|S\cap (V\setminus U)|$.
\end{minipage}}
\end{center}

\noindent For $\Pi=$\textsc{Max Ext VC} or \textsc{Min Ext IS}, we denote by $opt_{\Pi}(I,U)$  the value of an optimal solution of \textsc{Max Ext VC} or
\textsc{Min Ext IS}, respectively. Since for both of them, $opt_{\Pi}(I,U)=|U|$ iff $(G,U)$ is a \yes-instance of \textsc{Ext VC} or \textsc{Ext IS}, respectively, we deduce that \textsc{Max Ext VC} and
\textsc{Min Ext IS} are $\np$-hard as soon as \textsc{Ext VC} and \textsc{Ext IS} are  $\np$-complete. \SV{Alternatively, we could write}\LV{ 

Notice that alternatively these two optimal quantities can be expressed as}

\noindent
$opt_{\textsc{Max Ext VC}}(G,U)=\arg\max\{U^\prime\subseteq U\suchthat (G,U^\prime) \text{ is a \yes-instance of }\textsc{Ext VC}\}$\ and\LV{\\}
$opt_{\textsc{Min Ext IS}}(G,U)=\arg\min\{U^\prime\supseteq U\suchthat (G,U^\prime) \text{ is a \yes-instance of }\textsc{Ext IS}\}$.

Similarly to Remark~\ref{rem-Ext VC=IS}, one observes that
the decision variants of  \textsc{Max Ext VC} and \textsc{Min Ext IS} are\LV{indeed completely} equivalent, more precisely:

\begin{equation}\label{eq:Max Ext VC_Min Ext IS}
opt_{\textsc{Max Ext VC}}(G,U)+opt_{\textsc{Min Ext IS}}(G,U^\prime)=|V|\,.
\end{equation}

We want to discuss polynomial-time approximability of  \textsc{Max Ext VC} and \textsc{Min Ext IS}.
Considering \textsc{Max Ext VC} on $G=(V,E)$ and the particular subset $U=V$ (resp., \textsc{Min Ext IS} with $U=\emptyset$), we obtain
two well known optimization problems  called {\sc upper vertex cover} (\textsc{UVC} for short, also called the {\sc  maximum
minimal vertex cover} problem) and  the {\sc maximum minimal independent set} problem (equivalently \textsc{ISDS} for short). In \cite{Manlove99}, the computational complexity of these two problems have been studied (among 12 problems), and  (in)approximability results are given in \cite{MishraS01,BoriaCP15}
for \textsc{UVC} and in \cite{Halldorsson93a} for \textsc{ISDS} where  lower bounds of $O(n^{\varepsilon-1/2})$ and $O(n^{1-\varepsilon})$,  respectively, for graphs on $n$ vertices are given for every $\varepsilon>0$. Analogous bounds can be also derived depending on the maximum degree~$\Delta$ of the graph. In particular, we deduce:

\begin{corollary}\label{CorInapprox:Min Ext IS}
For any constant $\varepsilon > 0$ and any $\rho\in\Oh\left(n^{1-\varepsilon}\right)$ and $\rho\in\Oh\left(\Delta^{1-\varepsilon}\right)$, there is no  polynomial-time  $\rho$-approximation for \textsc{Min Ext IS} on general graphs of $n$ vertices and maximum degree $\Delta$,  even when $U=\emptyset$, unless $\ptime=\np$. 
\end{corollary}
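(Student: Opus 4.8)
The plan is to recognize the corollary as a direct consequence of the identification made just above: \textsc{Min Ext IS} restricted to instances with $U=\emptyset$ \emph{is} the problem \textsc{ISDS}. First I would spell this out explicitly: for $U=\emptyset$ the objective $|U|+|S\cap(V\setminus U)|$ simplifies to $|S|$ and the feasible solutions are still precisely the maximal independent sets of $G$, so $opt_{\textsc{Min Ext IS}}(G,\emptyset)$ is exactly the minimum cardinality of a maximal independent set of $G$; since maximal independent sets coincide with independent dominating sets, this equals $opt_{\textsc{ISDS}}(G)$. The point is that this yields an identity reduction from \textsc{ISDS} to \textsc{Min Ext IS} — the graph is copied verbatim and the optimum value is unchanged — so an $n$-vertex instance of maximum degree $\Delta$ is mapped to an $n$-vertex instance of maximum degree $\Delta$.

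Next I would invoke the inapproximability of \textsc{ISDS} recalled above, namely Halld\'orsson's lower bounds~\cite{Halldorsson93a}: unless $\ptime=\np$, no polynomial-time algorithm approximates \textsc{ISDS} within a ratio $\rho\in\Oh(n^{1-\varepsilon})$ on $n$-vertex graphs, nor within $\rho\in\Oh(\Delta^{1-\varepsilon})$ when measured in terms of the maximum degree~$\Delta$, for any constant $\varepsilon>0$. Then I would argue by contradiction in the usual way: a supposed polynomial-time $\rho$-approximation for \textsc{Min Ext IS} would, when fed the (unchanged) instances with $U=\emptyset$, be a polynomial-time $\rho$-approximation for \textsc{ISDS} with the same $n$ and $\Delta$, contradicting the cited bounds. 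Since $U=\emptyset$ is merely a special case of \textsc{Min Ext IS}, the lower bound holds for the general problem as claimed.

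I do not expect a genuine obstacle here: beyond the cited hardness of \textsc{ISDS}, all that is needed is the routine bookkeeping that the identity reduction preserves the number of vertices, the maximum degree, and the value of the optimum, so that the approximation factor carries over without any loss. If one wanted a fully self-contained statement, the only extra work would be to recall (or re-derive) Halld\'orsson's construction itself, but that is exactly what the reference~\cite{Halldorsson93a} supplies.
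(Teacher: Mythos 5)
Your proposal is correct and matches the paper's (implicit) argument exactly: the corollary is deduced from the observation in the preceding paragraph that \textsc{Min Ext IS} with $U=\emptyset$ coincides with \textsc{ISDS} (minimum maximal independent set), so Halld\'orsson's $\Oh(n^{1-\varepsilon})$ and $\Oh(\Delta^{1-\varepsilon})$ lower bounds carry over via the identity reduction, which preserves $n$, $\Delta$ and the optimum value. Nothing is missing.
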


Now, we strengthen the above mentioned lower bounds of $O(n^{\varepsilon-1/2})$ and  $O(\Delta^{\varepsilon-1/2})$ for the inapproximability of \textsc{Max Ext VC}.

\begin{theorem}\label{Inapprox:Max Ext VC} \SV{$(*)$}
\textsc{Max Ext VC} is as hard as \textsc{MaxIS} to approximate in general graphs even if the set~$U$ of required vertices forms an independent set.
\end{theorem}

\newcommand{\proofofInapproxMaxExtVC}{\noindent Let $S$ be a maximal independent set of $G$ of size $k$; then $S^\prime=\{v \colon v\notin S\} \cup \{v^\prime\colon v\in S\}$ is a  minimal vertex cover of $H$ containing $k$ vertices from $U$. Conversely, let $S^\prime$ be a minimal vertex cover of $H$ extending $U$, with $U^\prime=U\cap S^\prime$. By construction, the set $S^\prime\setminus U^\prime$ is a vertex cover of  $G$ and then $S=V\setminus S^\prime$ is an independent set of $G$ of size $|U^\prime|$. In particular, we deduce $\alpha(G)=opt_{\textsc{Max Ext VC}}(H,U)$.}

\SV{\noindent\emph{Sketch.}}
\LV{\begin{proof}
The proof is based on a simple reduction from \textsc{MaxIS}. 
}Let $G=(V,E)$ be an instance of \textsc{MaxIS}. Construct\LV{ the graph} $H=(V_H,E_H)$ from $G$, where vertex set $V_H$ contains two copies of $V$, \LV{denoted by }$V$ and $V^\prime= \{v^\prime\colon v\in V\}$. \LV{The edge set $E_H$ contains $E$ together with $vv^\prime$ for all $v\in V$, formally}\SV{Let} $E_H=E\cup \{vv^\prime\colon v\in V\}$. Consider $I=(H,U)$ as instance of \textsc{Max EXT VC}, where the required vertex subset is given by $U=V^\prime$.
We claim: $H$ has a minimal vertex cover containing $k$ vertices from $U$ iff $G$ has an independent set of size $k$.
\LV{\proofofInapproxMaxExtVC
\end{proof}}


\SV{\smallskip}
Using the strong inapproximability results for \textsc{MaxIS} given in \cite{Trevisan01,Zuckerman07}, observing $\Delta(H)=\Delta(G)+1$ and
$|V_H|=2|V|$, we deduce the following result.

\begin{corollary}\label{CorInapprox:Max Ext VC}
For any constant $\varepsilon > 0$ and any $\rho\in\Oh\left({\Delta^{1-\varepsilon}}\right)$ and $\rho\in\Oh\left({n^{1-\varepsilon}}\right)$, there is no  polynomial-time $\rho$-approximation for \textsc{Max Ext VC}  on general graphs of $n$ vertices and maximum degree $\Delta$, unless $\ptime=\np$. 
\end{corollary}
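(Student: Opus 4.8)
The statement to prove is Corollary~\ref{CorInapprox:Max Ext VC}, which follows by plugging known inapproximability bounds for \textsc{MaxIS} into the approximation-preserving reduction established in Theorem~\ref{Inapprox:Max Ext VC}.

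\medskip

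The plan is to observe that Theorem~\ref{Inapprox:Max Ext VC} gives, for every instance $G=(V,E)$ of \textsc{MaxIS}, an instance $H=(V_H,E_H)$ of \textsc{Max Ext VC} with $U=V'$ (where $V'$ is the second copy of $V$) such that $\alpha(G)=opt_{\textsc{Max Ext VC}}(H,U)$, and moreover the correspondence between solutions is explicit: a maximal independent set $S$ of $G$ of size $k$ yields a minimal vertex cover of $H$ with $|S'\cap U|=k$, and conversely a minimal vertex cover $S'$ of $H$ extending $U$ with $|U'\cap S'|=k$ yields an independent set of $G$ of size $k$. Hence any polynomial-time algorithm producing a minimal vertex cover $S'$ of $H$ with $|S'\cap U|\ge \rho\cdot opt_{\textsc{Max Ext VC}}(H,U)$ would, after the (polynomial-time) back-translation, produce an independent set of $G$ of size at least $\rho\cdot\alpha(G)$, i.e.\ a $\rho$-approximation for \textsc{MaxIS}. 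So a $\rho$-approximation for \textsc{Max Ext VC} transfers verbatim to a $\rho$-approximation for \textsc{MaxIS}.

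\medskip

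Next I would record the parameter bookkeeping already noted before the corollary: $|V_H|=2|V|$ and $\Delta(H)=\Delta(G)+1$. Therefore, writing $n=|V_H|$ and $\Delta=\Delta(H)$, a bound of the form $\rho\in\Oh(n^{1-\varepsilon})$ (resp.\ $\rho\in\Oh(\Delta^{1-\varepsilon})$) for \textsc{Max Ext VC} translates, up to adjusting the constant and $\varepsilon$, into the same kind of bound in terms of $|V|$ (resp.\ $\Delta(G)$) for \textsc{MaxIS}: if $n^{1-\varepsilon}=(2|V|)^{1-\varepsilon}\le 2\,|V|^{1-\varepsilon}$ and similarly $(\Delta(G)+1)^{1-\varepsilon}\le (2\Delta(G))^{1-\varepsilon}$ for $\Delta(G)\ge 1$, so the reduction does not meaningfully weaken the ratio. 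Then I invoke the known strong inapproximability of \textsc{MaxIS}: by~\cite{Trevisan01,Zuckerman07}, for every constant $\varepsilon>0$ there is no polynomial-time $|V|^{1-\varepsilon}$-approximation for \textsc{MaxIS} on $|V|$-vertex graphs unless $\ptime=\np$, and analogously no $\Delta(G)^{1-\varepsilon}$-approximation in terms of the maximum degree. Combining this with the reduction yields exactly the claimed statement: no polynomial-time $\rho$-approximation for \textsc{Max Ext VC} on $n$-vertex graphs of maximum degree $\Delta$ when $\rho\in\Oh(n^{1-\varepsilon})$ or $\rho\in\Oh(\Delta^{1-\varepsilon})$, unless $\ptime=\np$.

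\medskip

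The argument is essentially routine once Theorem~\ref{Inapprox:Max Ext VC} is in hand; the only point requiring a little care — and hence the ``main obstacle'' — is checking that the reduction is genuinely approximation-preserving (not merely preserving optimal values): one must verify that both directions of the solution correspondence in the proof of Theorem~\ref{Inapprox:Max Ext VC} are polynomial-time computable and size-preserving, so that a near-optimal \textsc{Max Ext VC} solution maps back to a near-optimal \textsc{MaxIS} solution with the \emph{same} ratio, and that the linear blow-ups $|V_H|=2|V|$, $\Delta(H)=\Delta(G)+1$ are absorbed into the $\Oh(\cdot)$ and the choice of $\varepsilon$. Both are immediate from the construction, so the corollary follows.
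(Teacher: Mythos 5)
Your proposal is correct and follows essentially the same route as the paper: the paper likewise derives the corollary by combining the approximation-preserving reduction of Theorem~\ref{Inapprox:Max Ext VC} (with the exact correspondence $\alpha(G)=opt_{\textsc{Max Ext VC}}(H,U)$) with the bounds $|V_H|=2|V|$ and $\Delta(H)=\Delta(G)+1$ and the known inapproximability of \textsc{MaxIS} from~\cite{Trevisan01,Zuckerman07}. Your extra care in checking that the solution correspondence is polynomial-time and ratio-preserving in both directions is exactly the (implicit) content of the paper's one-line derivation.
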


\LV{\subsection{Positive results on the price of extension}}
In contrast to the hardness results on these restricted graph classes from the previous sections, we find that restriction to bipartite graphs or graphs of bounded degree improve approximability of \textsc{Max Ext VC}.
For the following results, we assume w.l.o.g. that the input graph is connected, non-trivial and therefore without isolated vertices, as we can solve our problems separately on each connected component and then combine the results. \SV{By simply selecting the side containing the largest number of vertices from $U$, we can show the following.}

\begin{theorem}\label{theo:PoE_VC_Bip} \SV{$(*)$}
A 2-approximation for \textsc{Max Ext VC} on bipartite graphs can be computed in polynomial time.
\end{theorem}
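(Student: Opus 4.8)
The idea is to exploit the bipartition directly, as the theorem statement hints ("selecting the side containing the largest number of vertices from $U$"). Let $G = (V_l \cup V_r, E)$ be connected and bipartite with no isolated vertices, and let $U \subseteq V$ be given. The key observation is that in a bipartite graph without isolated vertices, each of the two sides $V_l$ and $V_r$ is itself a minimal vertex cover: it is clearly a vertex cover since every edge crosses the bipartition, and it is minimal since every vertex $v$ in, say, $V_l$ has at least one neighbour in $V_r$, which is a private edge because $V_r \cap V_l = \emptyset$. So both $V_l$ and $V_r$ are feasible solutions of \textsc{Max Ext VC}.

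First I would compute the bipartition (linear time), discard isolated vertices, and output whichever of $V_l$, $V_r$ contains more vertices of $U$; call it $S$ and assume w.l.o.g.\ $S = V_l$. Then $|S \cap U| = |V_l \cap U| \geq \tfrac12 (|V_l \cap U| + |V_r \cap U|) = \tfrac12 |U| \geq \tfrac12\, opt_{\textsc{Max Ext VC}}(G,U)$, where the last inequality is immediate because any minimal vertex cover $S^\ast$ satisfies $|S^\ast \cap U| \leq |U|$. Hence $\poe(S) = |S\cap U| / opt_{\textsc{Max Ext VC}}(G,U) \geq \tfrac12$, i.e.\ the algorithm is a polynomial-time (in fact linear-time) $2$-approximation.

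One small point to address carefully is the decomposition into connected components: if $G$ is disconnected one runs the argument on each component and takes the union of the chosen sides. Since $opt_{\textsc{Max Ext VC}}$ is additive over components (a minimal vertex cover of $G$ restricts to a minimal vertex cover on each component, and conversely), and since $|U|$ is also additive, the ratio $\tfrac12$ is preserved under this combination; isolated vertices contribute nothing to any minimal vertex cover and so may simply be removed (if an isolated vertex lies in $U$, then $(G,U)$ has optimum strictly less than $|U|$, and such vertices are harmlessly ignored). There is no real obstacle here — the only thing to get right is the bookkeeping that $V_l, V_r$ really are \emph{minimal} vertex covers, which is exactly where the no-isolated-vertex assumption is used, and the trivial upper bound $opt \leq |U|$.
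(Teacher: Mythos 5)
Your proposal is correct and follows essentially the same route as the paper's proof: both observe that in a bipartite graph without isolated vertices each side of the bipartition is a minimal vertex cover, select the side containing more vertices of $U$, and conclude via $2|V_l\cap U|\ge |U|\ge opt_{\textsc{Max Ext VC}}(G,U)$. The component/isolated-vertex bookkeeping you add is handled in the paper by a preceding general remark, so there is no substantive difference.
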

\newcommand{\proofofPoEVCBip}{\begin{proof}
Let $G=((V=V_l\cup V_r),E)$ and $U\subseteq V$ be an instance of \textsc{Max Ext VC}, where $E$ contains only edges connecting $V_l$ and $V_r$.
Since $V_l$ and $V_r$ are both minimal vertex covers ($G$ is without isolated vertices) and also a partition of $V$, then taking one of them containing the largest number of vertices from $U$  (assume it is $V_l$), we get a 2-approximation, because $2 \times |V_l\cap U| \geq |V\cap U|\geq opt_{\textsc{Max Ext VC}}(G,U)$.
\end{proof}}

\LV{\proofofPoEVCBip}

\begin{theorem}\label{theo:PoE-VC_BoundedDegree}
A ${\Delta}$-approximation for \textsc{Max Ext VC} on graphs of maximum degree $\Delta$ can be computed in polynomial time.
\end{theorem}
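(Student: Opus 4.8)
The plan is to mimic the structure of the bounded-degree result for \textsc{Min Ext IS}/\textsc{Upper VC} together with the characterization in Theorem~\ref{caract_Ext_VC}, exactly as in the proof of Theorem~\ref{theo:PoE-VC_BoundedDegree} sketched (here actually completed) in the excerpt. Given an instance $(G,U)$ of \textsc{Max Ext VC} with $G=(V,E)$ of maximum degree $\Delta$, I would first (by Remark~\ref{obsExt_vc}) assume $U$ is independent, and work with the graph $G'=G[N_G(U)\setminus U]$. Since every vertex of $N_G(U)\setminus U$ has at least one neighbour in $U$, its degree inside $G'$ is at most $\Delta-1$, so by Brooks' Theorem $G'$ is properly $\Delta$-colourable in polynomial time; write $(S_1,\dots,S_\ell)$, $\ell\le\Delta$, for the colour classes, each an independent set of $G$.

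Next I would set $U_i=N_G(S_i)\cap U$ for $1\le i\le\ell$. Each $S_i$ is an independent set of $G$ that dominates $U_i$ and is contained in $N_G[U_i]\setminus U_i$, so by condition $(iii)$ of Theorem~\ref{caract_Ext_VC}, $(G,U_i)$ is a \yes-instance of \textsc{Ext VC}: we can greedily extend $S_i$ to an independent dominating set of $G[N_G[U_i]]$ inside $N_G[U_i]\setminus U_i$, and then (as in the proof of Theorem~\ref{caract_Ext_VC}) to a maximal independent set of $G$ avoiding $U_i$, whose complement is a minimal vertex cover $S^{(i)}$ of $G$ with $U_i\subseteq S^{(i)}$. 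I would then output the best of these, i.e.\ pick $i^*=\arg\max_i|U_i|$ and return $S^{(i^*)}$.

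For the approximation guarantee: every vertex $u\in U$ has a neighbour (since $G$ has no isolated vertices and $U$ is independent, $u$ has a neighbour in $N_G(U)\setminus U$), which lies in some colour class $S_i$, so $u\in U_i$; hence $\bigcup_{i=1}^{\ell}U_i=U$ and $\sum_{i=1}^\ell|U_i|\ge|U|$. Therefore $\Delta\cdot|S^{(i^*)}\cap U|\ge\ell\cdot|U_{i^*}|\ge\sum_{i=1}^\ell|U_i|\ge|U|\ge opt_{\textsc{Max Ext VC}}(G,U)$, giving a $\Delta$-approximation. The only mildly delicate point — and the step I'd expect to need the most care — is verifying that the greedy extensions really produce a \emph{minimal} vertex cover containing $U_i$ (equivalently, a \emph{maximal} independent set avoiding $U_i$); but this is precisely the content worked out in the proof of Theorem~\ref{caract_Ext_VC}, so I would just invoke that construction. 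Note the reduction to independent $U$ via Remark~\ref{obsExt_vc} is harmless here because deleting edges inside $U$ does not change minimality of candidate solutions, hence does not change $opt_{\textsc{Max Ext VC}}$, and does not increase the maximum degree.
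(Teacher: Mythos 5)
Your proposal is correct and follows essentially the same route as the paper's proof: restrict to $G'=G[N_G(U)\setminus U]$, use Brooks' Theorem to obtain at most $\Delta$ colour classes $S_i$, observe via condition $(iii)$ of Theorem~\ref{caract_Ext_VC} that each $U_i=N_G(S_i)\cap U$ is extendible, and return the best one via the averaging bound $\Delta\cdot|U_{i^*}|\ge\sum_i|U_i|\ge|U|\ge opt$. Your added care in justifying $\bigcup_i U_i=U$ (via the reduction to independent $U$, which does not raise the maximum degree) and in spelling out the greedy completion to a genuinely minimal vertex cover fills in details the paper leaves implicit, but does not change the argument.
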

\begin{proof}
Let $G=(V,E)$ be connected of maximum degree $\Delta$, and $U\subseteq V$ be an instance of \textsc{Max Ext VC}. Consider \LV{the graph $G^\prime=(V^\prime,E^\prime)$ induced  by the open neighborhood of $U$, i.e., $V^\prime=N_G(U)\setminus U$; subgraph}\SV{$G'=G[N_G(U)\setminus U]$. }$G^\prime$ is a  graph of maximum degree (at most) $\Delta-1$ and by Brooks's Theorem, we can color it properly with  at most $\Delta$ colors in polynomial time. Let $(S_1,\dots,S_\ell)$ be
such coloring  of $G^\prime$, with $\ell \leq \Delta$.  Let $U_i=N_G(S_i)\cap U$ for $1\leq i\leq \ell$. $S_i$ is an independent set which dominates $U_i$ in $G$ so it can be extended to satisfy $(iii)$ of Theorem~\ref{caract_Ext_VC}, so $(G,U_i)$ is a \yes-instance of \textsc{Ext VC}.
Choosing $U^\prime=\arg\max  |U_i|$ yields a $\Delta$-approximation, since
 $\Delta \times |S^\prime\cap U^\prime|\geq \sum_{i=1}^\ell |U_i|\geq |U|\geq opt_{\textsc{Max Ext VC}}(G,U)$.
\end{proof}

\noindent
Along the lines of Corollary~\ref{cor:chordal_Ext_VC} with more careful arguments, we can prove:

\begin{theorem}\label{theo:PoE-VC_Chordal} \SV{$(*)$}
\textsc{Max Ext VC} can be solved optimally for chordal graphs and circular-arc graphs in polynomial time.
\end{theorem}

\newcommand{\proofofPoEVCChordal}{
Let $(G,U)$ be an instance of \textsc{Max Ext VC} where $G=(V,E)$ is a  chordal graph (resp., a circular-arc graph)
and $U$ is an independent set. We build a weighted graph $G^\prime$ for \textsc{WMinISDS} such that $G^\prime$ is the subgraph of~$G$ induced by $N_G[U]$ and the weights on vertices are given by $w(v)=1$ if $v\in U$ and $w(v)=0$ for $v\in N_G[U]\setminus U$.
We mainly prove the following equality: 

\begin{equation}\label{eq:PoE-VC_Chordal}
opt_{\textsc{WMinISDS}}(G^\prime,w)=|U|-opt_{\textsc{Max Ext VC}}(G,U)
\end{equation}

\noindent Let  $S^\prime$ be an optimal independent dominating set of $G^\prime$ with total weight $k=opt_{\textsc{WMinISDS}}(G^\prime,w)$; this means that $S^\prime$ contains $k$ vertices from $U$. So, by feasibility of $S^\prime$
and according to Theorem~\ref{caract_Ext_VC}, $U^\prime=U\setminus S^\prime$ is a subset of vertices which can be extended into a feasible solution for \textsc{Max Ext VC} because  $S^\prime\setminus U$ is an independent set which dominates  $U^\prime$. Hence, $opt_{\textsc{WMinISDS}}(G^\prime,w)\geq |U|-opt_{\textsc{Max Ext VC}}(G,U)$. Conversely, let $S$ be a minimal vertex cover
of $G$ such that $opt_{\textsc{Max Ext VC}}(G,U)=|S\cap U|$. By construction, $\textsc{Ext VC}(G,U^\prime)\neq\emptyset$ for $U^\prime=U\cap S$, and then $S^\prime=(N_G[U^\prime])\setminus U) \cap (V\setminus S)$ is an independent set which dominates $U^\prime$  in the subgraph induced by $N_G[U^\prime]$ using Theorem~\ref{caract_Ext_VC}. Hence, by adding at most all vertices of $U\setminus U^\prime$  to $S^\prime$ we get an independent set which dominates all of $U$. Hence, $opt_{\textsc{WMinISDS}}(G^\prime,w)\leq |U|-opt_{\textsc{Max Ext VC}}(G,U)$. In conclusion, equality (\ref{eq:PoE-VC_Chordal}) holds and the result follows.}
\LV{\begin{proof}\proofofPoEVCChordal
\end{proof}}


\section{Bounded degree graphs}
\label{sec:bounded degree}

Our $\np$-hardness results  also work for the case of graphs of bounded degree,  hence it is also interesting to consider 
\textsc{Ext VC}  with standard parameter with an additional degree parameter $d$. 

\begin{theorem}
\textsc{Ext VC} is in $\fpt$ when parameterized both by the standard parameter and by the maximum degree $\Delta$ of the graph.
\end{theorem}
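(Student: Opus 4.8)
The plan is to exploit the combinatorial characterization of Theorem~\ref{caract_Ext_VC}: $(G,U)$ is a \yes-instance of \textsc{Ext VC} if and only if the induced graph $G' := G[N_G[U]]$ admits an independent dominating set $S' \subseteq N_G[U]\setminus U$. The crucial observation is that $G'$ is \emph{small} relative to the two parameters: every vertex of $U$ contributes at most $\Delta$ neighbours, so $|N_G[U]| \le |U|(\Delta+1)$. Hence the subinstance on which the decision really depends is confined to a graph whose order is bounded by a function of $|U|$ and $\Delta$ only; this is, in effect, a kernel, and any finite procedure on $G'$ is automatically \fpt{} for the combined parameter.

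Given this, the simplest argument is exhaustive search. We first build $G'$ in time $\Oh(n+m)$, then iterate over all subsets $S'$ of the at most $|U|\Delta$ vertices of $N_G[U]\setminus U$ and test in polynomial time whether $S'$ is independent and dominating in $G'$. By Theorem~\ref{caract_Ext_VC}, $(G,U)$ is a \yes-instance iff some $S'$ passes the test, so \textsc{Ext VC} is decided in time $2^{|U|\Delta}\cdot (|U|\Delta)^{\Oh(1)} + \Oh(n+m)$, which is $\fpt$ for the combined parameter $|U|+\Delta$.

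For a sharper running time one can replace the brute force by a bounded search tree. By Remark~\ref{obsExt_vc} we may assume $U$ is independent, so every neighbour of a vertex of $U$ lies in $N_G[U]\setminus U$ and is thus a legitimate candidate for $S'$. Maintain a partial independent set $R$; while some $u\in U$ is not yet dominated by $R$, branch over the at most $\Delta$ neighbours $v\notin U$ of $u$, put $v$ into $R$, and delete $N_{G'}[v]$ from the remaining candidates; a branch dies if every neighbour of some undominated $u$ has already been deleted. The recursion has depth at most $|U|$ and branching factor at most $\Delta$, hence at most $\Delta^{|U|}$ leaves. At each surviving leaf, $R$ is an independent set of $G'$ dominating all of $U$; greedily extend $R$ to a maximal independent set $R\cup X$ of $G'$. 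Because $U\subseteq N_{G'}[R]$, the greedy phase never adds a vertex of $U$, so $R\cup X\subseteq N_G[U]\setminus U$ is an independent dominating set of $G'$ avoiding $U$, i.e.\ a certificate of a \yes-instance by Theorem~\ref{caract_Ext_VC}. Conversely, any such certificate $S'$ induces, for each $u\in U$, a choice of a neighbour of $u$ in $S'$, so it is produced along one of the branches. This gives an $\Oh^*(\Delta^{|U|})$ algorithm.

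There is no real obstacle here; the only points deserving a line of care are (i) verifying that the greedy extension in the search-tree variant stays inside $N_G[U]\setminus U$, which holds because $U$ is already dominated when the extension starts, and (ii) noting that the constructed set also dominates the ``outer'' vertices $N_G[U]\setminus U$ of $G'$, which is automatic since any maximal independent set of $G'$ is dominating in $G'$.
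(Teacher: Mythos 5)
Your proposal is correct and its core algorithm is essentially the paper's: the paper also branches, for each $u\in U$ in turn, on the at most $\Delta$ neighbours $x\in N(u)\setminus U$ that could supply $u$'s private edge (equivalently, the dominating neighbour in the independent set $S'$ of Theorem~\ref{caract_Ext_VC}$(iii)$), removing $N[x]$ and recursing, for an $\Oh^*(\Delta^{|U|})$ bound. Your preliminary observation that $G[N_G[U]]$ has at most $(\Delta+1)|U|$ vertices, which already yields the bare $\fpt$ claim by brute force, is also sound and is made explicit by the paper only later, as its $(\Delta+1)|U|$ vertex kernel.
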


\begin{proof}
For \textsc{Ext VC}, revisit  membership in $\wone$ from\LV{ the proof of} Theorem~\ref{thm: W-c Ext VC}. Guessing a neighbor for each vertex from the given set $U$ is now a limited choice, so that the earlier described algorithm runs in time $\Oh^*(\Delta^k)$. Recursively, the algorithm picks some $u\in U$ and branches on every neighbor $x\in N(u)\setminus U$, considering the new instance $(G-N[x],U\setminus N[x])$. 
Namely, we assume that edges $xu'$ can serve as private edges for all $u'\in U\cap N(x)$. Moreover, $xu$ being a private edge clearly means that $x$ will not be part of the vertex cover (extension). Hence, all neighbors of $x$ (not only the ones already lying in $U$) will be put into the vertex cover extension. 
They all have private edges (namely their connections to $x$). This justifies deleting them from the recursively considered instances.
\end{proof}

Let us look at this algorithm more carefully in the case of $\Delta=3$ analyzing it from the standpoint of exact algorithms, i.e., dependent on the number of vertices $n$ of the graph.
Our algorithm has a branching vector of $(2,2,2)$, resulting in a branching number upper-bounded by 1.733.
However, the worst case is a vertex in $U$  that has three neighbors of degree one. Clearly, this can be improved.
We propose the following reduction rules for  \textsc{Ext VC} on an instance $(G,U)$, $G=(V,E)$, which have to be applied exhaustively and in order:
\begin{enumerate} \setcounter{enumi}{-1} 
    \item If $U=\emptyset$, then answer \yes.
\item If some $u\in U$ is of degree zero, then $(G,U)$ is a  \no-instance.
\item If some $x\notin U$ is of degree zero, then delete $x$ from $V$.
    \item If $u,u'\in U$ with $uu'\in E$, then delete $uu'$ from $E$.
    \item If $u\in U$ is of degree one, then the only incident edge $e=ux$ must be private, hence we can delete $N[x]$ from $V$ and all $u'$ from $U$ that are neighbors of~$x$. 
    \item If $u\in U$ has a neighbor $x$ that is of degree one, then  assume $e=ux$ is the private edge of $u$, so that we can delete $u$ and $x$ from $V$ and $u$ from $U$. 
\end{enumerate}

After executing the reduction rules exhaustively, the resulting graph has only vertices of degree two and three (in the closed neighborhood of $U$) if we start with a graph of maximum degree three. This improves the  branching vector to $(3,3,3)$, resulting in a branching number upper-bounded by 1.443.
However, the rules are also valid for arbitrary graphs, as we show in the following.

\begin{lemma}\label{lem-soundrules} \SV{$(*)$}
The  reduction rules are sound for general graphs
when applied  exhaustively and in order.
\end{lemma}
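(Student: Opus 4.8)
The plan is to check each of the six rules separately: for rules that output an answer directly (rules $0$ and $1$) I will argue that answer is correct, and for the remaining rules I will show that applying the rule to an instance $(G,U)$ which is already irreducible under the earlier rules produces an instance $(G',U')$ that is a \yes-instance of \textsc{Ext VC} iff $(G,U)$ is. Throughout I use the elementary fact that a vertex cover is minimal iff every vertex in it has a \emph{private edge} (an incident edge whose other endpoint is outside the cover); I also use Theorem~\ref{caract_Ext_VC} and Remark~\ref{obsExt_vc} where convenient.

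Rules $0$--$3$ are immediate. If $U=\emptyset$, then $\emptyset\subseteq N_G[U]\setminus U=\emptyset$ is a (vacuously independent and dominating) witness for condition $(iii)$ of Theorem~\ref{caract_Ext_VC}, so answering \yes\ is correct. A vertex of degree zero belongs to no minimal vertex cover (it has no private edge), so if $u\in U$ has degree zero then $(G,U)$ is a \no-instance (rule $1$); a degree-zero vertex $x\notin U$ lies in no minimal vertex cover and covers no edge, so the minimal vertex covers of $G$ containing $U$ are exactly those of $G-x$ containing $U$ (rule $2$). Rule $3$ is Remark~\ref{obsExt_vc}, and can be re-derived directly: for $S\supseteq U$, the edge $uu'$ is covered by $S$ regardless, and no private edge of a vertex of $S$ can join two vertices of $U\subseteq S$, so $S$ is a minimal vertex cover of $G$ iff it is one of $G-uu'$.

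The real content is in rules $4$ and $5$, which I will prove by establishing the two implications separately. Here I exploit the ordering: when rule $4$ (resp.\ $5$) is applied, rule $3$ is inapplicable, hence the endpoint $x$ with $ux\in E$ satisfies $x\notin U$; and when rule $5$ is applied, rule $4$ is inapplicable, hence every vertex of $U$ has degree at least two with all neighbours outside $U$. For rule $4$: in any minimal vertex cover $S\supseteq U$, the unique edge $ux$ at $u$ must be the private edge of $u$, forcing $x\notin S$, hence $N(x)\subseteq S$ and every $y\in N(x)$ has private edge $yx$; then $S\setminus N[x]$ is a minimal vertex cover of $G-N[x]$ containing $U\setminus N(x)$, and conversely, from any minimal vertex cover $S'$ of $G-N[x]$ containing $U\setminus N(x)$ the set $S'\cup N(x)$ is a minimal vertex cover of $G$ containing $U$ — in both directions the only thing to verify is that the private edge of each surviving vertex survives, using that a private edge can end neither at $x$ nor in $N(x)$. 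Rule $5$ is the lighter analogue: since $x$ has degree one, $x\in S$ would force $u\notin S$, so $x\notin S$ and $ux$ is automatically a private edge of $u$; therefore $S\setminus\{u\}$ is a minimal vertex cover of $G-\{u,x\}$ containing $U\setminus\{u\}$, and conversely $S'\cup\{u\}$ gives one back, the only point being that no private edge of a surviving vertex ran to $u$ or to $x$ (impossible, since $u\in S$ and the sole neighbour of $x$ is $u$).

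The main obstacle — really the only one — is carrying out the private-edge bookkeeping for rules $4$ and $5$ cleanly in both directions, and in particular making sure that the deletions never turn a vertex of $U$ into an isolated vertex (which would spuriously change a \yes-instance into a \no-instance): this is precisely where the hypothesis ``applied in order'' is used, via the inapplicability of rules $3$ and $4$ which guarantees that each surviving vertex of $U$ keeps a neighbour after the deletion. Everything else is routine verification.
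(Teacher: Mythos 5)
Your proof is correct and follows essentially the same route as the paper's: a rule-by-rule verification via the private-edge characterization of minimality, using the prescribed order only to guarantee that the neighbour $x$ in Rule~4 lies outside $U$ (Rule~3 exhausted) and that Rule~5's vertex $u$ has all neighbours outside $U$; the paper's version is just terser and argues the equivalences more informally, whereas you spell out both directions. One small caveat: the claim in your closing paragraph that the order of application ``guarantees that each surviving vertex of $U$ keeps a neighbour after the deletion'' is false for Rule~4 (a vertex $u''\in U$ with $N(u'')\subseteq N(x)$ does become isolated), but this is harmless --- your two-implication argument already covers that case, since such an instance is a \no-instance both before the deletion ($N(u'')$ is forced into the cover, leaving $u''$ without a private edge) and after it (Rule~1 fires), so the worry about ``spuriously'' turning a \yes-instance into a \no-instance never materialises.
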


\newcommand{\proofofLemsoundrules}{\begin{proof}  If $U=\emptyset$, a greedy approach works. 
Clearly, a vertex without neighbors cannot lie in a minimal vertex cover, which justifies Rules~1 and~2.
Likewise, an edge between two vertices in $U$ is clearly covered and cannot serve as a private edge, so that it can be deleted as suggested in Rule~3.
Now, if we consider a vertex $u$ of degree one from $U$ as in Rule~4, then its unique neighbor cannot lie in $U$ (as otherwise this case would have been treated by previous rules). There is simply no choice of a private edge for $u$ but the edge connecting to its unique neighbor $x\notin U$. As $x$ is not going into the vertex cover, all its neighbors must do, but they all have private edges, namely the connections to $x$. This justifies deleting $N[x]$ from $V$ and from $U$, hence Rule~4 is sound.
Finally, a vertex $x$ of degree one that is neighbor of a vertex from $U$ cannot be put into a minimal vertex cover extension, so that the edge connecting it to its unique neighbor $u\in U$ can serve as a private edge for $u$ without hampering any other possibilities. This justifies Rule~5.

As none of the rules will increase vertex degrees, the rules are valid not only in general graphs, but also in graphs of bounded degree.
 \end{proof}}

\LV{\proofofLemsoundrules}

As the previous reasoning was not restricted to $\Delta=3$, we can state:

\begin{theorem}
 \textsc{Ext VC} can be solved in time $\Oh^*((\sqrt[3]{\Delta})^n)$ on graphs of order $n$ with maximum degree $\Delta$. 
\end{theorem}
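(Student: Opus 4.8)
The plan is to bootstrap the reduction rules together with the $\Delta$-bounded branching used in the $\fpt$-result, but to measure progress in the number $n$ of vertices instead of $|U|$. First I would apply Rules~0--5 exhaustively and in order; this costs only polynomial time and, by Lemma~\ref{lem-soundrules}, preserves whether $(G,U)$ is a \yes-instance. If the reduced instance has $U=\emptyset$, Rule~0 answers \yes. Otherwise the key structural fact is that in the reduced instance every $u\in U$ has $d_G(u)\ge 2$ (Rules~1 and~4), all neighbours of $u$ lie outside $U$ (Rule~3), and every $x\in N_G(u)$ has $d_G(x)\ge 2$ (Rule~5, together with the trivial bound $d_G(x)\ge 1$).

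I would then branch on the choice of a private edge for a fixed $u\in U$: for each $x\in N_G(u)$, recurse on $(G-N_G[x],\,U\setminus N_G[x])$, re-applying the reduction rules before the next branching step. Correctness follows as in the $\fpt$-proof. Any minimal vertex cover $S\supseteq U$ of $G$ gives $u$ a private edge $ux$ with $x\notin S$, which forces $N_G(x)\subseteq S$ and makes $S\setminus N_G[x]$ a minimal vertex cover of $G-N_G[x]$ containing $U\setminus N_G[x]$ (every surviving edge stays covered, and the private edge $vw$ of a $v\in S\setminus N_G[x]$ cannot have $w\in N_G[x]$, since $w\notin S$ while $N_G(x)\subseteq S$ and $v\notin N_G(x)$ excludes $w=x$). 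Conversely, a minimal vertex cover $S'\supseteq U\setminus N_G[x]$ of $G-N_G[x]$ extends to the minimal vertex cover $S'\cup N_G(x)$ of $G$ containing $U$, where the vertices of $N_G(x)$ get their edges to $x\notin S'\cup N_G(x)$ as private edges. Since $u$ lies in every candidate solution, one of its incident edges must be private, so the branch set $N_G(u)$ is exhaustive.

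For the running time, each recursive call deletes $|N_G[x]|=d_G(x)+1\ge 3$ vertices, and there are $|N_G(u)|=d_G(u)\le\Delta$ branches, so the search tree obeys $T(n)\le \Delta\cdot T(n-3)+\mathrm{poly}(n)$, which yields $T(n)=\Oh^*(\Delta^{n/3})=\Oh^*((\sqrt[3]{\Delta})^n)$. I expect the main obstacle to be the structural bookkeeping: one has to check carefully that exhaustive, ordered application of Rules~0--5 really removes \emph{every} vertex of degree at most one from $N_G[U]$ — so that the bound $|N_G[x]|\ge 3$ on the progress per branch and the bound $d_G(u)\le\Delta$ on the number of branches both hold at the moment of branching — and that these invariants are re-established after each reduction phase inside the recursion.
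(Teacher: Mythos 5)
Your proposal is correct and matches the paper's intended argument exactly: apply Rules~0--5 exhaustively so that every $u\in U$ and every $x\in N(u)$ has degree at least two, then branch over the at most $\Delta$ choices of private edge $ux$ for some $u\in U$, deleting $N[x]$ (at least three vertices) in each branch, which gives the branching vector $(3,\dots,3)$ with $\Delta$ entries and hence branching number $\sqrt[3]{\Delta}$. The paper leaves the correctness of the branch step and the re-establishment of the invariants implicit (referring back to the $\fpt$ branching and the soundness lemma), whereas you spell them out, but the route is the same.
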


\noindent
This gives interesting branching numbers for $\Delta=3$: 1.443, $\Delta=4$: 1.588, $\Delta=5$: 1.710, etc., but from $\Delta=8$ on this is no better than the trivial $\Oh^*(2^n)$-algorithm.

Let us remark that the same reasoning that resulted in Rule~5 is valid for\LV{ the following more general version of this rule}:\SV{\mbox{ }\\[-3ex]}

\begin{itemize}
    \item[]\hspace*{-4.6mm}5'\!. If $x\notin U$ satisfies $N(x)\subseteq U$, then delete $N[x]$ from $V$ and from $U$. \SV{\mbox{ }\\[-5ex]}
\end{itemize}

\newcommand{\commentsonreductions}{Namely, as $x$ cannot belong to any minimal vertex cover extension of $U$, all edges incident to $x$ can serve as private edges for the neighbors. Hence, all vertices of $N(x)\cap U$ have received their private edge and can hence be removed.

Notice that in order to decide if $(G,U)$ is a \yes-instance of \textsc{Ext VC}, it is sufficient to restrict our attention to $G'=G[N_G[U]]=(V',E')$ according to Theorem~\ref{caract_Ext_VC}.
Hence, every vertex in $V'=N_G[U]$ either belongs to $U$ or is neighbor of some vertex in $U$. 
We can formulate this as a reduction rule as follows.\SV{\mbox{}\\[-8ex]}\LV{\commentsonreductions}}

\begin{enumerate} \setcounter{enumi}{5} 
    \item Delete $V\setminus N_G[U]$.\SV{\mbox{ (inspired by Theorem~\ref{caract_Ext_VC})}\\[-5ex]}
\end{enumerate}

%

We now run the following branching algorithm:
\begin{enumerate}
    \item Apply all reduction rules exhaustively in the order given by the numbering.
    \item On each connected component, do:
    \begin{itemize}
        \item Pick a vertex $v$ of lowest degree.
        \item If $v\in U$: Branch on all possible private neighbors.
        \item If $v\notin U$: Branch on if $v$ is not in the cover or one of its neighbors.
    \end{itemize}
\end{enumerate}

A detailed analysis of the suggested algorithm gives the following result.

\begin{theorem}\label{thm-ExtVCsubcubic} \SV{$(*)$}
\textsc{Ext VC} on subcubic graphs can be solved in time $\Oh^*(1.26^n)$ on graphs of order $n$.
\end{theorem}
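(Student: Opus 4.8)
The plan is to run the branching algorithm already outlined in the text, but to back it up with a tight case analysis of the branches that arise once the reduction rules have been applied. First I would apply Rules~0--6 exhaustively; soundness is Lemma~\ref{lem-soundrules} together with the remarks justifying Rules~5' and~6, and since each rule strictly decreases $|V|+|E|$ and runs in polynomial time, exhaustive application is harmless. On a subcubic input the resulting reduced instance $(G,U)$ enjoys strong structural invariants: $V=N_G[U]$, $U$ is independent, no vertex of $U$ has a neighbour of degree one, and every vertex has degree exactly $2$ or $3$. To eliminate the genuinely bad local configurations I would add three more reduction rules, all sound by a short argument about which of two triangle vertices must enter a minimal cover and where private edges can sit: (a) if $u\in U$ has two degree-two neighbours $x,y$ with $xy\in E$, delete $u,x,y$; (b) if $u\in U$ has neighbours $x,y$ with $xy\in E$ and $d(y)=2$, delete $u$; (c) if every neighbour of $u\in U$ shares a common second neighbour $z$, add $z$ to $U$ and re-run the earlier rules.

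Next, on each connected component I would pick a vertex $v$ of minimum degree. If $v\in U$, I branch on the choice of the private edge $vx$: then $x$ is excluded from the cover, so $N[x]$ is removed and all remaining neighbours of $x$ acquire private edges. If $v\notin U$, I branch on ``$v$ is not in the cover'' versus ``some specified neighbour of $v$ is not in the cover''. After the reductions $v$ has degree $2$ or $3$, so I organise the analysis by the multiset of degrees of $N(v)$ in the dominating case $v\in U$: the subcubic possibilities are $(2,2,2)$, $(3,2,2)$, $(3,3,2)$, $(3,3,3)$ for $d(v)=3$, and $(2,2)$, $(3,2)$, $(3,3)$ for $d(v)=2$; the case $v\notin U$ is covered by these after the first split. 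The key accounting step in every case is that when the examined edge $vx$ is \emph{not} private to $v$, the endpoint $x$ is forced into the cover, which is simulated by momentarily moving $x$ into $U$; Rule~3 then deletes $vx$, Rule~4 deletes $x$ together with the rest of its neighbourhood, and this cascades (via Rules~4 and~5') whenever two second neighbours coincide. Carrying this out, the worst configuration, $d(v)=3$ with neighbour degrees $(2,2,2)$, yields the branching vector $(3,5,7)$; the degree-three-heavy configurations yield vectors such as $(4,7,7,10,10)$ and $(4,6,8,9,11)$; and all coincidence sub-cases ($x'=y'$, etc.) only improve a branch, giving vectors like $(3,6)$, $(5,7)$ and $(3,8)$. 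Each of these has branching number below the claimed bound, so combining the polynomial reduction phase with the worst-case branching number gives running time $\Oh^*(1.26^n)$.

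The part I expect to be fiddly rather than deep is precisely this cascade bookkeeping. The clean branching vectors above are valid only under the assumption that the second neighbours of $v$'s neighbours are pairwise distinct and avoid $N[v]$, so the real work is to enumerate all the ways in which these can coincide and to verify that each coincidence is either outlawed by one of the reduction rules (triangles, common second neighbour) or else shortens the branch. Because the graph is subcubic, there are only finitely many local pictures around a vertex of $U$, so this is a terminating---if tedious---verification rather than an obstacle of principle; once it is completed, the stated $\Oh^*(1.26^n)$ bound follows.
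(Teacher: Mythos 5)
There is a genuine gap: the case analysis you propose does not actually reach the bound $\Oh^*(1.26^n)$. Several of the branching vectors you rely on have branching numbers \emph{exceeding} $1.26$. Concretely, $(3,5,7)$ solves $x^7=x^4+x^2+1$ with root $\approx 1.264$ (at $x=1.26$ one has $1.26^7\approx 5.042 < 5.108\approx 1.26^4+1.26^2+1$), and $(4,6,8,9,11)$ gives $\approx 1.262$; your sentence ``each of these has branching number below the claimed bound'' is therefore false. Worse, your enumeration of neighbour-degree multisets leaves cases that you never bring below $1.26$: the configuration $d(v)=3$ with neighbour degrees $(3,3,3)$ yields only $(4,4,4)$, i.e.\ $3^{1/4}\approx 1.317$, and the $(3,2,2)$ configuration, analysed along the cascade lines you describe, yields $(4,3,5)\approx 1.325$. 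So the elaborate bookkeeping, even if carried out flawlessly, proves at best something like $\Oh^*(1.33^n)$, not $\Oh^*(1.26^n)$.

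The paper's proof takes a much simpler route that you are missing two ingredients of. First, the entire bound comes from branching on \emph{degree-two} vertices only: whether $v\in U$ (branch on which of its two non-$U$ neighbours provides the private edge) or $v\notin U$ (branch on whether $v$ or its non-$U$ neighbour is excluded from the cover), each branch deletes a closed neighbourhood of size at least $3$, giving the uniform vector $(3,3)$ and hence $2^{n/3}\approx 1.2599^n$. Second, one must justify that a minimum-degree vertex of degree two is (essentially) always available: a reduced subcubic instance with no degree-two vertex is $3$-regular, but since every reduction and every branch deletes some closed neighbourhood $N[x]$, the surviving neighbours of deleted vertices have degree at most two afterwards, so a $3$-regular instance can occur only as the very first instance and its single $(4,4,4)$ branching does not affect the asymptotics. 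Without this observation, ``pick a vertex of minimum degree'' still leaves you with the recurring $3$-regular case and the $1.317$ bottleneck. I would recommend discarding the degree-three case analysis entirely and instead writing out the two degree-two branches plus the argument that $3$-regularity cannot recur.
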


\newcommand{\proofofExtVCsubcubic}{
\begin{proof}
We prefer branching on vertices of degree two as follows (after always executing all reduction rules first; this also means that the graph has minimum degree of two). Consider $v$ with  $N_{G'}(v)=\{x,y\}$.
\begin{itemize}
    \item If $v\notin U$, then, w.l.o.g., $x\in U$ but $y\notin U$, as clearly one of the neighbors of $v$ must belong to $U$, but both cannot because of Rule~5'. Now, as not all vertices of $N_{G'}[v]$ can belong to the cover because of minimality, so that either $v$ is not in the cover or $y$ is not in the cover. In either case, we can remove at least three vertices from the graph, resulting in a branching vector of $(3,3)$.
    \item If $v\in U$, then (by Rule~3),
    $x,y\notin U$. Either $xv$ or $yv$ must serve as a private edge, i.e., $x$ or $y$ does not belong to the cover, but all of its neighbors. A simple analysis gives again a branching vector of $(3,3)$.
\end{itemize}

If we prioritize the branching on $v\notin U$, then we know that both neighbors of any $v\in U$ that we branch on have degree three, which improves the branching vector to $(4,4)$ in that case.

After this branching, we would arrive at a 3-regular graph. However, as all branching rules or also the reduction rules always remove vertices in the neighborhood of some vertex, such 
a cubic graph can never result.
Hence, such a possibly bad branching can only occur at the very beginning, but this can be ignored in the running time analysis.
\end{proof}
}

\LV{\proofofExtVCsubcubic}

\begin{corollary}\label{cor-ExtVCsubcubic}  \SV{$(*)$}
\textsc{Ext VC} on subcubic graphs can be solved in time $\Oh^*(2^{|U|})$ with fixed vertex set $U$.
\end{corollary}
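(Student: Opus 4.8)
The plan is to re-use the branching algorithm described in the proof of Theorem~\ref{thm-ExtVCsubcubic}, but to bound the size of its search tree using $|U|$ instead of the number $n$ of vertices as complexity measure. The crucial point is that once the reduction rules have been applied exhaustively --- in particular Rule~6, which deletes $V\setminus N_G[U]$ --- the current instance $(G,U)$ enjoys the following invariants: $G$ is subcubic with minimum degree at least $2$, the set $U$ is independent (by Rule~3), every vertex of $G$ outside $U$ has a neighbour in $U$ (by Rule~6), and no vertex outside $U$ has its whole neighbourhood inside $U$ (by Rule~5'). Since none of the reduction rules increases $|U|$, it suffices to show that the recursion tree has $\Oh^*(2^{|U|})$ leaves.

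First I would check that every generic branching step is binary and decreases $|U|$ by at least one in each branch. Suppose the algorithm branches on a vertex $v$ of lowest degree in some component, and assume for the moment that this component is not $3$-regular, so $v$ has degree $2$. If $v\in U$, both neighbours of $v$ lie outside $U$, and in each of the two branches one incident edge of $v$ is declared private, which forces the deletion of $v$ together with a closed neighbourhood; hence $|U|$ drops by at least $1$. If $v\notin U$, write $N(v)=\{x,y\}$; by Rule~6 at least one of $x,y$ lies in $U$ and by Rule~5' not both do, say $x\in U$ and $y\notin U$. The branch ``$v$ is not in the cover'' deletes $N[v]\ni x$, and the branch ``$y$ is not in the cover'' deletes $N[y]$, which contains a neighbour of $y$ lying in $U$ (because $y\in N_G[U]\setminus U$); so again $|U|$ decreases by at least $1$ in each branch. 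In every case we obtain a binary branching with branching vector $(1,1)$ in the parameter $|U|$, giving the recurrence $T(|U|)\le 2\,T(|U|-1)+\mathrm{poly}(n)$.

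The only steps that are not of this clean form are those where the component is $3$-regular at the moment of branching: then $v$ has degree $3$, so the branching is $3$-ary (if $v\in U$) or at most $4$-ary (if $v\notin U$), although, by exactly the same argument using Rule~6, each branch still removes at least one vertex of $U$. I would then argue, as already sketched at the end of the proof of Theorem~\ref{thm-ExtVCsubcubic}, that such a step can occur at most once within the recursion of a given connected component: every reduction rule and every branching operation removes the closed neighbourhood of some vertex, so any non-empty surviving component is incident to the removed set and therefore contains a vertex of degree at most $2$, and restricting afterwards to $N_G[U]$ via Rule~6 only decreases degrees of boundary vertices; hence a component can never become $3$-regular again. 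Consequently each component contributes at most $4\cdot 2^{k_i-1}=2^{k_i+1}$ leaves, where $k_i\ge 1$ is its number of required vertices (nonzero, as Rule~6 leaves only components meeting $U$), and $\sum_i 2^{k_i+1}=2\sum_i 2^{k_i}\le 2^{|U|+1}$ because $2^a+2^b\le 2^{a+b}$ whenever $a,b\ge 1$. Together with the polynomial work per node this yields the claimed $\Oh^*(2^{|U|})$ running time.

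I expect the delicate step to be the last one: rigorously confirming that the ``bad'', more-than-binary branching really occurs at most once per component, including the bookkeeping for components that split when a closed neighbourhood is deleted and the interaction with the re-application of Rule~6 afterwards. Everything else is a routine recomputation of branching vectors, now measured in $|U|$ rather than in $n$, and it relies only on results stated above (Theorem~\ref{caract_Ext_VC}, the reduction rules, and the analysis in Theorem~\ref{thm-ExtVCsubcubic}).
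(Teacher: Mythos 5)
Your proposal is correct and follows essentially the same route as the paper, whose proof simply observes that the branching algorithm of Theorem~\ref{thm-ExtVCsubcubic} removes at least one vertex of $U$ in each (binary) branch. Your extra care with the possibly non-binary branching on an initially $3$-regular component fills in a detail the paper dismisses as ignorable in the analysis, but the underlying argument is the same.
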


\newcommand{\proofofCorExtVCsubcubic}{\begin{proof}
Let us analyze  the same branching algorithm can be also analyzed from the viewpoint of parameterized complexity. 
In each branching step, at least one vertex from $U$ is removed.
\end{proof}
}

\LV{\proofofCorExtVCsubcubic}

\LV{Finally, we can see that o}\SV{O}ur reduction rules guarantee that each vertex not in $U$ (and hence in $N_G(U)$) has one or two neighbors in $U$, and \LV{similarly, }each vertex in $U$ has two or three neighbors in $N_G(U)$.
Hence, $|N_G(U)|\leq 3|U|$. In general, due to Rule~6:

\begin{theorem}
\textsc{Ext VC} on  graphs of maximum degree $\Delta$ allows for a vertex kernel of size $(\Delta+1)|U|$, parameterized by the size of the given vertex set $U$.
\end{theorem}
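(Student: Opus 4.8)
The plan is to show that after applying the reduction rules exhaustively (Rules 0--6, plus Rule 5'), the instance $(G,U)$ is transformed into an equivalent instance whose underlying graph has at most $(\Delta+1)|U|$ vertices, so that the reduced graph itself serves as the kernel. First I would invoke Lemma~\ref{lem-soundrules} (together with the soundness of Rule~5' and of Rule~6, which merely restricts attention to $G[N_G[U]]$ as justified by Theorem~\ref{caract_Ext_VC}) to conclude that the reduction rules are sound and can be carried out in polynomial time; hence the reduced instance $(G^\star,U^\star)$ is a \yes-instance iff $(G,U)$ is, and $|U^\star|\le|U|$. It therefore suffices to bound $|V(G^\star)|$ in terms of $|U^\star|$.

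Next I would argue the structural bound on the reduced graph. Because Rule~6 has been applied, every vertex of $G^\star$ lies in $N_{G^\star}[U^\star]$, i.e.\ $V(G^\star)=U^\star\cup N_{G^\star}(U^\star)$. Thus $|V(G^\star)|\le |U^\star|+|N_{G^\star}(U^\star)|$. Since the maximum degree is still at most $\Delta$ (no rule ever increases a degree, as already observed in the proof of Lemma~\ref{lem-soundrules}), each vertex of $U^\star$ contributes at most $\Delta$ vertices to $N_{G^\star}(U^\star)$, giving $|N_{G^\star}(U^\star)|\le \Delta|U^\star|$ and hence $|V(G^\star)|\le (\Delta+1)|U^\star|\le(\Delta+1)|U|$. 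This already yields a kernel of size $(\Delta+1)|U|$; the remark about Rule~6 being the operative ingredient is exactly this observation that only closed neighbours of $U$ survive.

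The one point that needs a little care --- and which I expect to be the main (mild) obstacle --- is verifying that the rules really do leave every vertex outside $U$ adjacent to $U$, i.e.\ that Rule~6 does not conflict with the later application of Rules~1--5'. Concretely, Rules~1--5' can delete vertices from $N_G[U]$, which could in principle create new vertices of $N_G(U)$ that were at distance two before; but since the rules are applied \emph{exhaustively and in order}, after the last pass Rule~6 is re-triggered and removes any such stragglers, so at fixpoint the invariant $V(G^\star)=N_{G^\star}[U^\star]$ holds. (One should also note that Rules~1,4,5,5' only ever delete closed neighbourhoods of single vertices, so the process terminates and the final $G^\star$ is well defined.) Once this invariant is in place, the counting argument above is immediate and completes the proof.
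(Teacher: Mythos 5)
Your proposal is correct and follows exactly the argument the paper intends (the paper gives no explicit proof, but the preceding text ``In general, due to Rule~6'' signals precisely this reasoning): Rule~6 restricts the instance to $G[N_G[U]]$, and since each vertex of $U$ has at most $\Delta$ neighbours, $|N_G[U]|\le(\Delta+1)|U|$. Your extra care about the fixpoint of the exhaustive rule application is a welcome (if routine) addition rather than a deviation.
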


Looking at the dual parameterization (i.e., \textsc{Ext IS} with standard parameter), we can state due to all reduction rules:

\begin{theorem}
\textsc{Ext VC} on  graphs of maximum degree $\Delta$ allows for a vertex kernel of size $\frac{\Delta-1}{2}|V\setminus U|$, parameterized by \LV{the size of the complement of the given vertex set $U$}\SV{$|V\setminus U|$}.
\end{theorem}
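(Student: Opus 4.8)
The plan is to bound the number of vertices of the instance $(G',U')$, with $G'=(V',E')$, obtained by applying Rules~0--5, Rule~5', and Rule~6 exhaustively and in the stated order. By Lemma~\ref{lem-soundrules} (whose argument also covers Rules~5' and~6, the latter via Theorem~\ref{caract_Ext_VC}) this instance is equivalent to $(G,U)$, and since each rule can be tested and applied in polynomial time while strictly decreasing $|V|+|E|$, the whole reduction runs in polynomial time. Moreover every rule only deletes vertices and edges, and every vertex that a rule removes from $U$ is simultaneously removed from $V$; hence $G'$ is a subgraph of $G$ with $U'=U\cap V'$, with $W':=V'\setminus U'\subseteq V\setminus U$ (so $|W'|\le|V\setminus U|$), and with $\Delta(G')\le\Delta$. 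It therefore suffices to bound $|U'|$, the only part of the kernel not already controlled by the parameter, in terms of $|W'|$.

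First I would record the structural properties that hold once no rule applies any more. Since Rule~3 is inapplicable, $U'$ is an independent set, so $N_{G'}(u)\subseteq W'$ for every $u\in U'$. Since Rules~1 and~4 are inapplicable, every $u\in U'$ has $d_{G'}(u)\ge 2$. Since Rule~6 is inapplicable, $V'=N_{G'}[U']$, so every $w\in W'$ has at least one neighbor in $U'$; and since Rule~5' is inapplicable, $N_{G'}(w)\not\subseteq U'$, so $w$ also has at least one neighbor in $W'$, whence $|N_{G'}(w)\cap U'|\le\Delta-1$.

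The heart of the argument is then a double count of the set $E(U',W')$ of edges of $G'$ with one endpoint in each side; by independence of $U'$ these are exactly the edges incident to $U'$. Counting from the $U'$-side, $|E(U',W')|=\sum_{u\in U'}d_{G'}(u)\ge 2|U'|$; counting from the $W'$-side, $|E(U',W')|=\sum_{w\in W'}|N_{G'}(w)\cap U'|\le(\Delta-1)|W'|$. Combining gives $2|U'|\le(\Delta-1)|W'|$, i.e.\ $|U'|\le\tfrac{\Delta-1}{2}|W'|\le\tfrac{\Delta-1}{2}|V\setminus U|$, which is the asserted kernel bound (adding $|W'|\le|V\setminus U|$ also yields $|V'|\le\tfrac{\Delta+1}{2}|V\setminus U|$ for the total size).

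The only delicate point is getting the multiplicities right: the factor $2$ on the left comes precisely from the minimum-degree-two guarantee on $U'$-vertices supplied by Rules~1 and~4, and the factor $\Delta-1$ (rather than the trivial $\Delta$) on the right comes precisely from Rule~5' together with Rule~6, which force every $W'$-vertex to retain a neighbor inside $W'$. Dropping either refinement only reproduces the weaker $(\Delta+1)|U|$-type estimate, so the real work is simply verifying that these two local properties genuinely persist under exhaustive, ordered application of the rules — which is immediate from the fixpoint conditions listed above.
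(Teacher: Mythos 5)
Your double count of the edges between $U'$ and $V'\setminus U'$ --- using that Rules~1 and~4 force every vertex of $U'$ to have degree at least two, while Rules~5' and~6 force every vertex of $V'\setminus U'$ to retain a neighbour outside $U'$ and hence to have at most $\Delta-1$ neighbours in $U'$ --- is exactly the argument the paper intends (it only gestures at it in the remark preceding the two kernel theorems), so your proof is correct and follows the same route. Your closing remark that the argument yields $|U'|\le\frac{\Delta-1}{2}|V\setminus U|$ and therefore $|V'|\le\frac{\Delta+1}{2}|V\setminus U|$ in total is also the reading consistent with the paper's own $\Delta=3$ instantiation of $2|V\setminus U|$.
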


For $\Delta=3$, we obtain vertex kernel bounds of $4|U|$ and $2|V\setminus U|$, respectively. 
With the computations of \cite[Cor. 3.3 \& Cor. 3.4]{ChenFKX07}, we can state the following\LV{ result}.

\begin{corollary}
Unless $\ptime=\np$, for any $\varepsilon>0$, there is no   size $(2-\varepsilon)|U|$ and no size $(\frac{4}{3}-\varepsilon)|V\setminus U|$ vertex kernel for \textsc{Ext VC} on  subcubic graphs, parameterized by $|U|$ or $|V\setminus U|$, respectively.
\end{corollary}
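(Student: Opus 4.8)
The plan is to apply the \emph{parametric-duality} technique of~\cite{ChenFKX07} --- the same method underlying their Corollaries 3.3 and 3.4 for \textsc{Vertex Cover} and \textsc{Independent Set} --- taking as input the two kernels for \textsc{Ext VC} on subcubic graphs established just above ($4|U|$ vertices for parameter $|U|$, and $2|V\setminus U|$ vertices for parameter $|V\setminus U|$), together with the $\np$-hardness of \textsc{Ext VC} on subcubic graphs, which follows from Theorem~\ref{Bip_Ext_VC} since cubic graphs are subcubic.

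Concretely, I would first prove the following duality inequality: if \textsc{Ext VC} restricted to subcubic graphs admits a kernelization with at most $a|U|$ vertices (parameter $|U|$) and one with at most $b|V\setminus U|$ vertices (parameter $|V\setminus U|$), then $(a-1)(b-1)\ge 1$ unless $\ptime=\np$. Both kernelizations of interest are built purely from vertex-deleting reduction rules (plus restriction to $N_G[U]$), so --- as is standard in this context --- one may assume they do not increase the parameter they are run on, and they keep the instance subcubic. Chaining them on an instance $(G,U)$: first produce an equivalent $(G_1,U_1)$ with $|V(G_1)|\le a|U_1|$, hence $|V(G_1)\setminus U_1|\le(a-1)|U_1|\le(a-1)|U|$; then kernelize $(G_1,U_1)$ with respect to the parameter $|V(G_1)\setminus U_1|$ to get an equivalent $(G_2,U_2)$ with $|V(G_2)|\le b|V(G_2)\setminus U_2|$ and $|V(G_2)\setminus U_2|\le|V(G_1)\setminus U_1|$, so that $|U_2|=|V(G_2)|-|V(G_2)\setminus U_2|\le(b-1)|V(G_2)\setminus U_2|\le(a-1)(b-1)|U|$. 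If $(a-1)(b-1)<1$, one round shrinks the parameter by a constant factor; iterating $\Oh(\log|U|)$ rounds drives the parameter, and with it the kernel size, down to a constant, which is then settled by brute force. As instance sizes decay geometrically the whole computation is polynomial, so \textsc{Ext VC} on subcubic graphs would lie in $\ptime$, forcing $\ptime=\np$.

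It then remains to substitute the numbers twice. A hypothetical $(2-\varepsilon)|U|$-vertex kernel gives $a=2-\varepsilon$; together with the known $2|V\setminus U|$-vertex kernel ($b=2$) this yields $(a-1)(b-1)=1-\varepsilon<1$, hence $\ptime=\np$. A hypothetical $(\tfrac43-\varepsilon)|V\setminus U|$-vertex kernel gives $b=\tfrac43-\varepsilon$; together with the known $4|U|$-vertex kernel ($a=4$) this yields $(a-1)(b-1)=3(\tfrac13-\varepsilon)=1-3\varepsilon<1$, again forcing $\ptime=\np$. This gives both claimed non-existence statements.

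The main obstacle I anticipate is getting the chaining bookkeeping exactly right: one must invoke each kernelization with the correct parameter, check that the two ``equivalences'' refer to the same \textsc{Ext VC} decision question so that they compose (no translation via Remark~\ref{rem-Ext VC=IS} is needed, since we never leave \textsc{Ext VC}), and, crucially, justify both that the kernelizations do not inflate their parameters and that the second kernelization's size bound is stated against its \emph{output} parameter --- all true for our deletion-only reduction rules. Once these points are nailed down, what remains is precisely the ``computation'' of~\cite{ChenFKX07}.
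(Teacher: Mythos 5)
Your proposal is correct and follows essentially the same route as the paper, which simply invokes the parametric-duality computations of Chen, Fernau, Kanj and Xia on the two kernels ($4|U|$ and $2|V\setminus U|$) together with $\np$-hardness on subcubic graphs; you have merely unfolded that citation into the explicit chaining-and-iteration argument, with the right attention to the output-parameter bookkeeping and preservation of subcubicity. The numerical checks $(2-\varepsilon-1)(2-1)<1$ and $(4-1)(\tfrac43-\varepsilon-1)<1$ match the claimed bounds.
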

\noindent
This shows that our (relatively simple) kernels are quite hard to improve on.

\newcommand{\remarksonplanarremarks}{Let us exemplify this observation with the example of planar graphs, somehow complementing the results stated in Theorem~\ref{thm-fpt-VC-planar}.
Observe that all reduction rules that we have put up preserve planarity.
So, we are left with a planar graph $G$ together with a dominating set of required vertices $U$.
The branching is then done as follows. Pick some vertex $x$ of smallest degree. As $G$ is planar, $x$ has degree at most five. If $x\in U$, then one of its incident edges must by private to $x$, leading to a branching number of 5, as at least one of the vertices of $U$ (namely, $x$) is removed from the instance in each branch.  If $x\notin U$, then either $x$ will be put into the minimal vertex cover extension or not.
If not, then all of the neighbors of $x$ must be in the cover; among these, at least one belongs to $U$, see Theorem~\ref{caract_Ext_VC} $(iii)$. If $x$ comes into  the minimal vertex cover extension, then not all of its neighbors could be there (by minimality). Hence, we can branch on the at most five neighbors which of them provide a private edge to $x$. Clearly, at least one of the neighbors is in $U$, so that this cannot give a private edge. As argued before, at least one of the other neighbors of $v$ (in the branch when $xv$ is assumed to be a private edge of $x$) must lie in $U$, so that the parameter reduces by one in each branch. Altogether, this branching has again a branching number of 5, finally leading to an algorithm with running time $\Oh^*(5^{|U|})$.  Further improvements on this branching should be possible by using \cite[Theorem~2]{AksionovBMSST05}, along the lines of thinking elaborated in~\cite{AbuFerLan2008} for a related problem on planar graphs.
When we analyze the mentioned algorithm as an exact algorithm, always branching on vertices of smallest degree, which is at least two by our reduction rules, 
depending on the number $n$ of vertices, branching vectors of $(3,3)$, $(4,4,4)$, $(5,5,5,5)$, 
$(6,6,6,6,6)$ (or better) would result, yielding a branching number of 1.32.

Finally, observe that this approach does not translate to graphs of bounded average degree, as such class of graphs is not closed  under taking induced subgraphs.}

\begin{remark} \label{rem-planar} \SV{$(*)$}
Note that the arguments that led to the $\fpt$-result for \textsc{Ext VC} on graphs of bounded degree (by
providing a branching algorithm) also apply to graph classes that are closed under taking induced subgraphs and that guarantee the existence of vertices of small degree. 
\SV{This idea leads to a branching algorithm with running time $\Oh^*(5^{|U|})$ or  $\Oh^*(1.32^{|V|})$.}\LV{\remarksonplanarremarks}
\end{remark}

\newcommand{\remarksontrees}{There is also a linear-time algorithm that completely avoids using DP. This algorithm makes use of the reduction rules presented above and 
works as follows.

By Rule~6, based on Theorem~\ref{caract_Ext_VC}, we can restrict our attention to tree instances $(T,U)$ such that all vertices not in $U$ are neighbors of vertices in $U$ and either $U$ has at most one vertex, or every vertices of $U$ has another vertex of $U$ at distance at most three. If $(T,U)$ does not satisfy this requirement motivated by Theorem~\ref{caract_Ext_VC}, we can easily produce a number of instances that satisfy these requirements, such that the original instance is a \yes-instance iff each of the generated instances is a \yes-instance. In addition, Rule~3 guarantees that the set $U$ forms an independent set in $T$.

Therefore, our reduction rules will resolve such instances completely, because either $T$ has only one vertex, in which case $(T,U)$ is a \yes-instance if and only if $U=\emptyset$ according to Rules~0, 1 and~2, or $T$ has  a vertex~$x$ of degree one. In that case, either Rule~4 applies, namely, if $x\in U$, or $x\notin U$, which means that there is a vertex $u$ of $U$ that is neighbor of $x$ (due to Rule~6). Hence, Rule~5 applies.

In this way, the whole tree will be finally processed, because each rule application will delete certain  vertices.

In particular, applying these rules could also split-up the tree, in which case the components can be separately processed.}

\begin{remark} 
\label{rem-trees}  \SV{$(*)$}
Let us mention that we also derived several linear-time algorithms for solving \textsc{Ext VC} (and hence \textsc{Ext IS}) on trees in this paper.
\LV{\smallskip
\noindent
\underline{Algorithm 1}:}\SV{(1)} A simple restriction of the mentioned  DP algorithm on graphs of bounded treewidth solves this problem. \LV{The main drawback is that this algorithm is rather implicit in this paper.

\smallskip
\noindent
\underline{Algorithm 2}:}\SV{(2) Apply our reduction rules exhaustively.}\LV{\remarksontrees}
(3) Check the characterization given in Theorem~\ref{Tree_carac_Ext_VC}. Also, Theorem~\ref{theo:PoE-VC_Chordal} provides another polynomial-time algorithm on trees.
\end{remark}

\section{Conclusions}\label{sec:conclusions}

We have found many graph classes where \textsc{Ext VC} (and hence also \textsc{Ext IS}) remains $\np$-complete, but also many classes where these problems are solvable in poly-time. The latter findings could motivate looking into parameterized algorithms that consider the distance from  favorable graph classes in some way.

It would be also interesting to study further optimization problems that could be related to our extension problems, for instance the following ones, here formulated as decision problems \SV{(a)}\LV{\begin{itemize}
    \item} Given $G,U,k$, is it possible to delete at most $k$ vertices from the graph such that $(G,U)$ becomes a \yes-instance of \textsc{Ext VC}?
    
    Clearly, this problem is related to the idea of the price of extension discussed in this paper, in particular, if one restricts the possibly deleted vertices to be vertices from $U$.
\LV{\item}\SV{(b)}  Given $G,U,k$, is it possible to add at most $k$ edges from the graph such that $(G,U)$ becomes a \yes-instance of \textsc{Ext VC}?
\LV{
}  Recall that adding edges among vertices from $U$ does not change our problem, as they can never be private edges, but adding edges elsewhere might create private edges for certain vertices.
\LV{\end{itemize}
}
Such problems would be defined according the general idea of graph editing problems studied quite extensively in recent years. 
These problems are particularly interesting in graph classes where \textsc{Ext VC} is solvable in poly\SV{-}\LV{nomial }time.

Considering the underlying classical optimization problems, it is also a rather intriguing question, whether it is possible to decide for a given set $U$ if it can be extended not just to any inclusion minimal vertex cover but to a globally smallest one. It is tempting to think that our results on hardness of minimal extension transfer to this variant, but this is not the case. However, clearly, this is also a very difficult but rewarding task, as an efficient way to answer this question would yield an efficient algorithm to solve the minimum vertex cover problem.
From an approximation point of view, one might want to discuss how much it costs to force a fixed subset $U$ to be in the solution, in case of a scenario where this set is fixed for some reason, i.e., how large a minimal solution extending $U$ is compared to the optimum vertex cover which is not forced to include $U$. The simple example of a vertex $v$ of degree $n$ with $n$ neighbors of degree~1 gives an arbitrarily bad gap for this kind of consideration; the cardinality of an optimal vertex cover is~1 (the set $\{v\}$ is a minimum vertex cover), while a minimal solution extending the set $U=\{w\}$, where $w$ is a degree 1 neighbor of $v$, has cardinality $n$. At last, one could discuss how much of a subset $U$ can be kept when aiming for not just a minimal solution, as we did here with our discussion on the price of extension, but for a minimum one. 

\LV{\bibliographystyle{plain}}\SV{\bibliographystyle{abbrv}}
\bibliography{biblio}

\newpage

\SV{\section{Appendix}
\subsection{Proof of Theorem~\ref{caract_Ext_VC}}

\proofofcaractExtVC

\subsection{Details of the proof of Theorem~\ref{EXT IS planar}}

\proofofExtISplanarReasoning

\subsection{Details of the proof of Theorem~\ref{Inapprox:Max Ext VC}}

\proofofInapproxMaxExtVC

\subsection{A characterization on trees}

In this subsection, we assume that $U$ is an independent set, which is not a restriction by Remark~\ref{obsExt_vc}.

Given an undirected tree $T_r=(V,E)$, where $r\in V$ is a specified vertex called {\em root}, we denote by $\mathrm{fa^i_{T_r}}(v)$  of $v\in V$ for $i\geq 1$  the $i$-th visited vertex different from $v$ in the unique path from $v$ to $r$ in $T_r$. For instance, $\mathrm{fa^1_{T_r}}(v)$ is called {\em father} of $v$ while $\mathrm{fa^2_{T_r}}(v)$ is called {\em grandfather} of $v$. The set $\mathrm{ch_{T_r}}(v)=N_T(v)\setminus \{\mathrm{fa^1_{T_r}}(v)\}$ is called the set of {\em children} of $v$. The root has no father and its neighbors are its children. A {\em leaf} of $T$ is a vertex $v$ without children, i.e., $N_{T_r}(v)=\{\mathrm{fa^1_{T_r}}(v)\}$. The {\em level} of a vertex $v$ in $T_r$ is the distance $d_{T_r}(v,r)$, i.e., the number of edges in the path between $v$ and $r$. For $v\in V$, $T_v$ is the subtree of $T_r$ containing $v$ once the edge between $v$ and its father has been deleted from $T_r$. Hence, $v$ will be considered as a root of the subtree $T_v$.

\begin{figure}[h!]
\centering
 \begin{tikzpicture}[scale=0.7, transform shape]
    \tikzstyle{level 1}=[sibling distance=40mm]
    \tikzstyle{level 2}=[sibling distance=15mm]
    \tikzstyle{every node}=[circle,draw]

    \node (A) {$1$}
        child { node (B) {$2$}
                child { node (C) {$4$} edge from parent[<-]}
                child { node (D) {$5$} edge from parent[<-]}
              edge from parent[<-]}
        child {
            node (E) {$3$}
            child { node (F) {$6$} edge from parent[<-]}
            child { node (G) {$7$} edge from parent[<-]}
            child { node (H) {$8$} edge from parent[<-]}
        edge from parent[<-]};

 \end{tikzpicture}
\caption{Example of a  tree $T_1$ rooted at vertex $1$.}
\label{ex_tree}
\end{figure}
Let us fix arbitrarily the vertex $1$ as the root of the tree $T_1$; in Fig \ref{ex_tree}, the illustration displays the undirected edges as directed from a
vertex $v$ to its father. Then $\mathrm{fa^1_{T_1}}(1)=\emptyset$, $\mathrm{fa^1_{T_1}}(5) =2$  and $\mathrm{fa^2_{T_1}}(5) =1$.
For instance, vertices $5$, $6$ are leaves. $T_{2}$ is the subtree of $T_1$ rooted at $2$, containing vertices $\{2,4,5\}$.

We now characterize the \yes-instances of \textsc{Ext VC} in forests by a kind of forbidden structure. Consider a  tree $T=(V,E)$  and a set of vertices $U$.
A subtree  $T^\prime=(V^\prime,E^\prime)$ (ie., a connected induced subgraph) of a tree $T$ is called \emph{edge full with respect to $(T,U)$} if $U\subseteq V^\prime$,
$d_{T^\prime}(u)=d_{T}(u)$ for all $u\in U$.
A subtree $T^\prime=(V^\prime,E^\prime)$  is \emph{induced edge full with respect to $(T,U)$} if
it is edge full with respect to $(T,U\cap V^\prime)$. 

\begin{figure}[tbh]
\tikzstyle{vertex1}=[circle, draw, inner sep=0pt, minimum width=2pt, minimum size=0.3cm]
\centering
\begin{tikzpicture}[scale=0.8, transform shape]
\node[vertex1, fill] (v1) at (0,0){};
\node[vertex1, fill] (v4) at (0.8,0){};
\node[vertex1] (v3) at (0.6,0.6){};
\node[vertex1] (v2) at (0.4,1.2){};
\node[vertex1, fill] (v7) at (1.6,0){};
\node[vertex1] (v6) at (1.4,0.6){};
\node[vertex1] (v5) at (1.2,1.2){};
\node () at (0.9,-1) {$T=P_7$};
\draw(v1)--(v2)--(v3)--(v4)--(v5)--(v6)--(v7);

\node[vertex1, fill] (v'1) at (3,0){};
\node[vertex1] (v'2) at (3.4,1.2){};
\draw(v'1)--(v'2);

\node[vertex1] (v'3) at (4.6,0.6){};
\node[vertex1, fill] (v'4) at (4.8,0){};
\node[vertex1] (v'5) at (5.2,1.2){};
\draw(v'3)--(v'4)--(v'5);

\node[vertex1, fill] (v"1) at (7,0){};
\node[vertex1, fill] (v"4) at (7.8,0){};
\node[vertex1] (v"3) at (7.6,0.6){};
\node[vertex1] (v"2) at (7.4,1.2){};
\node[vertex1] (v"5) at (8.2,1.2){};
\draw(v"1)--(v"2)--(v"3)--(v"4)--(v"5);
\end{tikzpicture}
\caption{Three induced edge full subtrees of $T=P_7$ with respect to $(P_7,U)$ where vertices
in $U$ are painted in black.}\label{Fig2:Diff_Ind_Full_Trees}
\end{figure}

\begin{figure}[tbh]
\tikzstyle{vertex1}=[circle, draw, inner sep=0pt, minimum width=2pt, minimum size=0.3cm]
\centering
\begin{tikzpicture}[scale=0.8, transform shape]
\node[vertex1, fill] (v1) at (0,0){};
\node[vertex1, fill] (v4) at (0.8,0){};
\node[vertex1] (v3) at (0.6,0.6){};
\node[vertex1] (v2) at (0.4,1.2){};
\node[vertex1] (v7) at (1.6,0){};
\node[vertex1, fill] (v6) at (1.4,0.6){};
\node[vertex1] (v5) at (1.2,1.2){};
\node () at (0.9,-1) {$T$};
\draw(v1)--(v2)--(v3)--(v4);
\draw(v3)--(v5)--(v6)--(v7);

\node[vertex1, fill] (v'1) at (3,0){};
\node[vertex1, fill] (v'4) at (3.8,0){};
\node[vertex1] (v'3) at (3.6,0.6){};
\node[vertex1] (v'2) at (3.4,1.2){};
\draw(v'1)--(v'2)--(v'3)--(v'4);

\node[vertex1, fill] (v''4) at (5.8,0){};
\node[vertex1] (v''3) at (5.6,0.6){};
\node[vertex1] (v''7) at (6.6,0){};
\node[vertex1, fill] (v''6) at (6.4,0.6){};
\node[vertex1] (v''5) at (6.2,1.2){};
\draw(v''3)--(v''4);
\draw(v''3)--(v''5)--(v''6)--(v''7);

\node[vertex1, fill] (a1) at (8,0){};
\node[vertex1] (a3) at (8.6,0.6){};
\node[vertex1] (a2) at (8.4,1.2){};
\node[vertex1] (a7) at (9.6,0){};
\node[vertex1, fill] (a6) at (9.4,0.6){};
\node[vertex1] (a5) at (9.2,1.2){};
\draw(a1)--(a2)--(a3);
\draw(a3)--(a5)--(a6)--(a7);
\end{tikzpicture}
\caption{Three induced edge full subtrees of $T$ containing exactly 2 vertices from $U$
Vertices in $U$ are painted in black.}\label{Fig3:Diff_Ind_Full_Trees}
\end{figure}

Consider the following class of trees $\mathcal{T}$ containing black and white vertices,
 defined inductively by:

\begin{itemize}
\item[$\bullet$] Base case: A tree with a single vertex $x$ belongs to  $\mathcal{T}$ if $x$ is black.

\item[$\bullet$] If $T\in \mathcal{T}$, then the tree resulting from the addition of a $P_3$ (3 new vertices that form a path $p$) where one endpoint of $p$ is black, the two other vertices are white and the white endpoint of $p$ is linked to any black vertex of $T$ is in $\mathcal{T}$.
\end{itemize}

\begin{example}\label{exa-bwt}
 There are five black-and-white trees on at most ten vertices in $\mathcal{T}$: four paths (on one, four, seven and ten vertices), where the endpoints are black and otherwise every third vertex is black, and one is a subdivided star, whose center is black and of degree three, and the three black leaves are at distance three from the center  (see Figure~\ref{Fig:Diff_B_W_Trees}).
\end{example}
\begin{figure}[tbh]
\tikzstyle{vertex1}=[circle, draw, inner sep=0pt, minimum width=2pt, minimum size=0.3cm]
\centering
\begin{tikzpicture}[scale=0.8, transform shape]
\node[vertex1, fill] (v11) at (-0.3,0){};
\node[vertex1, fill] (v21) at (0.9,0){};
\node[vertex1, fill] (v24) at (1.7,0){};
\node[vertex1] (v23) at (1.5,0.6){};
\node[vertex1] (v22) at (1.3,1.2){};
\draw(v21)--(v22)--(v23)--(v24);

\node[vertex1, fill] (v31) at (3,0){};
\node[vertex1, fill] (v34) at (3.8,0){};
\node[vertex1] (v33) at (3.6,0.6){};
\node[vertex1] (v32) at (3.4,1.2){};
\node[vertex1, fill] (v37) at (4.6,0){};
\node[vertex1] (v36) at (4.4,0.6){};
\node[vertex1] (v35) at (4.2,1.2){};
\draw(v31)--(v32)--(v33)--(v34)--(v35)--(v36)--(v37);

\node[vertex1, fill] (v41) at (6,0){};
\node[vertex1, fill] (v44) at (6.8,0){};
\node[vertex1] (v43) at (6.6,0.6){};
\node[vertex1] (v42) at (6.4,1.2){};
\node[vertex1, fill] (v47) at (7.6,0){};
\node[vertex1] (v46) at (7.4,0.6){};
\node[vertex1] (v45) at (7.2,1.2){};
\node[vertex1, fill] (v410) at (8.4,0){};
\node[vertex1] (v49) at (8.2,0.6){};
\node[vertex1] (v48) at (8,1.2){};
\draw(v41)--(v42)--(v43)--(v44)--(v45)--(v46)--(v47)--(v48)--(v49)--(v410);

\node[vertex1, fill] (v51) at (10,0){};
\node[vertex1, fill] (v54) at (10.8,0){};
\node[vertex1] (v53) at (10.6,0.6){};
\node[vertex1] (v52) at (10.4,1.2){};
\node[vertex1, fill] (v57) at (11.6,0){};
\node[vertex1] (v56) at (11.4,0.6){};
\node[vertex1] (v55) at (11.2,1.2){};
\node[vertex1, fill] (v510) at (11.6,-0.6){};
\node[vertex1] (v59) at (10.8,-0.6){};
\node[vertex1] (v58) at (10,-0.6){};
\draw(v51)--(v52)--(v53)--(v54)--(v55)--(v56)--(v57);
\draw(v54)--(v58)--(v59)--(v510);

\end{tikzpicture}
\caption{Five different black-and-white trees on at most ten vertices in $\mathcal{T}$.}\label{Fig:Diff_B_W_Trees}
\end{figure}

If $T=(V,E)$ is a tree and $X\subseteq V$, then we use $T[X\to\mathrm{black}]$ to denote the black-and-white tree where exactly the vertices from $X$ are colored black.

\begin{remark}\label{rem-tree-black-leaves}
By induction, it is easy to see that any leaf of any tree in $\mathcal{T}$ is black.
Again inductively, one sees that for any black-and-white tree $T\in\mathcal{T}$, all vertices at distance one or two from a black vertex $v$ are white, while all vertices at distance three from $v$ are black. In particular, if  $\mathcal{T}^\prime=\{(T,U)\suchthat N[U]=V(T)\}$, then  $\mathcal{T}\subset \mathcal{T}^\prime$.
\end{remark}

We are ready to characterize the solutions in trees.\\ 

\noindent
{\bf Restatement of Theorem~\ref{Tree_carac_Ext_VC}}
{\it 
Let $T=(V,E)$ be a tree and $U\subseteq V$ be an independent set. Then, $(T,U)$ is a \yes-instance of \textsc{Ext VC} \SV{iff}\LV{if and only if} there is no subtree $T^\prime=(V^\prime,E^\prime)$ of $T$ that is induced edge full with respect to $(T,U)$ such that $T^\prime[U\to\mathrm{black}]\in\mathcal{T}$.
}

\begin{proof}
First, observe that if $U=\emptyset$, then $(T,U)$ is clearly a  \yes-instances of \textsc{Ext VC}, while subtrees $T'[U\to\mathrm{black}]$ would contain white vertices only and hence would never belong to $\mathcal{T}$. Hence, in this case, the assertion of the theorem is clearly satisfied, so that we can assume $U\neq \emptyset$ in the following reasoning.

We prove the implication from left to right by contraposition. 
Consider
an instance $(T,U)$ of \textsc{Ext VC} such that $T$ contains a subtree $T^\prime[U\to\mathrm{black}]\in \mathcal{T}$  that is induced edge full with respect to $(T,U)$. Then,
the leaves of $T^\prime$ are some leaves of $T$. By Remark~\ref{rem-tree-black-leaves}, these vertices belong to~$U$.
If  $(T,U)$ is a \yes-instance of \textsc{Ext VC}, then, according to  item $(iii)$ of Theorem~\ref{caract_Ext_VC}, there exists an independent dominating set $S^\prime\subseteq N_T[U]\setminus U$ of
$T[N_T[U]]$.  
Consider a black leaf~$u$; its neighbor $v$ necessarily belongs to the independent dominating set $S^\prime$ and then  the neighborhood $N_{T^\prime}(v)\setminus \{u\}$ of $v$ (other that $u$) does not belong to it. Then,
using the inductive definition of $\mathcal{T}$, one new black vertex $u^\prime$ is a neighbor of $N_{T^\prime}(v)$ and then by repeating the process one new neighbor of $u^\prime$ must be a part of the independent dominating set, and so on. At the end of the process, we get a contradiction because we end by a leaf which is black and which is not dominated by $S^\prime$.\\

The other direction is proved in induction on $n$, the number of vertices of the considered tree $T$.
If $n\leq 3$, then an exhaustive search proves that the only black-and-white subtree in $\mathcal{T}$ contains one (black) vertex only (also cf.~Ex.~\ref{exa-bwt}).
This can be a subtree that is induced edge full with respect to a vertex set that clearly contains this black vertex only (by construction) if and only if $n=1$. In that case, $(T,U)$ is a \no-instance of \textsc{Ext VC}.
If $T$ contains two or three vertices, then there is no way to find a black-and-white subtree in $\mathcal{T}$ that is induced edge full with respect to $(T,U)$; observe that $U$ necessarily contains one or two vertices, because $U$ is independent.
Yet, it is also clear that $(T,U)$ is a \yes-instance to  \textsc{Ext VC} if $T$
has two or three vertices and $U$ is independent.

Assume the result is valid for any tree of at most $n$ vertices satisfying the hypothesis of  Theorem \ref{Tree_carac_Ext_VC} and let $(T,U)$ be an instance of \textsc{Ext VC}, where  $T$ is a tree of $n+1\geq 4$ vertices, $U$ is an independent set, and $T$ does not contain any subtree $T^\prime$ that is induced edge full with respect to $(T,U)$, such that $T'[U\to\mathrm{black}]\in\mathcal{T}$.
As said above, we can assume $U\neq\emptyset$. Moreover, $T$ is not a star $K_{1,p}$ because one of the two minimal vertex cover containing $U$  is a certificate (recall $U$ is supposed to be an independent set).

Set $r=v\in V\setminus U$ with $d_{T}(v)\geq 2$ be a root of $T=T_r$ using previous notations. There is such vertex  since $n\geq 4$, $U$ is an independent set and $T\neq K_{1,p}$. Consider two cases:

 \begin{itemize}
   \item[$\bullet$]  $T$ has no leaves in $U$. For each $u\in U$, let $v_u\in \mathrm{ch_{T_r}}(u)$ be any child of $u$. The set $S=\{v_u:u\in U\}$ can be extended to a set which satisfies item $(iii)$ of Theorem \ref{caract_Ext_VC} and then $(T,U)$ is a \yes-instance of \textsc{Ext VC}.

    \item[$\bullet$] $T$ admits some leaves in $U$. Let $u\in U$ be a leaf which has a grandfather in $T_r$ different from $r$; if such vertex does not exist, then
    $S=\{v_u:u\in U\}\cup \{r\}$ can be extended to a set which satisfies item $(iii)$ of Theorem \ref{caract_Ext_VC}, where $v_u$ is defined as in the previous item.

    Otherwise, let $v=\mathrm{fa^2_{T_r}}(u)$ be the grandfather of $u$ with $v\neq r$.
    Consider the two following cases: $v\in U$ and $v\notin U$. If $v\in U$, then let $T_v=T-\{u\}$. By construction, $(T_v,U\setminus \{u\})$ satisfies the hypothesis of Theorem and $T_v$ has strictly less than $n+1$ vertices. Hence, there is a minimal vertex cover $S$ in $T_v$ which contain $U\setminus \{u\}$ (and then not $\mathrm{fa_{T_r}}(u)$). So, $S\cup\{u\}$ is a minimal vertex cover of $T$ containing $U$. \\

   Now, assume $v\notin U$ and consider the two subtrees $T_v$ and $T_w$ resulting from the deletion of the edge between $v$ and its father $w=\mathrm{fa^3_{T_r}}(u)$.
    Let $U_v$ and $U_w$ be the vertices of $U$ inside $T_v$ and $T_w$, respectively. These two trees have strictly less than $n+1$ vertices. By construction, $(T_v,U_v)$ satisfies the condition of the theorem because it is induced edge full
    with respect to $(T,U)$ ($v\notin U$). Since induced edge full subtree property is hereditary, then the
  inductive hypothesis implies the existence of a minimal vertex cover $S_v$ of $T_v$  with $U_v\subseteq S_v$. Observe,
  $v\in S_v$ because otherwise $u\notin S_v$. Now, by contradiction assume that $(T_w,U_w)$ does not satisfy the condition of the theorem.
This means that $w\in U_w$, is a leaf of $T_w$ and there is $T^\prime\in \mathcal{T}$ containing $w$. Since $U$ is an independent set,  $(T^\prime\cup\{u\mathrm{fa^1_{T_r}}(u),
    \mathrm{fa^1_{T_r}}(u)\mathrm{fa^2_{T_r}}(u), \mathrm{fa^2_{T_r}}(u)\mathrm{fa^3_{T_r}}(u)\})\in \mathcal{T}$ and is induced full with respect to $(T,U)$ which is a contradiction.
    Hence, using the induction hypothesis, there is a minimal vertex cover $S_w$ of $T_w$  with $U_w\subseteq S_w$. In conclusion, $S=S_v\cup S_w$ is a certificate because $v\in S$.
 \end{itemize}
\end{proof}

Using Theorem \ref{Tree_carac_Ext_VC}, we are able to produce a linear-time algorithm:

\begin{theorem}\label{Tree_algo_Ext_VC}
\textsc{Ext VC} can be decided in linear-time in forests.
\end{theorem}
\begin{proof}
We proceed as in the last part of Theorem \ref{Tree_carac_Ext_VC}. First, we delete edges between the required vertices of the given instance in order to obtain an independent set $U$; see Remark~\ref{obsExt_vc}. Also, we assume  every subtree $T_i$ of the forest has $n_i$ vertices with $n_i\geq 4$, 
and $T_i\neq K_{1,n_i-1}$ (by discarding star subtrees); then, we define a root $r_i\notin U$ for each subtree $T_i$ which is not a leaf of $T_i$, where $U_i$ denotes the vertices from $U$ included in $T_i$.
For each connected component $T_i$ with root $r_i$ of the forest, we find (if any) a leaf in $u\in U_i$ with largest level. If such leaf does not exist, then we could conclude $T_i$ contains a minimal vertex cover including $U_i$ and we will consider next subtree. Let $v=\mathrm{fa^2_{T_i}}(u)$; if $v\in U_i$, then we delete $u_i$ and its father $\mathrm{fa_{T_i}}(u)$ from $T_i$ and we only apply the same process on the subtree $T_v$ of $T_i$ containing $v$. Now, assume $v\notin U_i$; we separate $T$ into $T_v$ and $T_1$ by deleting the edge between $v$ and its father $\mathrm{fa_{T_i}}(v)$ (note $u$ is in $T_v$). In $T_v$, we search a path $p$ in $\mathcal{T}$ on four vertices which contains $v$ (and also $u$). If one has been found, or if $T_1$ is a leaf in $U$ (corresponding to a base case graph of $\mathcal{T}$), we return \no. Otherwise, we delete all vertices below $v$ ($v$ itself) from $T_v$. and we apply recursively the same procedure on $T_1$ and this new subtree.
\end{proof}

Observe that this algorithm is very similar to the one obtained by applying reduction rules; see Remark~\ref{rem-trees}

\subsection{Proof of Theorem~\ref{theo:PoE_VC_Bip}}

\proofofPoEVCBip

\subsection{Proof of Theorem~\ref{theo:PoE-VC_Chordal}}

\proofofPoEVCChordal

\subsection{Proof of Lemma~\ref{lem-soundrules}}

\proofofLemsoundrules

\subsection{Proof of Theorem~\ref{thm-ExtVCsubcubic}}

\proofofExtVCsubcubic

\subsection{Proof of Corollary~\ref{cor-ExtVCsubcubic}}

\proofofCorExtVCsubcubic

\subsection{Further comments on Remark~\ref{rem-planar}}

\remarksonplanarremarks

\subsection{Further comments on Remark~\ref{rem-trees}}

\remarksontrees
}

\section{Generalizations to Extensions of $H$-graph cover and $H$-free subgraph}\label{sec:hfree}

Assume that graph $H=(V_H,E_H)$ is fixed; the {\sc maximum induced $H$-free subgraph problem}, \textsc{Induced $H$-free} for short, asks, given a graph $G=(V,E)$, to find the largest subset of vertices $S\subseteq V$ such that the subgraph $G[S]$ induced by $S$ 
is $H$-free, i.e., it does not contain any copy of $H$. A corresponding extension version is given by:

\begin{center}
\fbox{\begin{minipage}{.95\textwidth}
\noindent{\textsc{Ext Induced $H$-free}}\\\nopagebreak
{\bf Input:} A  graph $G=(V,E)$, a set of vertices $U \subseteq V$. \\\nopagebreak
{\bf Question:}  Does $G$ have a maximal subgraph $G[S]$ induced by $S$  with $S\subseteq U$ which is $H$-free?
\end{minipage}}
\end{center}

The particular case of $H=K_2$ corresponds to \textsc{Ext IS}, because $S$ induces a  $K_2$-free subgraph iff it is an independent set. We now generalize our previous findings on the complexity of  \textsc{Ext IS} towards this more general setting. Recall that a graph is \emph{biconnected} if it stays connected after deleting any single vertex.
\begin{theorem}\label{Theo:Ext_H-free}
If $H$ is biconnected with at least 2 vertices, then \textsc{Ext Induced $H$-free} is $\np$-complete.
\end{theorem}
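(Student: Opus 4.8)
The plan is to establish membership in $\np$ directly and to prove $\np$-hardness by a reduction from \textsc{Ext IS}, which is $\np$-complete by Theorem~\ref{Bip_Ext_VC}. Membership is routine: since $H$ is fixed, given a guessed set $S\subseteq U$ one can check in time $n^{O(|V_H|)}$ that $G[S]$ is $H$-free, and that $S$ is maximal with this property by verifying, for every $v\in U\setminus S$, that $G[S\cup\{v\}]$ already contains a copy of $H$. Hence \textsc{Ext Induced $H$-free} $\in\np$. For hardness, if $|V_H|=2$ then $H=K_2$ and the problem is literally \textsc{Ext IS}, so we may assume $h:=|V_H|\ge 3$. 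Since $H$ is biconnected, $\delta(H)\ge 2$ and $H-v$ is connected for every vertex $v$; furthermore one can fix an edge $ab\in E_H$ such that $H-\{a,b\}$ is connected (or has at most one vertex), a routine fact for biconnected graphs, e.g.\ via ear decompositions.

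Starting from an instance $(G,U)$ of \textsc{Ext IS} with $G=(V,E)$ — equivalently, the question whether $G$ has an independent dominating set contained in $U$ — I would build $(G',U')$ in polynomial time as follows: keep $V$ and $E$, and for every edge $xy\in E$ introduce a fresh set $W_{xy}$ of $h-2$ vertices together with enough edges inside $\{x,y\}\cup W_{xy}$ to make this set induce a copy of $H$ in which $x,y$ play the roles of $a,b$ (so the already-present edge $xy$ plays the role of $ab$), the vertices of $W_{xy}$ being adjacent to nothing outside their gadget. Put every $W_{xy}$ into $U'$, so that the forbidden vertices of $G'$ are exactly those of $V\setminus U$. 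The intended correspondence sends an independent dominating set $S\subseteq U$ of $G$ to $S':=S\cup\bigcup_{xy\in E}W_{xy}$, and a maximal $H$-free set $S'\subseteq U'$ back to $S'\cap V$. In the forward direction, independence of $S$ means that for every edge $xy$ at least one of $x,y$ is missing from $S'$, so each gadget contributes to $G'[S']$ only an induced $H-b$, $H-a$, or $H-\{a,b\}$, all $H$-free; two distinct gadgets meet only in a single vertex of $V$, which (being a potential cut vertex) cannot be interior to a biconnected copy of $H$, so $G'[S']$ is $H$-free, while domination of $G$ by $S$ yields maximality of $S'$ (a vertex $x\in U\setminus S$ with neighbour $y\in S$ has $S'\cup\{x\}\supseteq\{x,y\}\cup W_{xy}$, a copy of $H$). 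In the backward direction one shows $S'\cap V$ is independent in $G$ (if $x,y\in S'$ for $xy\in E$ then, $W_{xy}$ having no outside neighbours, maximality of $S'$ forces the gadget copy of $H$ into $S'$), and dominating and contained in $U$, again by maximality.

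The two points demanding care — and where biconnectedness of $H$ is used — are: (i) no copy of $H$ in $G'$ may use $W$-vertices of two different gadgets (easy, since distinct gadgets meet in at most one vertex, which would then be a cut vertex of the copy) nor straddle a single gadget boundary (here one uses that $W_{xy}$ communicates with the rest of $G'$ only through the pair $\{x,y\}$, together with the choice of $ab$ so that a copy of $H$ cannot be cut the way such straddling requires); and (ii) for $h\ge 4$ the plain gadget above still admits a locally maximal configuration with $x,y\in S'$ and exactly one vertex of $W_{xy}$ omitted, since that omitted vertex is "blocked" yet not forced in — so each gadget must be reinforced, e.g.\ by attaching to every vertex of $W_{xy}$ (and to $x$ and $y$) further pendant copies of $H$ minus one vertex, or by adjoining "twin" copies of the gadget sharing only $\{x\}\cup W_{xy}$, so that "$x,y\in S'$" becomes genuinely incompatible with maximality. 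I expect (ii) to be the main obstacle. An alternative route that largely sidesteps (ii) is to mimic the \textsc{4P3C3SAT}-based reduction used for Theorem~\ref{EXT IS planar}, assembling variable and clause gadgets from copies of $H$ and of $H$ minus a vertex; this costs more bookkeeping but the same biconnectedness argument of point (i) again controls stray copies of $H$.
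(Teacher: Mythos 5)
Your reduction is genuinely different from the paper's (you reduce from \textsc{Ext IS} with a per-edge gadget; the paper reduces from \textsc{$H$-free 2-colorability} with two copies of $G$ and vertex-identified copies of $H$), but it has a real gap --- one you partly flag yourself in point~(ii), and which you do not repair. The backward direction breaks: your claim that ``if $x,y\in S'$ for $xy\in E$ then maximality of $S'$ forces the gadget copy of $H$ into $S'$'' is false. Maximality only says that every \emph{omitted} vertex is blocked, i.e., adding it creates a copy of $H$; it does not say that every $H$-free superset is included. Concretely, take $x,y\in S'$ and $W_{xy}\setminus\{w\}\subseteq S'$ with $w\notin S'$: the gadget then induces $H-w$ in $G'[S']$, which is $H$-free (it has only $h-1$ vertices), while adding $w$ completes the copy of $H$, so $w$ is blocked and $S'$ can be maximal with both endpoints of the edge selected. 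Hence $S'\cap V$ need not be independent and the reduction does not establish hardness. Note this failure already occurs for $h=3$ ($H=K_3$, $|W_{xy}|=1$), not only for $h\ge 4$ as you state. The reinforcements you sketch (pendant copies of $H$ minus a vertex, twin gadgets) are not worked out, and it is not obvious they close the hole: any gadget whose copies of $H$ consist entirely of \emph{allowed} vertices is susceptible to the same ``sacrifice one vertex'' escape.

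This is exactly the point where the paper's construction differs in an essential way: there, every gadget copy of $H$ (each $H_i$, $H_{n+i}$) contains the \emph{forbidden} vertex $w'_i$. Consequently no allowed gadget vertex can ever be blocked by a gadget-internal copy of $H$ (such a copy would have to use $w'_i\notin U$), so maximality genuinely forces all allowed gadget vertices into $S$, and the forbidden $w'_i$ is what makes ``neither copy of $v_i$ selected'' impossible. If you want to salvage your \textsc{Ext IS}-based route, the fix is to redesign the edge gadget so that the copy of $H$ witnessing ``$x$ and $y$ cannot both be chosen'' necessarily passes through a forbidden vertex --- for instance, forcing the $W_{xy}$-vertices into $S'$ by attaching to each of them a private copy of $H$ whose remaining $h-1$ vertices include a forbidden one --- and then re-verify both directions. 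Your biconnectivity/cut-vertex argument for controlling stray copies of $H$ (point~(i)) is sound and is essentially the same argument the paper uses; the membership-in-$\np$ part is also fine. As written, however, the hardness proof is incomplete.
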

\begin{proof}
Let $H=(V_H,E_H)$ with $n_H=|V_H|$ vertices be a biconnected graph and assume $n_H\geq 3$ ( $n_H=2$ corresponds to \textsc{Ext IS} which has been proved $\np$-complete in Theorem \ref{Bip_Ext_VC}). The proof is based on a reduction
from {\sc  $H$-free 2-colorability}, denoted by 
\textsc{$H$-2Col} for short.
With fixed $H$, the problem 
\textsc{$H$-2Col}
consists in deciding if the vertices of a given graph $G=(V,E)$ can be partitioned into two $H$-free induced subgraphs $G_{V_i}$,  $i=1,2$ (so $V=V_1\cup V_2$). 
\textsc{$H$-2Col} is $\np$-complete iff $H$ contains at least 3 vertices; see~\cite{Achlioptas97}.

From $G=(V,E)$ with $n$ vertices, as an  instance of \textsc{$H$-2Col},
we build an instance of \textsc{Ext Induced $H$-free} as follows:

Let $u$ and $w$ be two distinct vertices of $H$. We consider two copies $G_1$ and $G_2$ of $G$ where $v^1$ and $v^2$ are copies of vertex $v\in V$ and $2n$ copies $H_i$ of $H$ (where $u_i$ and $w_i$ are copies of $u$ and $w$). We collapse together two copies $H_i$ and $H_{n+i}$ by merging vertices $w_i$ and $w_{n+i}$; Let $H^\prime_i$ be the resulting graph and $w^\prime_i$ be the vertex corresponding to$w_i$ and $w_{n+i}$ after the merging. Now, we merge vertices $v^1_i$ with $u_i$ and $v^2_i$ with $u_{n+i}$ and we get the graph $G^\prime$ as part of an instance of \textsc{Ext Induced $H$-free}. Hence, $G^\prime$ contains $G_1$, $G_2$ and the graphs  $H^\prime_i$ for $i=1,\dots,n$. Finally, we set $U=V(G^\prime)\{w^\prime_i\colon 1\leq i\leq n\}$.

We claim that $G$ is a yes-instance of \textsc{$H$-2Col} 
 iff  ($G^\prime$,$U$)
is a yes-instance of \textsc{Ext Induced $H$-free}.

Assume that $(V_1,V_2)$ is an $H$-free 2-coloring (bipartition) 
 of $G$. Consider  any maximal $H$-free subgraph of $G$ containing
$V_1$ (resp., $V_2$) and call it $V^\prime_1$ (resp., $V^\prime_2$). Finally, let us denote by $V(H^\prime_i)$ the set of vertices of subgraph $H^\prime_i$ for each $i\in\{1,\dots,n\}$.
We claim that the set $S=\bigcup_{i=1}^n (V(H^\prime_i)\setminus \{w^\prime_i,v_i^1,v_i^2\})\cup \bigcup_{j=1}^2\{v_i^j: v_i\in V^\prime_j\}$  induces a maximal $H$-free subgraph of $G^\prime$ with $S\subseteq U$. Actually, this is clear inside either each copy of $G$ or each $V(H^\prime_i)\setminus \{w^\prime_i\}$. An assumed copy $R$ of $H$ in $S$ must hence include vertices in a copy $G_j$ with $j=1,2$ and also vertices in a copy $H^\prime_i$ for some $i=1,\dots,n$.
Hence, to be connected, $R$ has to contain some $v_i^j$ which is a cut-vertex of $R$, separating the vertices in $R$ from $V(G_j)\setminus\{v_i^j\}$, for some $j\in \{1,2\}$, from $V(H_i^\prime)\setminus\{v_i^j\}$, which is a contradiction to $H$ being biconnected.

Conversely, assume that there exists a set $S\subseteq U$ which induces a maximal  $H$-free subgraph of $G$. Let $S_i$ for $i=1,2$ be the vertices of $S$ included in copy $G_i$.
Let $V_1=\{v_j\colon v_j^1\in S_1\}$ and $V_2=\{v_j\colon v_j^2\in S_2\mbox{~and~} v_j^1\notin S_1\}$. We claim that  $(V_1,V_2)$ is an $H$-free 2-coloring 
of $G$. Obviously, each subgraph of $G$ induced by $V_i$ is $H$-free for $i=1,2$. If  $(V_1,V_2)$ is not a vertex partition of $G$, then there exists some $i\in \{1,\dots,n\}$ such that $v_i^j\notin S$ for both $j=1$ and $j=2$ for some  $i=1,\dots,n$. This however implies that  $(V(H^\prime_i)\setminus \{v_i^1,v_i^2\})\subseteq S$ because
$(V(H^\prime_i)\setminus \{v_i^1,v_i^2\})$ is $H$-free (recall that $H$ is assumed to be biconnected). Thus, $w^\prime_i\in S$ which is a contradiction to $S\subseteq U$.
\end{proof}

We are now considering a covering analogue to \textsc{Ext Induced $H$-free}.

\begin{center}
\fbox{\begin{minipage}{.95\textwidth}
\noindent{\textsc{Ext $H$-cover}}\\\nopagebreak
{\bf Input:} A  graph $G=(V,E)$, a set of vertices $U \subseteq V$. \\\nopagebreak
{\bf Question:}  Does $G$ have a minimal subset $S$ which covers all copies of $H$ with $U\subseteq S$?
\end{minipage}}
\end{center}

Similarly to previous remark, we have $\textsc{Ext $K_2$-cover}=\textsc{Ext VC}$. More generally for any fixed graph $H$, $S$ is a minimal $H$-cover of $G=(V,E)$ iff $V\setminus S$ is a maximal $H$-free subgraph. Hence, using same the reasoning that the one given in Remark \ref{rem-Ext VC=IS}, we deduce:
\begin{proposition}\label{prop:Ext_H-cover}
If $H$ is biconnected, then \textsc{Ext $H$-cover} is $\np$-complete.
\end{proposition}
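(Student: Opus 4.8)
The plan is to derive Proposition~\ref{prop:Ext_H-cover} directly from Theorem~\ref{Theo:Ext_H-free} via a complementation argument, exactly parallel to the way Remark~\ref{rem-Ext VC=IS} relates \textsc{Ext VC} and \textsc{Ext IS}. First I would record the elementary fact that, for any fixed graph $H$ and any graph $G=(V,E)$, a set $S\subseteq V$ is an (inclusion-wise) minimal $H$-cover of $G$ if and only if $V\setminus S$ is an (inclusion-wise) maximal $H$-free induced subgraph of $G$. The ``$H$-cover'' side: $S$ covers every copy of $H$ means $G[V\setminus S]$ is $H$-free; removing a vertex $v$ from $S$ keeps it a cover iff $v$ lies in no copy of $H$ in $G[(V\setminus S)\cup\{v\}]$ iff adding $v$ to $V\setminus S$ keeps it $H$-free. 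Hence $S$ is a \emph{minimal} cover exactly when no single vertex can be removed, i.e.\ $V\setminus S$ is a \emph{maximal} $H$-free set.

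Next I would use this bijection to translate instances: $(G,U)$ is a \yes-instance of \textsc{Ext $H$-cover} (asking for a minimal $H$-cover $S\supseteq U$) if and only if $(G,V\setminus U)$ is a \yes-instance of \textsc{Ext Induced $H$-free} (asking for a maximal $H$-free set $S'\subseteq V\setminus U$), via $S'=V\setminus S$. This map is clearly polynomial-time computable in both directions, so it is a polynomial reduction each way; combined with the trivial membership of \textsc{Ext $H$-cover} in $\np$ (guess $S$, check in polynomial time that $G[V\setminus S]$ is $H$-free --- since $H$ is fixed this is done by brute force over $|V_H|$-subsets --- and that every $v\in S$ has a private copy of $H$), we get that \textsc{Ext $H$-cover} is $\np$-complete whenever \textsc{Ext Induced $H$-free} is. Invoking Theorem~\ref{Theo:Ext_H-free}, which asserts $\np$-completeness of \textsc{Ext Induced $H$-free} for biconnected $H$ with at least $2$ vertices, the claim follows. (For $H=K_2$ one may alternatively just cite Theorem~\ref{Bip_Ext_VC} together with the identity $\textsc{Ext $K_2$-cover}=\textsc{Ext VC}$.)

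I do not expect any serious obstacle here; the only point that needs a little care is checking the minimality/maximality correspondence cleanly, in particular that ``private copy of $H$ for $v\in S$'' (a copy of $H$ containing $v$ that is otherwise disjoint from $S$) corresponds precisely to ``$v$ can be added to $V\setminus S$ without creating a copy of $H$''. Everything else is a restatement of the argument behind Remark~\ref{rem-Ext VC=IS} in the $H$-parametrized language, and since the statement of the proposition already says ``using the same reasoning as the one given in Remark~\ref{rem-Ext VC=IS}'', the write-up can be kept to a few lines.

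\begin{proof}
Fix a biconnected graph $H$ with at least two vertices. For any graph $G=(V,E)$ and any $S\subseteq V$, note that $S$ covers all copies of $H$ in $G$ iff $G[V\setminus S]$ is $H$-free, and removing a vertex $v\in S$ from $S$ yields a set that still covers all copies of $H$ iff $v$ belongs to no copy of $H$ in $G[(V\setminus S)\cup\{v\}]$, i.e.\ iff $G[(V\setminus S)\cup\{v\}]$ is still $H$-free. Consequently, $S$ is a \emph{minimal} $H$-cover of $G$ if and only if $V\setminus S$ is a \emph{maximal} $H$-free induced subgraph of $G$. Hence $(G,U)$ is a \yes-instance of \textsc{Ext $H$-cover} if and only if $(G,V\setminus U)$ is a \yes-instance of \textsc{Ext Induced $H$-free}: indeed $S$ is a minimal $H$-cover with $U\subseteq S$ iff $V\setminus S$ is a maximal $H$-free set with $V\setminus S\subseteq V\setminus U$. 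This equivalence is computable in polynomial time in both directions, so it constitutes a polynomial-time reduction between the two problems; since \textsc{Ext $H$-cover} is clearly in $\np$ (guess $S$, then, using that $H$ is fixed, verify in polynomial time both that $G[V\setminus S]$ is $H$-free and that every $v\in S$ is contained in a copy of $H$ all of whose other vertices lie outside $S$), the $\np$-completeness of \textsc{Ext Induced $H$-free} established in Theorem~\ref{Theo:Ext_H-free} transfers to \textsc{Ext $H$-cover}.
\end{proof}
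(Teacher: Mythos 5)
Your proposal is correct and follows exactly the paper's own route: the observation that $S$ is a minimal $H$-cover iff $V\setminus S$ is a maximal $H$-free induced subgraph, giving the complementation $(G,U)\mapsto(G,V\setminus U)$ between \textsc{Ext $H$-cover} and \textsc{Ext Induced $H$-free}, and then invoking Theorem~\ref{Theo:Ext_H-free}. The paper states this only in one line before the proposition; your write-up just makes the minimality/maximality correspondence and the $\np$-membership check explicit.
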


Notice that this last assertion is interesting, as for the corresponding classical problem \textsc{$H$-cover}, no easy conditions like biconnectivity are known to yield $\np$-completeness results, see\SV{~\cite{KratochvilPT98}}\LV{ \cite{AbeFelSti91,KratochvilPT97,KratochvilPT98}}.

We are now stating a characterization of graphs admitting an $H$-cover extension that could be compared to Theorem~\ref{caract_Ext_VC}.

\begin{theorem}\label{caract_Ext_H_cover}
Let $G=(V,E)$ be a graph and  $U \subseteq V$ be a set of vertices. There is a minimal $H$-cover $S$ of $G$ extending $U$ iff the two
following conditions hold:

\begin{description}
\item[$(i)$] For every $u\in U$, there is a copy $H_u=(V(H_u),E_u)$ of $H$ in $G$ such that $V(H_u)\cap U=\{u\}$.

\item[$(ii)$] If $V'=\bigcup_{u\in U}V(H_u)$, then the subgraph $G'$ of $G$ induced by $V'\setminus U$ is $H$-free.
\end{description}
\end{theorem}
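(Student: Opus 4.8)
The statement is an analogue of Theorem~\ref{caract_Ext_VC}, so the plan is to mimic the structure of that proof, replacing ``private edge'' by ``private copy of $H$'' and ``independent dominating set'' by ``$H$-free set that meets every private copy in exactly the required vertex''. First I would establish the forward direction. Suppose $S$ is a minimal $H$-cover of $G$ with $U \subseteq S$. Minimality of $S$ means every $v \in S$ has a \emph{private copy} of $H$, i.e.\ a copy $H_v$ of $H$ in $G$ with $V(H_v) \cap S = \{v\}$ (otherwise $S \setminus \{v\}$ would still cover all copies of $H$, contradicting minimality). Applying this to each $u \in U \subseteq S$ gives a copy $H_u$ with $V(H_u) \cap S = \{u\}$; since $U \subseteq S$, in particular $V(H_u) \cap U = \{u\}$, which is $(i)$. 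For $(ii)$, set $V' = \bigcup_{u \in U} V(H_u)$ and $G' = G[V' \setminus U]$. Every vertex of $V(H_u) \setminus \{u\}$ lies outside $S$, hence $V' \setminus U \subseteq V \setminus S$. Since $S$ is an $H$-cover, $V \setminus S$ induces an $H$-free subgraph, and any induced subgraph of an $H$-free graph is $H$-free (here I use that $H$-freeness as an \emph{induced}-subgraph condition is hereditary --- this needs a brief word, since the covering problem is about copies, not induced copies; I will be careful to match the paper's convention of ``copy of $H$'' meaning subgraph, in which case hereditariness is immediate). Thus $G'$ is $H$-free, giving $(ii)$.

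For the converse, assume $(i)$ and $(ii)$ hold, and build a minimal $H$-cover extending $U$. Let $V' = \bigcup_{u\in U} V(H_u)$ and consider $W := V \setminus (V' \setminus U) = U \cup (V \setminus V')$; equivalently, $W$ is obtained from $V$ by removing exactly the non-required vertices appearing in the chosen private copies. I would first check $W$ is an $H$-cover of $G$: if some copy $R$ of $H$ avoided $W$, then $V(R) \subseteq V' \setminus U$, contradicting $(ii)$. Next, $W$ need not be minimal, so greedily delete vertices from $W \setminus U$ as long as the result remains an $H$-cover, obtaining $S \subseteq W$ with $U \subseteq S$ and $S$ a minimal $H$-cover --- but ``minimal'' here must be verified for \emph{all} vertices of $S$, including those in $U$. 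The greedy deletion handles vertices of $S \setminus U$ automatically. For $u \in U$: its private copy $H_u$ satisfies $V(H_u) \cap U = \{u\}$ by $(i)$, and $V(H_u) \setminus \{u\} \subseteq V' \setminus U$, which is disjoint from $S \subseteq W$; hence $V(H_u) \cap S = \{u\}$, so $H_u$ witnesses that $S \setminus \{u\}$ fails to cover $H_u$. Therefore $S$ is a minimal $H$-cover extending $U$, completing the proof.

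The main obstacle I anticipate is the subtlety in the converse around simultaneously ensuring minimality at the $U$-vertices and at the remaining vertices: the greedy deletion step could, in principle, delete a vertex whose presence was needed to make some $H_u$ a private copy --- but this cannot happen because $V(H_u) \setminus \{u\} \subseteq V' \setminus U$ is precisely the set we already removed before starting the greedy phase, so those vertices are never in $W$ to begin with. Making this ordering explicit (first remove $V' \setminus U$, then greedily prune $W \setminus U$) is what makes the argument clean, and I would state it in exactly that order. A secondary point worth a sentence is the convention on ``copy of $H$'': the theorem and the problem \textsc{Ext $H$-cover} speak of covering ``all copies of $H$'', which I read as all subgraphs isomorphic to $H$; under this reading hereditariness of the covered-graph property is trivial and no induced-subgraph technicality arises. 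Everything else is bookkeeping parallel to Theorem~\ref{caract_Ext_VC}.
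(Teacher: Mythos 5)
Your proof is correct and follows essentially the same route as the paper: necessity via private copies of $H$ forced by minimality, and sufficiency by deleting $V'\setminus U$ and then making the remaining cover minimal (your greedy pruning of $W\setminus U$ is exactly the complement of the paper's step of extending $V'\setminus U$ to a maximal $H$-free set). If anything, your explicit check that the private copies $H_u$ survive the pruning phase, and your remark on the subgraph-vs-induced-subgraph convention, make the argument slightly more careful than the paper's version.
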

\begin{proof}
Let $G=(V,E)$ be a graph and  a set $U \subseteq V$ be a set of vertices. The condition is sufficient; indeed,
let $S'$ be any maximal $H$-free subgraph containing $V'$ where $V'$ is defined according to conditions $(i)$ and $(ii)$ (via $H_u$).
The set $V\setminus S'$ is a minimal $H$-cover of $G$ extending $U$. 

Conversely, assume that  $S$ is a minimal $H$-cover of $G$ extending $U$.
Let us prove that $S$ satisfies conditions $(i)$ and $(ii)$. Since $U\subseteq S$ is a minimal $H$-cover of $G$, then for every
$u\in U$, there exists a copy $H_u=(V(H_u),E_u)$ of $H$ in $G$ such that $u\in V(H_u)$ and $S\setminus\{u\}$ does not cover $H_u$. In particular,
we deduce  $V(H_u)\cap U=\{u\}$. Now, let $V'=\bigcup_{u\in U}V(H_u)$; if the subgraph $G'$ of $G$ induced by $V'\setminus U$ is not $H$-free,
then $\exists v\in (V'\cap S)\setminus U$, such that  $v$ lies in  $V(H_{u_0})$ for some $u_0\in U$,   contradicting the fact that
$S\setminus\{u_0\}$ does not cover $H_{u_0}$.
\end{proof}

\begin{corollary}\label{cor:ExtH_Cover_XP}
For every fixed $H$, \textsc{Ext $H$-cover} parameterized by $|U|$ is in~$\xp$.
\end{corollary}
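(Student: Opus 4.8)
The plan is to turn the combinatorial characterization of Theorem~\ref{caract_Ext_H_cover} into a brute-force search whose running time is $n^{\Oh(|U|)}$, which is exactly what membership in $\xp$ requires. Write $k=|U|$ and $n_H=|V_H|$; since $H$ is fixed, $n_H$ is a constant. By Theorem~\ref{caract_Ext_H_cover}, $(G,U)$ is a \yes-instance of \textsc{Ext $H$-cover} if and only if we can choose, for each $u\in U$, a copy $H_u$ of $H$ in $G$ with $V(H_u)\cap U=\{u\}$, in such a way that the subgraph of $G$ induced by $\bigl(\bigcup_{u\in U}V(H_u)\bigr)\setminus U$ is $H$-free. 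Note that for both conditions $(i)$ and $(ii)$ only the vertex sets $V(H_u)$ matter.

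First I would, for each $u\in U$, compute the family $\mathcal{W}_u$ of all vertex subsets $W\subseteq V$ with $|W|=n_H$, $u\in W$ and $W\cap U=\{u\}$, such that $G[W]$ contains a spanning subgraph isomorphic to $H$. Enumerating the candidate sets $W$ amounts to choosing the $n_H-1$ vertices of $W\setminus\{u\}$ inside $V\setminus U$, so $|\mathcal{W}_u|\le\binom{n}{n_H-1}=\Oh(n^{n_H-1})$, and testing whether a fixed $W$ carries a copy of $H$ costs $\Oh(n_H!)=\Oh(1)$ time; hence all the $\mathcal{W}_u$ are computed in time $n^{\Oh(1)}$. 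If $\mathcal{W}_u=\emptyset$ for some $u$, then condition $(i)$ already fails and we answer \no. Otherwise I would enumerate all tuples $(W_u)_{u\in U}\in\prod_{u\in U}\mathcal{W}_u$; there are at most $\bigl(\Oh(n^{n_H-1})\bigr)^{k}=n^{\Oh(k)}$ of them. For each tuple set $V'=\bigcup_{u\in U}W_u$; then $|V'|\le n_H\,k$, so $|V'\setminus U|\le (n_H-1)k$ and one can decide in time $\Oh\bigl((n_Hk)^{n_H}\bigr)=\mathrm{poly}(k)$ whether $G[V'\setminus U]$ is $H$-free. We accept as soon as some tuple yields an $H$-free $G[V'\setminus U]$, and reject otherwise. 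Correctness is immediate from Theorem~\ref{caract_Ext_H_cover}: the tuples we try range exactly over all admissible families of vertex sets of copies $\{H_u\}_{u\in U}$ (condition $(i)$ being guaranteed by the very definition of $\mathcal{W}_u$), so the search succeeds iff conditions $(i)$ and $(ii)$ can be met simultaneously.

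The overall running time is $n^{\Oh(k)}\cdot\mathrm{poly}(n,k)=n^{\Oh(|U|)}$, with the constant in the exponent depending only on $H$, which places \textsc{Ext $H$-cover} in $\xp$ when parameterized by $|U|$. There is no genuine obstacle here beyond the bookkeeping above; the only point worth stressing is that the whole argument rests on Theorem~\ref{caract_Ext_H_cover}, which lets us avoid reasoning about which vertices outside $U$ end up in the minimal cover and instead reduces the task to independently guessing one $H$-copy per required vertex together with a single $H$-freeness check on their union. (Strengthening this to an $\fpt$ algorithm appears considerably more delicate and is not claimed.)
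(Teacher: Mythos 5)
Your proposal is correct and follows essentially the same route as the paper: both apply the characterization of Theorem~\ref{caract_Ext_H_cover} and perform an exhaustive search over the choices of the copies $H_u$ (one per $u\in U$), followed by a single $H$-freeness check on the union. Your write-up is in fact more careful about the enumeration and the resulting $n^{\Oh(|U|)}$ bound than the paper's very terse argument, but there is no substantive difference in approach.
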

\begin{proof}
Using exhaustive search, finding $V'=\bigcup_{u\in U}V(H_u)$ can be done in  time $\Oh(n^{k+n_H})$, where $k=|U|$ and $n_H=|V(H)|$. The remaining steps can be performed in $\Oh(n^{k+n_H})=\Oh(n^k)$ time, as $H$ is fixed.
\end{proof}

\end{document}